\newcommand\blfootnote[1]{%
  \begingroup
  \renewcommand\thefootnote{}\footnote{#1}%
  \addtocounter{footnote}{-1}%
  \endgroup
}
\title{Understanding the Topology and the Geometry of the Space of Persistence Diagrams via Optimal Partial Transport}
\author{Vincent Divol}
\author{Th\'eo Lacombe}
\affil{\texttt{firstname.lastname@inria.fr} \\ Datashape, Inria Saclay}
\date{}
\newtheorem{definition}{Definition}[section]
\newtheorem{theorem}{Theorem}[section]
\theoremstyle{plain}
\newtheorem{proposition}[theorem]{Proposition} 
\newtheorem{lemma}{Lemma}[section]
\newtheorem{corollary}{Corollary}[section]
\newtheorem{remark}{Remark}[section]
\newtheorem*{theorem*}{Theorem}
\newcommand{\R}{\mathbb{R}}
\newcommand{\N}{\mathbb{N}}
\newcommand{\Z}{\mathbb{Z}}
\newcommand{\X}{\mathbb{X}}
\newcommand{\WW}{\mathcal{W}}
\newcommand{\XX}{\mathcal{X}}
\newcommand{\DD}{\mathcal{D}}
\renewcommand{\P}{P}
\newcommand{\EE}{\mathcal{E}}
\newcommand{\E}{\mathbb{E}}
\newcommand{\MM}{\mathcal{M}}
\newcommand{\FF}{\mathcal{F}}
\renewcommand{\SS}{\mathcal{S}}
\newcommand{\BB}{\mathcal{B}}
\newcommand{\dd}{\mathrm{d}}
\newcommand{\dgm}{\mathrm{Dgm}}
\newcommand{\Pers}{\mathrm{Pers}}  
\newcommand{\diam}{\mathrm{diam}}
\newcommand{\upperdiag}{\Omega} 
\newcommand{\groundspace}{\overline{\upperdiag}} 
\newcommand{\thediag}{{\partial \Omega}} 
\newcommand{\mtot}{m_{\mathrm{tot}}}
\newcommand{\supp}{\mathrm{spt}}
\newcommand{\defeq}{\vcentcolon=}
\newcommand{\Adm}{\mathrm{Adm}}
\newcommand{\Opt}{\mathrm{Opt}}
\newcommand{\emptydgm}{0}  
\newcommand{\cvvague}{\xrightarrow{v}}
\newcommand{\cvweak}{\xrightarrow{w}}
\newcommand{\id}{\mathrm{id}}
\renewcommand{\epsilon}{\varepsilon}
\newcommand{\ones}{\mathbf{1}}
\newcommand{\Dp}{\mathrm{OT}_p}
\newcommand{\PD}{\DD}
\newcommand{\Dinf}{\mathrm{OT}_\infty}
\newcommand{\Cp}{\mathcal{C}^0_{b,p}}
\newcommand{\repeatresult}[2]{
	\vspace{.2cm}
	\noindent
	\textbf{#1}
	\emph{#2}
	\vspace{.2cm}
}
\DeclareMathOperator*{\argmin}{arg\,min}
\DeclareMathOperator*{\minimize}{minimize}
\begin{document}

\maketitle

\begin{abstract}
Despite the obvious similarities between the metrics used in topological data analysis and those of optimal transport, an optimal-transport based formalism to study persistence diagrams and similar topological descriptors has yet to come. In this article, by considering the space of persistence diagrams as a space of discrete measures, and by observing that its metrics can be expressed as optimal partial transport problems, we introduce a generalization of persistence diagrams, namely Radon measures supported on the upper half plane. Such measures naturally appear in topological data analysis when considering continuous representations of persistence diagrams (e.g.\ persistence surfaces) but also as limits for laws of large numbers on persistence diagrams or as expectations of probability distributions on the persistence diagrams space. We explore topological properties of this new space, which will also hold for the closed subspace of persistence diagrams. New results include a characterization of convergence with respect to Wasserstein metrics, a geometric description of barycenters (Fr\'echet means) for any distribution of diagrams, and an exhaustive description of continuous linear representations of persistence diagrams. We also showcase the strength of this framework to study random persistence diagrams by providing several statistical results made meaningful thanks to this new formalism.
 \end{abstract}

\section{Introduction}

\subsection{Framework and motivations}
\blfootnote{This updated version fixes a minor mistake found in the previous version: the space of Radon measure on $E_\Omega$ was incorrectly endowed with the vague topology, instead of a stronger topology, that we call the VM topology. This modification only impacts the statements of   Proposition \ref{prop:basic_prop} and Proposition \ref{prop:basic_infty}; and the conclusions of all our results remain unchanged.}Topological Data Analysis (TDA) is an emerging field in data analysis that has found applications in computer vision \cite{tda:li2014computervision}, material science \cite{tda:hiraoka2017materialscience,tda:kramar2013materialscience}, shape analysis \cite{tda:carriere2015shape3d,tda:turner2014shapePHT}, to name a few. The aim of TDA is to provide interpretable descriptors of the underlying topology of a given object. One of the most used (and theoretically studied) descriptors in TDA is the \emph{persistence diagram}. This descriptor consists in a locally finite multiset of points in the upper half plane $\upperdiag \defeq \{ (t_1, t_2) \in \R^2, t_2 > t_1\}$, each point in the diagram corresponding informally to the presence of a topological feature (connected component, loop, hole, etc.) appearing at some scale in the \emph{filtration} of an object $\X$. A complete description of the \emph{persistent homology} machinery is not necessary for this work and the interested reader can refer to \cite{tda:edelsbrunner2010computational} for an introduction. The space of persistence diagrams, denoted by $\PD$ in the following, is usually equipped with partial matching extended metrics $d_p$ (i.e.~it can happen that $d_p(\mu,\nu) = +\infty$ for some $\mu,\nu \in \PD$), sometimes called Wasserstein distances \cite[Chapter VIII.2]{tda:edelsbrunner2010computational}: for $p \in [1,+\infty)$ and $a,b$ in $\PD$, define
\begin{equation}
	d_p(a, b) \defeq \left(\inf_{\pi \in \Gamma(a, b)} \sum_{x \in a \cup \thediag} d(x,\pi(x))^p \right)^{\frac{1}{p}},
	\label{eq:dgm_partial_matching_metric}
\end{equation}
where $d(\cdot,\cdot)$ denotes the $q$-norm on $\R^2$ for some $1 \leq q \leq \infty$, $\Gamma(a,b)$ is the set of partial matchings between $a$ and $b$, i.e.\ bijections between $a \cup \thediag$ and $b \cup \thediag$, and $\thediag \defeq \{ (t,t),\ t \in \R \}$ is the boundary of $\upperdiag$, namely the diagonal (see Figure \ref{fig:pers_diag}). When $p \to \infty$, we recover the so-called \emph{bottleneck distance}:
\begin{equation}
	d_\infty(a,b) \defeq \inf_{\pi \in \Gamma(a,b)} \sup_{x \in a \cup \thediag} d(x,\pi(x)).
	\label{eq:dgm_bottleneck_metric}
\end{equation}

\begin{figure}[h]
	\center
	\includegraphics[width = 0.4 \linewidth]{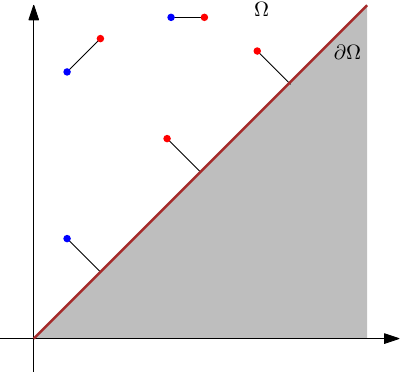}
	\caption{An example of optimal partial matching between two diagrams. The bottleneck distance between these two diagrams is the length of the longest edge in this matching, while their Wasserstein distance $d_p$ is the $p$-th root of the sum of all edge lengths to the power $p$.}
	\label{fig:pers_diag}
\end{figure}

An equivalent viewpoint, developed in \cite[Chapter 3]{tda:chazal2016structure}, is to define a persistence diagram as a measure of the form $a = \sum_{x \in X} n_x \delta_{x}$, where $X \subset \upperdiag$ is locally finite and $n_x \in \N$ for all $x \in X$, so that $a$ is a locally finite measure supported on $\upperdiag$ with integer mass on each point of its support. This measure-based perspective suggests to consider more general Radon measures\footnote{A Radon measure supported on $\Omega$ is a (Borel) measure that gives a finite mass to any compact subset $K \subset \Omega$. See Appendix \ref{subsec:background_measures} for a short reminder about measure theory.} supported on the upper half-plane $\Omega$. Besides this theoretical motivation, considering such measures allows us to address statistical and learning problems that appear in different applications of TDA: 
\begin{enumerate}
	\item[(A1)] \textbf{Continuity of representations.} When given a sample of persistence diagrams $a_1,\dots,a_N$, a common way to perform machine learning is to first map the diagrams into a vector space thanks to a \emph{representation} (or \emph{feature map}) $\Phi : \PD \to \BB$, where $\BB$ is a Banach space. In order to ensure the meaningfulness of the machine learning procedure, the stability of the representations with respect to the $d_p$ distances is usually required. One of our contribution to the matter is to formulate an equivalence between $d_p$-convergence and convergence in terms of measures (Theorem \ref{thm:conv_dp}). This result allows us to characterize a large class of continuous representations (Prop.~\ref{cor:feature_maps_continuous}) that includes some standard tools used in TDA such as the Betti curve \cite{tda:umeda2017time}, the persistence surface \cite{tda:adams2017persistenceImages} and the persistence silhouette \cite{tda:chazal2014stochastic}. \label{it:representations}
	
	\item[(A2)] \textbf{Law of large numbers for diagrams generated by random point clouds.} A popular problem that generates random persistence diagrams is given by filtrations built on top of large random point clouds: if $\X_n$ is a $n$-sample of i.i.d.~points on, say, the cube $[0,1]^d$, recents articles \cite{tda:goel2018asymptotic,tda:divolpolonik} have investigated the asymptotic behavior of the persistence diagram $a_n$ of the \v Cech (or Rips) filtration built on top of the rescaled point cloud $n^{1/d}\X_n$. In particular, it has been shown in \cite{tda:goel2018asymptotic} that the sequence of measures $n^{-1}a_n$ converges vaguely to some limit measure $\mu$ supported on $\upperdiag$ (that is \emph{not} a persistence diagram), and in \cite{tda:divolpolonik} that the moments of $n^{-1}a_n$ also converge to the moments of $\mu$. An interesting problem is to build a metric which generalizes $d_p$ and for which the convergence of $n^{-1}a_n$ to $\mu$ holds. \label{it:thermo}	
	
	\item[(A3)] \textbf{Stability of the expected diagrams.} Of particular interest in the literature are \emph{linear} representations, that is of the form $\Phi(a) \defeq a(f)$, the integral of a function $f : \upperdiag  \to \BB$ against a persistence diagram $a$ (seen as a measure). Given $N$ i.i.d.~diagrams $a_1, \dots, a_N$ following some law $P$, and a linear representation $\Phi$, a natural object to consider is the sample mean $N^{-1}(\Phi(a_1)+\cdots+\Phi(a_N)) = \Phi(N^{-1}(a_1+\cdots +a_N))$. By the law of large numbers, this quantity converges to $\E_P[\bm{a}](f)$, where $\E_P[\bm{a}]$ is the expected persistence diagram of the process, introduced in \cite{tda:divol2018density}. Understanding how the object $\E_P[\bm{a}]$ depends on the underlying process $P$ generating $\bm{a}$ invites one to define a notion of distance between $\E_P[\bm{a}]$ and $\E_{P'}[\bm{a}]$ for $P,P'$ two distributions on the space of persistence diagrams, and relate this distance to a similarity measure between $P$ and $P'$. In the same way that the expected value of an integer-valued random variable may not be an integer, the objects $\E_P[\bm{a}]$ are not persistence diagrams in general, but Radon measures on $\upperdiag$. Therefore, extending the distances $d_p$ to Radon measures in a consistent way will allow us to assess the closeness between those quantities.
\end{enumerate}

\begin{remark} Note that, throughout this article, we consider persistence diagrams with possibly infinitely many points (although locally finite). This is motivated from a statistical perspective. Indeed, the space of finite persistence diagram is lacking completeness, as highlighted in \cite[Definition 2]{tda:mukherjee2011probabilitymeasure}, so that, for instance, the expectation of a probability distribution of diagrams with finite numbers of points may have an infinite mass. An alternative approach to recover a complete space is to study the space of persistence diagrams with total mass less than or equal to some fixed $m \geq 0$. However, this might be unsatisfactory as the number of points of a persistence diagram is known to be an unstable quantity with respect to perturbations of the input data. Note that infinite persistence diagrams may also help to model the topology of standard objects: for instance, the (random) persistence diagram built on the sub-level sets of a Brownian motion has infinitely many points (see \cite[Section 6]{tda:divol2018density}). However, numerical applications generally involve finite sets of finite diagrams, which are studied in Section \ref{subsec:uniformly_bounded_mass}.
\end{remark}

\begin{remark} In general, persistence diagrams may contain points with coordinates of the form $(t,+\infty)$, called the \emph{essential} points of the diagram. The distance between two persistence diagrams is then defined as the sum of two independent terms: the cost $d_p$ that handle points with finite coordinates and the cost of a simple one-dimensional optimal matching between the first coordinates of the essential points (set to be $+\infty$ if the cardinalities of the essential parts differ). We focus on persistence diagrams with only points with finite coordinates for the sake of simplicity, but all the results stated in this work may easily be adapted to the more general case including points with infinite coordinates.
\end{remark}

\subsection{Outline and main contributions}

Examples (A2) and (A3) motivate the introduction of metrics on the space $\MM$ of Radon measures supported on $\Omega$, which generalize the distances $d_p$ on $\PD$: these are presented in Section \ref{sec:background_OT}. For finite $p \geq 1$ (the case $p=\infty$ is studied in Section \ref{subsec:bottleneck}), we define the persistence of $\mu \in \MM$ as
\begin{equation}\label{eq:pers}
\Pers_p(\mu) \defeq \int_{\upperdiag} d(x, \thediag)^p \dd \mu(x),
\end{equation}
where $d(x,\thediag)\defeq \inf_{y\in \thediag} d(x,y)$ is the distance from a point $x\in \upperdiag$ to (its orthogonal projection onto) the diagonal $\thediag$, and we define 
\begin{equation}\label{eq:MMp}
\MM^p \defeq \{ \mu \in \MM,\ \Pers_p(\mu) < \infty \}.
\end{equation}

We equip $\MM^p$ with metrics $\Dp$ (see Definition \ref{def:OT_for_Radon_measures}), originally introduced in a work of Figalli and Gigli \cite{ot:figalli2010newTransportationDistance}. We show in Proposition \ref{prop:Dp_equals_dp} that $\Dp$ and $d_p$ coincide on $\PD^p\defeq \PD \cap \MM^p$, making $\Dp$ a good candidate to address the questions raised in (A2) and (A3). To emphasize that we equip the space of Radon measures with a specific metric designed for our purpose, we will refer to elements of the metric space $(\MM^p, \Dp)$ as \emph{persistence measures} in the following. As $\PD^p$ is closed in $\MM^p$ (Corollary \ref{cor:Dp_close_dp}), most properties of $\MM^p$ hold for $\PD^p$ too (e.g.~being Polish, Proposition \ref{prop:MM_p_Polish}).

A sequence of Radon measures $(\mu_n)_n$ is said to \emph{converge vaguely} to a measure $\mu$, denoted by $\mu_n \cvvague \mu$, if for any continuous compactly supported function $f : \upperdiag \to \R$, $\mu_n(f) \to \mu(f)$ (where  the notation $\mu(f)\defeq \int f\dd \mu$ stands for the integration of the function $f$ against $\mu$). We prove the following equivalence between convergence for the metric $\Dp$ and the vague convergence:

\repeatresult{Theorem \ref{thm:conv_dp}.}{
Let $1 \leq p < \infty$. Let $\mu, \mu_1, \mu_2,\dots$ be measures in $\MM^p$. Then, 
\begin{equation}
\Dp(\mu_n,\mu) \to 0 \Leftrightarrow \begin{cases} \mu_n \cvvague \mu,  \\ \Pers_p(\mu_n) \to \Pers_p(\mu). \end{cases} 
\end{equation}}

This equivalence gives a positive answer to the issues raised by (A2), as detailed in Section \ref{sec:applications}. Note also that this characterization in particular holds for persistence diagrams in $\PD^p$, and can thus be helpful to show the convergence or the tightness of a sequence of diagrams. This theorem is analogous to the characterization of convergence of probability measures with respect to Wasserstein distances (see \cite[Theorem 6.9]{ot:villani2008optimal}). A proof for Radon measures supported on a common bounded set can be found in \cite[Proposition 2.7]{ot:figalli2010newTransportationDistance}. Our contribution consists in extending this result to non-bounded sets, in particular to the upper half plane $\upperdiag$. 

Section \ref{subsec:uniformly_bounded_mass} is dedicated to sets of measures with finite masses, appearing naturally in numerical applications. We show in particular that computing the $\Dp$ metric between two measures of finite mass can be turned into the known problem of computing a Wasserstein distance (see Section \ref{sec:background_OT}) between two measures with the \emph{same} mass (Prop.~\ref{prop:Dp_equals_Wp_finite}), a result having practical implications for the computation of $\Dp$ distances between persistence measures (and diagrams).

Section \ref{subsec:bottleneck} studies the case $p=\infty$, which is somewhat ill-behaved from a statistical analysis point of view (for instance, the space of persistence diagrams endowed with the bottleneck metric is not separable, as observed in \cite[Theorem 5]{tda:bubenik2018topological}), but is also of crucial interest in TDA as it is motivated by algebraic considerations \cite{tda:oudot2015persistence} and satisfies stronger stability results \cite{tda:cohen2007stability} than its $p < \infty$ counterparts \cite{tda:cohen2010lipschitzStablePersistence}. In particular, we give in Propositions \ref{prop:Dinf_implies_vaguecv} and \ref{prop:nice_conv_dinf} a characterization of bottleneck convergence (in the vein of Theorem \ref{thm:conv_dp}) for persistence diagrams satisfying some finiteness assumptions (namely, for each $r > 0$, the number of points with persistence greater than $r$ must be finite).

Section \ref{sec:barycenter} studies Fr\'echet means (i.e.~barycenters, see Definition \ref{def:variance_and_bary}) for probability distributions of persistence measures. In the specific case of persistence diagrams, the study of Fr\'echet means was initiated in \cite{tda:mukherjee2011probabilitymeasure,tda:turner2014frechet}, where authors prove their existence for certain types of distributions \cite[Theorem 28]{tda:mukherjee2011probabilitymeasure}. Using the framework of persistence measures, we show that this existence result is actually true for any distribution of persistence diagrams (and measures) with finite moment. Namely, we prove the following results:

\repeatresult{Theorem \ref{thm:existenceBary}.}{Assume that $1<p<\infty$ and that $d(\cdot,\cdot)$ denotes the $q$-norm for $1<q<\infty$. For any probability distribution $\P$ supported on $\MM^p$ with finite $p$-th moment, the set of $p$-Fr\'echet means of $\P$ is a non-empty compact convex subset of $\MM^p$.}

%

\repeatresult{Theorem \ref{thm:barycenterInD}.}{Assume that $1<p<\infty$ and that $d(\cdot,\cdot)$ denotes the $q$-norm for $1<q<\infty$. For any probability distribution $\P$ supported on $\PD^p$ with finite $p$-th moment, the set of $p$-Fr\'echet means of $\P$, which is a subset of $\MM^p$, contains an element in $\DD^p$. Furthermore, if $P$ is supported on a finite set of finite persistence diagrams, then the set of the $p$-Fr\'echet means of $P$ is a convex set whose extreme points are in $\PD^p$.}

Section \ref{sec:applications} applies the formalism we developed to address the questions raised in (A1)---(A3). In Section \ref{subsec:continuity_of_representations}, we prove a strong characterization of continuous linear representations of persistence measures (and diagrams), which answers to the issue raised by (A1) for the class of linear representations (see Figure \ref{fig:representations}).

\repeatresult{Proposition \ref{cor:feature_maps_continuous}}
{Let $p \in [1, +\infty)$, $d\geq 1$, and $f: \upperdiag \to \BB$ for some Banach space $\BB$ (e.g.~$\R^d$). The representation $\Phi : \MM^p \to \BB$ defined by $\Phi(\mu) = \int_\upperdiag f(x) \dd \mu(x)$ is continuous with respect to $\Dp$ if and only if $f$ is of the form $f(x)= g(x) d(x, \thediag)^p$, where $g : \upperdiag \to \BB$ is a continuous bounded map.}

This new result can be compared to the recent work \cite[Theorem 13]{tda:hofer2019learning}, which gives a similar result in the case $\BB = \R$ on the space of finite persistence diagrams $\PD_f$), or the works \cite[Proposition 8]{tda:hiraoka2017kernelWeight} and \cite[Theorem 3]{tda:divolpolonik}, which show that linear representations can have more regularity (e.g.~Lipschitz or H\"older) under additional assumptions. 

\begin{figure}
	\center
	\includegraphics[width=\textwidth]{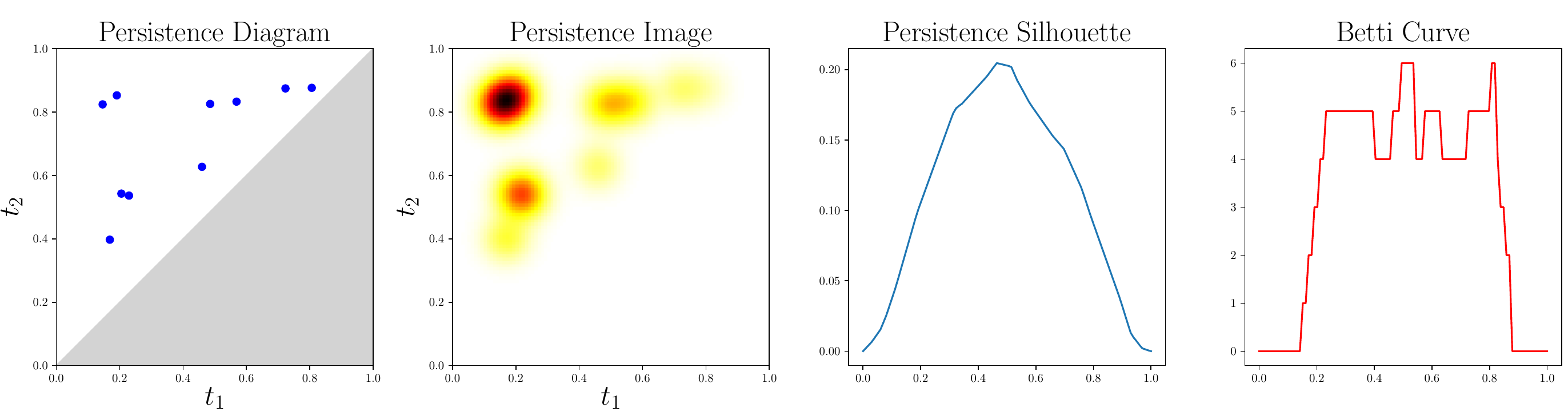}
\caption{Some common linear representations of persistence diagrams. From left to right: A persistence diagram. Its persistence surface \cite{tda:adams2017persistenceImages}, which is a persistence measure. The corresponding persistence silhouette \cite{tda:chazal2014stochastic}. The corresponding Betti Curve \cite{tda:umeda2017time}. See Section \ref{subsec:continuity_of_representations} for details.}
\label{fig:representations}
\end{figure}

Section \ref{subsec:thermo} states a very concise law of large for persistence diagrams. Namely, building on the previous works \cite{tda:hiraoka2016limitTheoremForPD,tda:divolpolonik} along with Theorem \ref{thm:conv_dp}, we prove the following:

\repeatresult{Proposition \ref{prop:application_randomPD}}
{Let $\X_n = \{X_1, \dots, X_n\}$ be a sample of $n$ points on the $d$-dimensional cube $[0,1]^d$, sampled from a density bounded from below and from above by positive constants, and let $\mu_n = \frac{1}{n} \dgm(n^{1/d} \X_n)$, where $\dgm(n^{1/d}\X_n)$ is either the Rips or \v Cech complex built on the point cloud $n^{1/d}\X_n$. Then, there exists a measure $\mu \in \MM^p$ such that $\Dp(\mu_n,\mu) \to 0$.}

Finally, Section \ref{subsec:stability_expectedPD} considers the problem (A3), that is the stability of the expected persistence diagrams. In particular, we prove a stability result between an input point cloud $\X_n$ in a random setting and its expected (\v Cech) diagrams $\E(\dgm(\X_n))$:

\repeatresult{Proposition \ref{prop:stability_expectedPD}}{Let $\xi,\xi'$ be two probability measures supported on $\R^d$. Let $\X_n$ (resp. $\X'_n$) be a $n$-sample of law $\xi$ (resp. $\xi'$). Then, for any $k > d$, and any $p \geq k+1$,
\begin{equation}
\Dp^p(\E[\dgm(\X_n)],\E[\dgm(\X'_n)]) \leq C_{k,d} \cdot n \cdot  W_{p-k}^{p-k}(\xi,\xi')
\end{equation}
where $C_{k,d} \defeq C \diam(\X)^{k-d}\frac{k}{k-d}$ for some constant $C$ depending only on $\X$. \\
In particular, letting $p \rightarrow \infty$, we obtain a bottleneck stability result:
\begin{equation}
	\Dinf(\E[\dgm(\X_n)],\E[\dgm(\X'_n)]) \leq W_{\infty}(\xi,\xi').
\end{equation}}

\section{Elements of optimal partial transport}\label{sec:background_OT}
In this section, $(\XX, d)$ denotes a Polish metric space. 

\subsection{Optimal transport between probability measures and Wasserstein distances}\label{subsec:wasserstein}
In its standard formulation, optimal transport is a widely developed theory providing tools to study and compare probability measures supported on $\XX$ \cite{ot:villani2003topicsInOT,ot:villani2008optimal,otam}, that is---up to a renormalization factor---non-negative measures of the same mass. 
Given two probability measures $\mu, \nu$ supported on $(\XX, d)$, the $p$-\emph{Wasserstein} distance ($p \geq 1$) induced by the metric $d$ between $\mu$ and $\nu$ is defined as
\begin{equation}
	W_{p,d}(\mu, \nu) \defeq \left( \inf_{\pi \in \Pi(\mu,\nu)} \iint_{\XX \times \XX} d(x,y)^p \dd \pi(x,y) \right)^{\frac{1}{p}},
	\label{eq:Wp_std_def}
\end{equation}
where $\Pi(\mu,\nu)$ denotes the set of \emph{transport plans} between $\mu$ and $\nu$, that is the set of measures on $\XX \times \XX$ which have respective marginals $\mu,$ and $\nu$. When there is no ambiguity on the distance $d$ used, we simply write $W_p$ instead of $W_{p,d}$. In order to have $W_p$ finite, $\mu$ and $\nu$ are required to have a finite $p$-th moment, that is there exists $x_0 \in \XX$ such that $\int_\XX d(x, x_0)^p \dd \mu(x)$ (resp.~$\dd \nu$) is finite. The set of such probability measures, endowed with the metric $W_p$, is referred to as $\WW^p(\XX)$.

Wasserstein distances and $d_p$ metrics defined in Eq.~\eqref{eq:dgm_partial_matching_metric} share the key idea of defining a distance by minimizing a cost over some matchings. However, the set of transport plans $\Pi(\mu,\nu)$ between two measures is non-empty if and only if the two measures have the same mass, while persistence diagrams with different masses can be compared, making a crucial difference betwen the $W_p$ and $d_p$ metrics.

\subsection{Extension to Radon measures supported on a bounded space} 
Extending optimal transport to measures of different masses, generally referred to as optimal \emph{partial} transport, has been addressed by different authors \cite{ot:figalli2010optimal,ot:chizat2015unbalanced,ot:kondratyev2016new}. As it handles the case of measures with infinite masses, the work of Figalli and Gigli \cite{ot:figalli2010newTransportationDistance}, is of particular interest for us. The athors propose to extend Wasserstein distances to Radon measures supported on a \emph{bounded} open proper subset $\XX$ of $\R^d$, whose boundary is denoted by $\partial \XX$ (and $\overline{\XX} \defeq \XX \sqcup \partial \XX$). 
\begin{definition}{\cite[Problem 1.1]{ot:figalli2010newTransportationDistance}} \label{def:OT_for_Radon_measures}
	Let $p \in [1, +\infty)$. Let $\mu, \nu$ be two Radon measures supported on $\XX$ satisfying
	\[ \int_\XX d(x,\partial \XX)^p \dd \mu(x) < +\infty, \quad \int_\XX d(x,\partial \XX)^p \dd \nu(x) < +\infty.\]
	The set of admissible transport plans (or couplings) $\Adm(\mu,\nu)$ is defined as the set of Radon measures $\pi$ on $\overline{\XX} \times \overline{\XX}$ satisfying for all Borel sets $A,B \subset \XX$,
\[ \pi(A\times \overline{\XX}) = \mu(A) \quad \text{ and } \quad \pi(\overline{\XX}  \times B) = \nu(B). \]
	 The cost of $\pi \in \Adm(\mu,\nu)$ is defined as
\begin{equation}
C_p(\pi) \defeq \iint_{\overline{\XX} \times \overline{\XX}} d(x,y)^p \dd \pi(x,y).
\label{eq:costDefinition}
\end{equation}
The Optimal Transport (with boundary) distance $\Dp(\mu,\nu)$ is then defined as
\begin{equation}
\Dp(\mu,\nu) \defeq \left( \inf_{\pi \in \Adm(\mu,\nu)} C_p(\pi) \right)^{1/p}.
\label{eq:optimalCostPbm}
\end{equation}
Plans $\pi \in \Adm(\mu,\nu)$ realizing the infimum in \eqref{eq:optimalCostPbm} are called \emph{optimal}. The set of optimal transport plans between $\mu$ and $\nu$ for the cost $(x,y) \mapsto d(x,y)^p$ is denoted by $\Opt_p(\mu,\nu)$.
\end{definition}

We introduce the following definition, which shows how to build an element of $\Adm(\mu,\nu)$ given a map $f : \overline{\XX} \to \overline{\XX}$ satisfying some balance condition (see Figure \ref{fig:transport_map}).
\begin{definition}\label{def:induced_transport_map}
Let $\mu, \nu \in \MM$. Consider $f : \overline{\XX} \to \overline{\XX}$ a measurable function satisfying for all Borel set $B \subset \XX$
\begin{equation}
 \mu(f^{-1}(B) \cap \XX) + \nu(B \cap f(\partial \XX)) = \nu(B). 
\end{equation}
Define for all Borel sets $A, B \subset \overline{\XX}$, 
\begin{equation}
 \pi(A \times B) = \mu(f^{-1}(B) \cap \XX \cap A) + \nu(\XX \cap B \cap f(A \cap \partial \XX)). 
\end{equation}
$\pi$ is called the transport plan \emph{induced} by the \emph{transport map} $f$.
\end{definition}
	One can easily check that we have indeed $\pi(A \times \overline{\XX}) = \mu(A)$ and $\pi(\overline{\XX} \times B) = \nu(B)$ for any Borel sets $A,B \subset \XX$, so that $\pi \in \Adm(\mu,\nu)$ (see Figure \ref{fig:transport_map}).
	\begin{figure}
		\center
		\includegraphics[width = 0.4 \linewidth]{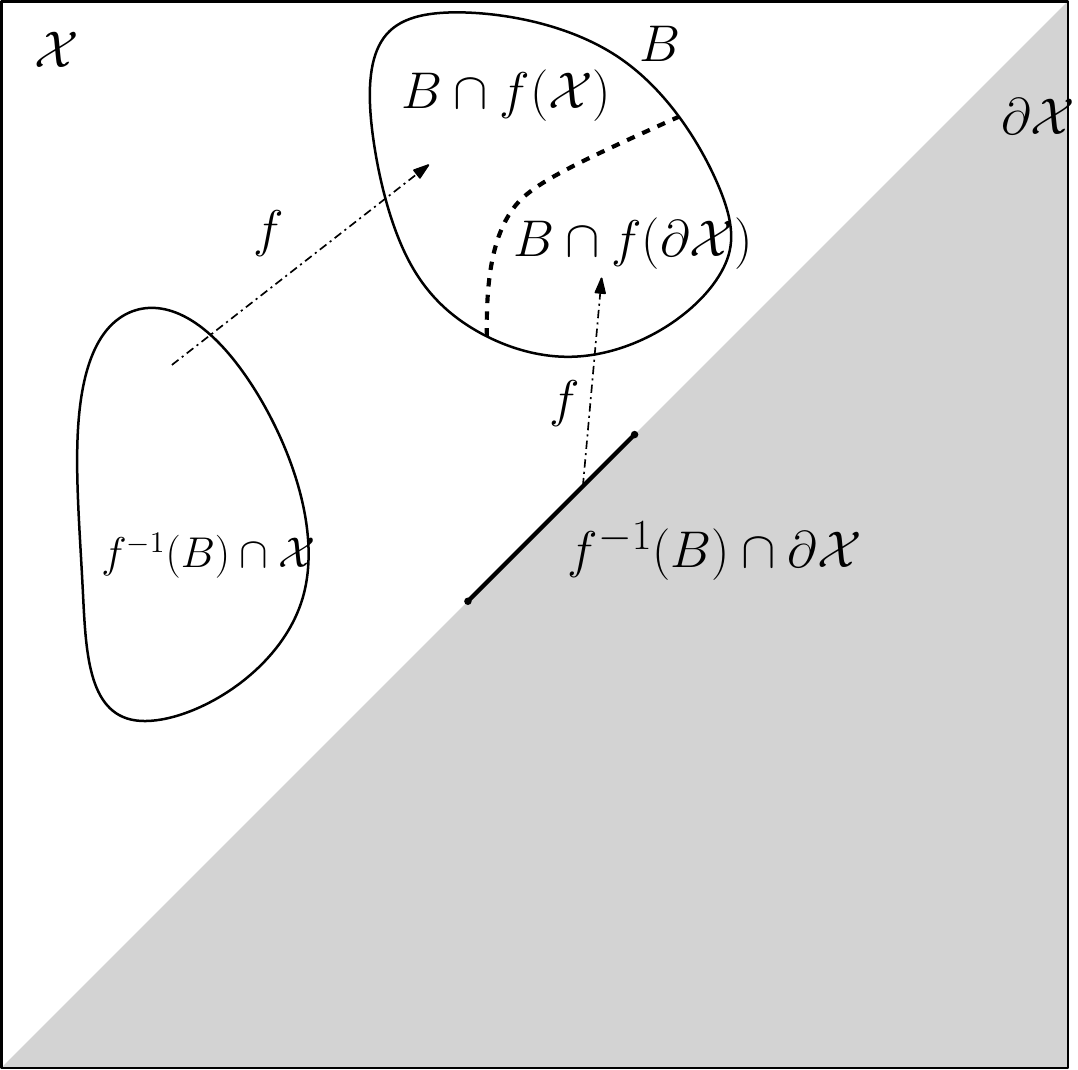}
		\includegraphics[width = 0.4 \linewidth]{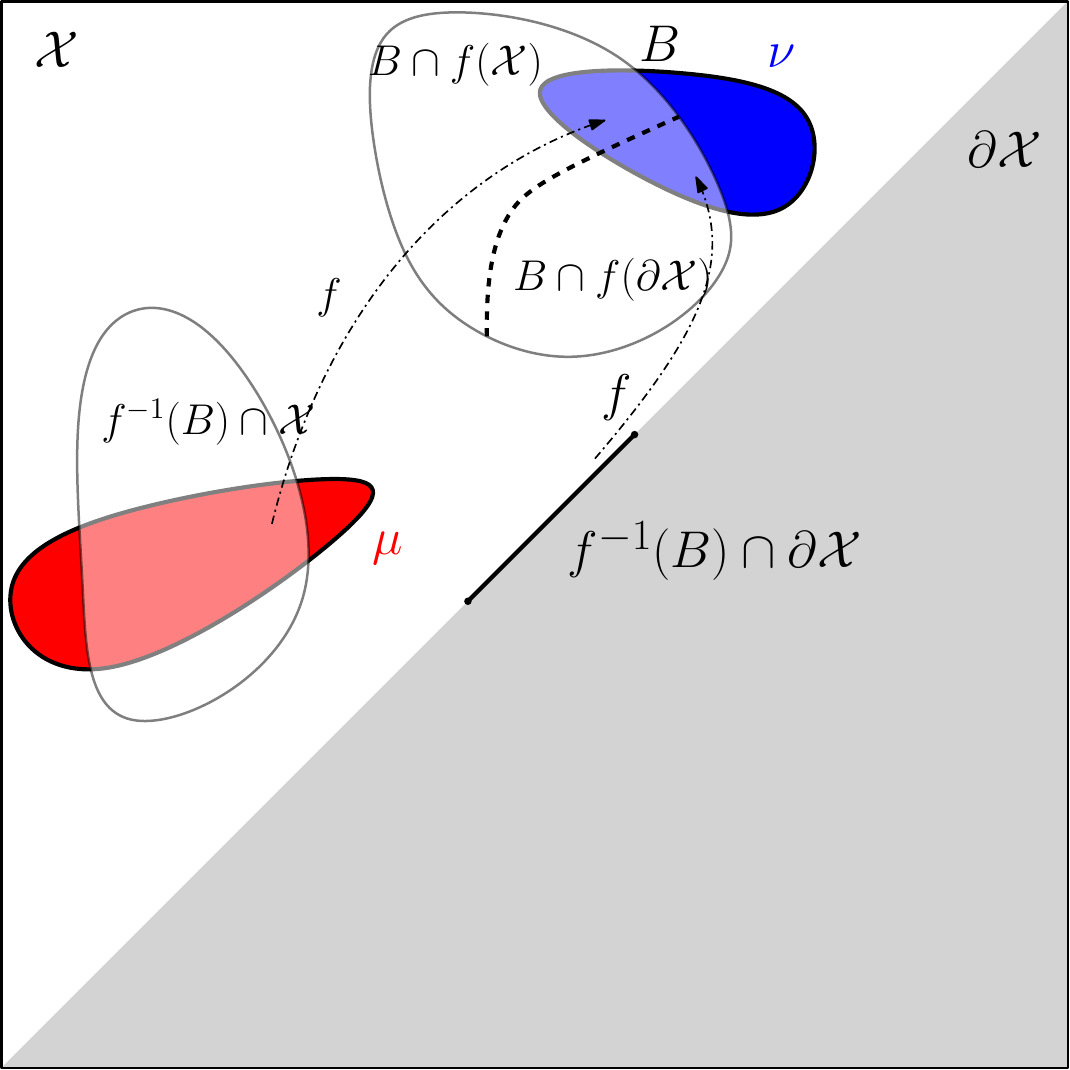}
		\caption{A transport map $f$ must satisfy that the mass $\nu(B)$ (light blue) is the sum of the mass $\mu(f^{-1}(B) \cap \XX)$ given by $\mu$ that is transported by $f$ onto $B$ (light red) and the mass $\nu(B \cap f(\partial \XX))$ coming from $\partial \XX$ and transported by $f$ onto $B$.}\label{fig:transport_map}
	\end{figure}
\begin{remark}
Since we have no constraints on $\pi(\partial \XX \times \partial \XX)$, one may always assume that a plan $\pi$ satisfies $\pi(\partial \XX \times \partial \XX) = 0$, so that measures $\pi \in \Adm(\mu,\nu)$ are supported on 
\begin{equation}
	E_{\XX}\defeq (\overline{\XX} \times \overline{\XX}) \backslash (\partial \XX \times \partial \XX).
	\label{eq:E_upperdiag}
\end{equation}
\end{remark}



\section{Structure of the persistence measures and diagrams spaces}
\label{sec:structure_dgm_space}

This section is dedicated to general properties of $\MM^p$. Results in Section \ref{subsec:generalities} are inspired from the ones of Figalli and Gigli in \cite{ot:figalli2010newTransportationDistance}, which are stated for a bounded subset $\XX$ of $\R^d$. Our goal is to state properties over the space $\upperdiag$, which is of course \emph{not bounded}. Adapting the results of \cite{ot:figalli2010newTransportationDistance} to our purpose is sometimes straightforward, in which case the proofs are delayed to Appendix \ref{sec:proofs}, and sometimes more involving, in which case the proofs are exposed in the main part of this article. Following Sections \ref{subsec:uniformly_bounded_mass} and \ref{subsec:bottleneck} are respectively dedicated to finite measures (involved in applications) and the case $p=\infty$ (of major interest in topological data analysis).  

\begin{remark}\label{rem:polish}
The results exposed in this section would remain true in a more general setting, namely for any locally compact Polish metric space $\XX$ that is partitioned into $\XX = A \sqcup B$, where $A$ is open and $B$ is closed (here, $A = \upperdiag$ and $B = \thediag$).
\end{remark}

\subsection{General properties of $\MM^p$}
\label{subsec:generalities}
It is assumed for now that $1 \leq p < \infty$. The case $p = \infty$ is studied in Section \ref{subsec:bottleneck}. Consider the space $\MM^p$ defined in \eqref{eq:MMp}. First, we observe that the quantities introduced in Definition \ref{def:OT_for_Radon_measures}, in particular the metric $\Dp$, are still well-defined when $\XX = \upperdiag$ is not bounded.

Consider the functional 
\begin{equation}
  I:  \pi \in \MM(E_\Omega)\mapsto (\pi,\pi_1,\pi_2)\in \MM(E_\Omega)\times \MM(\Omega)\times \MM(\Omega),
\end{equation}
where $\pi_1$ (resp. $\pi_2$) is the first (resp. second) marginal of $\pi$. Consider the initial topology on $\MM(E_\Omega)$ associated with the functional $I$, where the final space $\MM(E_\Omega)\times \MM(\Omega)\times \MM(\Omega)$ is endowed with the product of the vague topologies. This initial topology, that we call the VM topology (for vague and marginal), is stronger than the vague topology. A sequence  converge to some $\pi\in \MM(E_\Omega)$ for the VM topology if and only if the sequence vaguely converges, and if the two sequences of marginals vaguely converge to the two marginals of $\pi$. We describe properties of the VM topology in Appendix \ref{subsec:background_measures}.

\begin{proposition}\label{prop:basic_prop}
	Let $\mu, \nu \in \MM$. The set of transport plans $\Adm(\mu,\nu)$ is sequentially compact for the VM topology on $E_\upperdiag \defeq \groundspace \times \groundspace \backslash \thediag \times \thediag$. Moreover, if $\mu,\nu \in \MM^p$, for this topology,
	\begin{itemize}
	\item $\pi \in \Adm(\mu,\nu) \mapsto C_p(\pi)$ is lower semi-continuous. \label{it:C_p_lsc}
	\item $\Opt_p(\mu,\nu)$ is a non-empty sequentially compact set. \label{it:opt_compact}
	\item 	\label{it:Dp_lsc} $\Dp$ is lower semi-continuous, in the sense that for sequences $(\mu_n)_n, (\nu_n)_n$ in $\MM^p$ satisfying $\mu_n \cvvague \mu$  and $\nu_n \cvvague \nu$, we have 
	\[ \Dp(\mu,\nu) \leq \liminf_{n\to \infty} \Dp(\mu_n,\nu_n).\]
	\end{itemize}
Moreover, $\Dp$ is a metric on $\MM^p$.
	\end{proposition}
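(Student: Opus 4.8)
The strategy is to treat the four assertions in the order: sequential compactness of $\Adm(\mu,\nu)$, lower semi-continuity of $C_p$, non-emptiness and compactness of $\Opt_p(\mu,\nu)$, lower semi-continuity of $\Dp$, and finally the metric axioms. The recurring technical issue is that $\upperdiag$ is unbounded, so the boundedness arguments of Figalli--Gigli \cite{ot:figalli2010newTransportationDistance} do not transfer verbatim; the fix is to exhaust $\upperdiag$ by the relatively compact ``slabs'' $\{x : d(x,\thediag) \geq \tfrac{1}{k}, \|x\| \leq k\}$ and to control the mass that escapes to the diagonal or to infinity using the finiteness of $\Pers_p$.

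First I would establish sequential compactness of $\Adm(\mu,\nu)$ for the vague topology on $E_\upperdiag$. Any $\pi \in \Adm(\mu,\nu)$ has marginals $\mu$ and $\nu$ on $\upperdiag$, hence for a compact $K \subset E_\upperdiag$ one gets $\pi(K) \leq \mu(\pix(K) \cap \upperdiag) + \nu(\piy(K) \cap \upperdiag) < \infty$, with uniform bounds over $\pi$; this gives vague pre-compactness by the standard criterion for Radon measures. To see the limit $\pi_\infty$ of a vaguely convergent subsequence still lies in $\Adm(\mu,\nu)$, one tests the marginal constraints against continuous compactly supported functions on $\upperdiag$, using that a compactly supported test function on $\upperdiag$ extends to a function on $\groundspace \times \groundspace$ that vanishes near $\thediag \times \thediag$; since $\pi(A \times \overline\upperdiag) = \mu(A)$ for Borel $A \subset \upperdiag$ passes to the limit on such sets, $\pi_\infty \in \Adm(\mu,\nu)$. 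The lower semi-continuity of $C_p$ is then the usual Portmanteau-type argument: $(x,y) \mapsto d(x,y)^p$ is nonnegative and lower semi-continuous on $E_\upperdiag$, so $C_p(\pi) = \sup_{g} \pi(g)$ over continuous compactly supported $g$ with $0 \leq g \leq d(\cdot,\cdot)^p$, and a supremum of vaguely continuous functionals is l.s.c.

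For $\Opt_p(\mu,\nu) \neq \emptyset$, I would first check that $\Adm(\mu,\nu)$ contains at least one plan of finite cost — e.g.\ the plan induced by the map sending everything to the diagonal, whose cost is $\Pers_p(\mu) + \Pers_p(\nu) < \infty$ — so the infimum is finite; then take a minimizing sequence, extract a vaguely convergent subsequence by the compactness just proved, and conclude by lower semi-continuity of $C_p$ that the limit is optimal. Compactness of $\Opt_p(\mu,\nu)$ follows since it is a closed subset (by l.s.c.\ of $C_p$, the sublevel set at the optimal value intersected with $\Adm$) of the sequentially compact set $\Adm(\mu,\nu)$. The lower semi-continuity of $\Dp$ under $\mu_n \cvvague \mu$, $\nu_n \cvvague \nu$ is obtained by picking $\pi_n \in \Opt_p(\mu_n,\nu_n)$ realizing $\Dp^p(\mu_n,\nu_n)$ along a subsequence achieving the liminf; the $\pi_n$ are vaguely pre-compact (the mass bound above now uses $\sup_n \mu_n(K'), \sup_n \nu_n(K') < \infty$ on compacts $K'$, which holds since vague convergence implies local uniform boundedness), a vague limit $\pi$ has marginals $\mu,\nu$ hence lies in $\Adm(\mu,\nu)$, and $\Dp^p(\mu,\nu) \leq C_p(\pi) \leq \liminf C_p(\pi_n) = \liminf \Dp^p(\mu_n,\nu_n)$.

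Finally, for the metric axioms on $\MM^p$: symmetry is immediate from the symmetry of the cost and the map $\pi(x,y) \mapsto \pi(y,x)$ on $\Adm$; finiteness is the finite-cost plan above; and $\Dp(\mu,\nu) = 0 \iff \mu = \nu$ uses that a zero-cost plan must be supported on the diagonal of $\overline\upperdiag \times \overline\upperdiag$ together with the marginal conditions, forcing $\mu(A) = \pi(A \times \overline\upperdiag) = \pi(\overline\upperdiag \times A) = \nu(A)$ for Borel $A \subset \upperdiag$. The triangle inequality is the only genuinely delicate point: I would use a gluing lemma, constructing from $\pi_{12} \in \Opt_p(\mu_1,\mu_2)$ and $\pi_{23} \in \Opt_p(\mu_2,\mu_3)$ a measure on $\overline\upperdiag \times \overline\upperdiag \times \overline\upperdiag$ whose $(1,2)$- and $(2,3)$-marginals are $\pi_{12}$ and $\pi_{23}$ — here one must be careful that the ``common'' marginal $\mu_2$ lives on $\upperdiag$ while mass may be routed through $\thediag$ on either side, so the disintegration/gluing has to be performed over $\upperdiag$ and the diagonal contributions handled separately — and then the $(1,3)$-marginal is an admissible plan for $(\mu_1,\mu_3)$ whose cost is bounded, via Minkowski's inequality in $L^p$ of the glued measure, by $\Dp(\mu_1,\mu_2) + \Dp(\mu_2,\mu_3)$. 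I expect this gluing step, and specifically making it rigorous in the non-bounded setting with a boundary reservoir, to be the main obstacle; the rest is a fairly mechanical adaptation of \cite[Section 2]{ot:figalli2010newTransportationDistance}.
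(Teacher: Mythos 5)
Your proposal is correct and follows essentially the same route as the paper's delayed proof in the appendix: uniform mass bounds on sets of the form $(K\times\groundspace)\cup(\groundspace\times K')$ give vague sequential compactness of $\Adm(\mu,\nu)$, lower semi-continuity of $C_p$ (you phrase it as a supremum of vaguely continuous functionals, the paper applies Portmanteau to the measures $d(x,y)^p\,\dd\pi_n(x,y)$ on the open set $E_\upperdiag$ --- the two arguments are equivalent), existence and compactness of $\Opt_p(\mu,\nu)$ via a minimizing sequence plus lower semi-continuity, the same extraction argument for the lower semi-continuity of $\Dp$, and the Figalli--Gigli gluing lemma combined with Minkowski's inequality for the triangle inequality. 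The only additions on your side (the explicit finite-cost ``project everything to the diagonal'' plan, and the exhaustion by slabs announced in your plan but not actually needed) are harmless.
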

These properties are mentioned in \cite[pages 4-5]{ot:figalli2010newTransportationDistance} in the bounded case, and corresponding proofs adapt straightforwardly to our framework. For the sake of completeness, we provide a detailed proof in Appendix \ref{sec:proofs}.

\begin{remark} If a (Borel) measure $\mu$ satisfies $\Pers_p(\mu) < \infty$, then for any Borel set $A \subset \upperdiag$ satisfying $d(A,\thediag) \defeq \inf_{x \in A} d(x, \thediag) > 0$, we have: 
\begin{equation}
	\mu(A) d(A,\thediag)^p \leq \int_A d(x, \thediag)^p \dd \mu(x) \leq \int_\upperdiag d(x, \thediag)^p \dd \mu(x) = \Pers_p(\mu) < \infty, 
	\label{eq:pers_compact} 
\end{equation}
so that $\mu(A) < \infty$. In particular, $\mu$ is automatically a Radon measure.
\end{remark}

The following lemma gives a simple way to approximate a persistence measure (resp.~diagram) with ones of finite masses. 
\begin{lemma}\label{lemma:mu_r_to_mu}
	Let $\mu \in \MM^p$. Fix $r > 0$, and let $A_r \defeq \{x \in \upperdiag, d(x,\thediag) \leq r\}$. Let $\mu^{(r)}$ be the restriction of $\mu$ to $\upperdiag \backslash A_r$. Then $\Dp(\mu^{(r)},\mu) \to 0$ when $r \to 0$. Similarly, if $a \in \PD^p$, we have $d_p(a^{(r)},a) \to 0$.
\end{lemma}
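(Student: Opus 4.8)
The plan is to construct an explicit admissible transport plan between $\mu^{(r)}$ and $\mu$ (and between $a^{(r)}$ and $a$) whose cost goes to $0$ as $r \to 0$. The natural candidate is the plan $\pi_r$ that leaves the mass of $\mu$ on $\upperdiag \setminus A_r$ untouched (i.e., transports it to itself via the identity), and sends the mass of $\mu$ restricted to $A_r$ onto the diagonal $\thediag$, namely to each point's orthogonal projection $s(x)$. Concretely, following Definition~\ref{def:induced_transport_map}, I would take the transport map $f : \groundspace \to \groundspace$ defined by $f(x) = x$ for $x \in \upperdiag \setminus A_r$, $f(x) = s(x)$ for $x \in A_r$, and $f \equiv s$ on $\thediag$; one checks the balance condition holds so that the induced plan $\pi_r$ lies in $\Adm(\mu^{(r)}, \mu)$. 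Its cost is
\begin{equation}
C_p(\pi_r) = \int_{A_r} d(x, s(x))^p \, \dd\mu(x) = \int_{A_r} d(x, \thediag)^p \, \dd\mu(x) = \Pers_p(\mu) - \Pers_p(\mu^{(r)}).
\end{equation}

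The key step is then to argue that this quantity tends to $0$ as $r \to 0$. Since $\mu \in \MM^p$ means $\Pers_p(\mu) = \int_\upperdiag d(x,\thediag)^p \dd\mu(x) < \infty$, and $A_r \downarrow \{x : d(x,\thediag) = 0\} = \emptyset$ (within $\upperdiag$) as $r \downarrow 0$, the integrand $d(x,\thediag)^p \mathbbm{1}_{A_r}(x)$ decreases pointwise to $0$ and is dominated by the integrable function $d(x,\thediag)^p$. By the dominated convergence theorem, $\int_{A_r} d(x,\thediag)^p \dd\mu(x) \to 0$. Hence $\Dp(\mu^{(r)}, \mu)^p \leq C_p(\pi_r) \to 0$. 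The persistence-diagram case is identical: $a^{(r)}$ is obtained by removing the (locally finite, hence countable) set of points of $a$ with $d(x,\thediag) \le r$, the same map $f$ induces a partial matching, and $d_p(a^{(r)}, a)^p \le \sum_{x \in a,\, d(x,\thediag)\le r} d(x,\thediag)^p \to 0$ by the same dominated-convergence (here, dominated-summation) argument, using $a \in \PD^p$; alternatively, one invokes Proposition~\ref{prop:Dp_equals_dp} to transfer the bound directly.

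I do not anticipate a serious obstacle here; this is a routine approximation lemma. The only point requiring a little care is verifying that the map $f$ above indeed satisfies the balance condition of Definition~\ref{def:induced_transport_map} so that $\pi_r \in \Adm(\mu^{(r)}, \mu)$ — in particular checking that, for a Borel set $B \subset \upperdiag$, the mass $\mu^{(r)}(f^{-1}(B) \cap \upperdiag)$ (which is $\mu(B \setminus A_r)$) plus the mass $\mu(B \cap f(\thediag))$ coming from the diagonal reservoir equals $\mu(B)$; since $f(\thediag) \subseteq \thediag$ has zero $\upperdiag$-measure, one should instead set up $f$ on the $\mu^{(r)}$-side carefully, or equivalently just directly define $\pi_r = (\id, \id)_* \mu^{(r)} + (s, \id)_* (\mu\!\restriction_{A_r})$ on $\groundspace \times \groundspace$ and check the marginal identities by hand, which is immediate. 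Either way the verification is mechanical, and the convergence is then pure dominated convergence.
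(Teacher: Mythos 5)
Your proof is correct and follows essentially the same route as the paper: the same plan (identity off $A_r$, projection onto $\thediag$ on $A_r$), the same cost identity $C_p(\pi_r)=\Pers_p(\mu)-\Pers_p(\mu^{(r)})$, and a convergence argument (dominated convergence in your write-up, monotone convergence in the paper's) that is equivalent here. The extra care you take with the admissibility of $\pi_r$ is fine but not a point of divergence.
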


\begin{proof}
Let $\pi \in \Adm(\mu, \mu^{(r)})$ be the transport plan induced by the identity map on $\groundspace \backslash A_r$, and the projection onto $\thediag$ on $A_r$. As $\pi$ is sub-optimal, one has:
	\[ \Dp^p (\mu,\mu^{(r)} ) \leq C_p(\pi) = \int_{A_r} d(x,\thediag)^p \dd \mu(x) = \Pers_p(\mu) - \Pers_p(\mu^{(r)}). \]
Thus, by the monotone convergence theorem applied to $\mu$ with the functions $f_r : x \mapsto d(x, \thediag)^p \cdot \ones_{\upperdiag \backslash A_r}(x)$, $\Dp(\mu,\mu^{(r)}) \to 0$ as $r \to 0$. Similar arguments show that $d_p(a^{(r)}, a) \to 0$ as $r \to 0$. 
\end{proof}

The following proposition is central in our work: it shows that the metrics $\Dp$ are extensions of the metrics $d_p$.
\begin{proposition}\label{prop:Dp_equals_dp}
For $a,b \in \PD^p$,  $\Dp(a,b) = d_p(a,b)$.
\end{proposition}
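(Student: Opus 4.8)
The claim is an equality between two infima, which I would prove by establishing each inequality, after first reducing to finite diagrams. (Note both $d_p(a,b)$ and $\Dp(a,b)$ are finite for $a,b\in\PD^p$, since sending all the mass of $a$ and of $b$ to the diagonal has cost $\Pers_p(a)+\Pers_p(b)<\infty$.) \emph{Step 1 ($\Dp\le d_p$): from a matching to a plan.} Let $\gamma\in\Gamma(a,b)$ be a partial matching, i.e.\ a bijection of multisets $a\cup\thediag\to b\cup\thediag$. Partition the points of $a$ into the submultiset $A_1$ of points sent by $\gamma$ into $b$ and the submultiset $A_2$ of points sent onto $\thediag$, and let $A_3\subset\thediag$ be the diagonal points sent by $\gamma$ into $b$. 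Set
\[
\pi:=\sum_{x\in A_1}\delta_{(x,\gamma(x))}+\sum_{x\in A_2}\delta_{(x,\,s(x))}+\sum_{z\in A_3}\delta_{(s(\gamma(z)),\,\gamma(z))}.
\]
Bijectivity of $\gamma$ gives $\pi\in\Adm(a,b)$, and its cost is $\sum_{x\in A_1}\|x-\gamma(x)\|^p+\sum_{x\in A_2}d(x,\thediag)^p+\sum_{z\in A_3}d(\gamma(z),\thediag)^p$, which is $\le\sum_{x\in a\cup\thediag}\|x-\gamma(x)\|^p$ since $d(w,\thediag)\le\|w-w'\|$ for every $w'\in\thediag$. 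Minimizing over $\gamma$ yields $\Dp^p(a,b)\le d_p^p(a,b)$.

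\emph{Step 2 ($d_p\le\Dp$) for finite diagrams.} Assume $a,b$ are finite, say $a=\sum_{i=1}^n\delta_{x_i}$, $b=\sum_{j=1}^m\delta_{y_j}$ with the $x_i$'s (resp.\ $y_j$'s) pairwise distinct (general multiplicities are handled identically, replacing the bounds $1$ below by those multiplicities). For $\pi\in\Adm(a,b)$, the marginal constraints force the restriction of $\pi$ to $\upperdiag\times\upperdiag$ to be supported on the finite grid $\{(x_i,y_j)\}$, with weights $t_{ij}:=\pi(\{x_i\}\times\{y_j\})$ obeying $\sum_j t_{ij}\le1$ and $\sum_i t_{ij}\le1$; the remaining mass of $\pi$ above $\{x_i\}$ (resp.\ to the left of $\{y_j\}$) lies on $\{x_i\}\times\thediag$ (resp.\ $\thediag\times\{y_j\}$) and equals $1-\sum_j t_{ij}$ (resp.\ $1-\sum_i t_{ij}$), the $\thediag\times\thediag$ part being immaterial. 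Replacing those marginal pieces by Dirac masses at the orthogonal projections $s(x_i),s(y_j)$ does not increase $C_p$, so $\Dp^p(a,b)=\min_{t\in\mathcal{T}}\Psi(t)$, where $\mathcal{T}$ is the compact polytope of nonnegative $n\times m$ matrices whose every row sum and every column sum is at most $1$ and
\[
\Psi(t):=\sum_{i,j}t_{ij}\,\|x_i-y_j\|^p+\sum_i\Big(1-\sum\nolimits_j t_{ij}\Big)d(x_i,\thediag)^p+\sum_j\Big(1-\sum\nolimits_i t_{ij}\Big)d(y_j,\thediag)^p.
\]
As $\Psi$ is affine, its minimum over $\mathcal{T}$ is attained at a vertex; by the Birkhoff--von Neumann theorem (equivalently, by total unimodularity of the constraints) the vertices of $\mathcal{T}$ are exactly the partial permutation matrices, i.e.\ $0/1$ matrices with at most one $1$ per row and per column. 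Such a vertex is the incidence pattern of a partial matching $\gamma\in\Gamma(a,b)$ --- match $x_i$ with $y_j$ when $t_{ij}=1$, send the unmatched $x_i$'s and $y_j$'s to the diagonal, and complete with zero-cost diagonal pairs --- for which $\sum_{x\in a\cup\thediag}\|x-\gamma(x)\|^p=\Psi(t)$. Hence $d_p^p(a,b)\le\Dp^p(a,b)$, so $\Dp=d_p$ on finite diagrams.

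\emph{Step 3: general case via truncation, and the main obstacle.} For arbitrary $a,b\in\PD^p$ and $r>0$, the truncation $a^{(r)}$ of Lemma \ref{lemma:mu_r_to_mu} is a finite diagram: by \eqref{eq:pers_compact} its mass is at most $\Pers_p(a)/r^p<\infty$, hence (being integer-valued) it has finitely many points; similarly for $b^{(r)}$. Lemma \ref{lemma:mu_r_to_mu} gives $\Dp(a^{(r)},a)\to0$, $\Dp(b^{(r)},b)\to0$, $d_p(a^{(r)},a)\to0$ and $d_p(b^{(r)},b)\to0$ as $r\to0$. Since $\Dp$ (Proposition \ref{prop:basic_prop}) and $d_p$ are both distances, the triangle inequality yields $\Dp(a^{(r)},b^{(r)})\to\Dp(a,b)$ and $d_p(a^{(r)},b^{(r)})\to d_p(a,b)$; as $\Dp(a^{(r)},b^{(r)})=d_p(a^{(r)},b^{(r)})$ by Step 2, passing to the limit gives $\Dp(a,b)=d_p(a,b)$. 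I expect Step 2 to be the crux: one must verify carefully that every admissible plan reduces, without cost increase, to the finite-dimensional form parametrized by $\mathcal{T}$, and then invoke the (partial) Birkhoff--von Neumann theorem to ensure an optimum can be chosen to be the incidence matrix of a genuine partial matching. Steps 1 and 3 are routine.
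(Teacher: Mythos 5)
Your proof is correct, and its overall architecture --- truncate to the finite diagrams $a^{(r)},b^{(r)}$, note via \eqref{eq:pers_compact} that these have finitely many points, apply the finite case, and pass to the limit using Lemma \ref{lemma:mu_r_to_mu} together with the triangle inequalities for $\Dp$ and $d_p$ --- is exactly the paper's. The only genuine difference is in the finite case: the paper disposes of it by citing \cite[Proposition 1]{tda:lacombe2018large}, whereas you prove it from scratch. Your Step 1 (turning a partial matching into an admissible plan whose cost is no larger, since projecting onto $\thediag$ is at least as cheap as matching to any other diagonal point) and Step 2 (the marginal constraints force the $\upperdiag\times\upperdiag$ part of any plan onto the grid $\{x_i\}\times\{y_j\}$, the diagonal-bound mass may be re-aimed at the orthogonal projections without increasing $C_p$, and the resulting affine functional over the doubly substochastic polytope attains its minimum at a vertex, which by integrality of that polytope is a partial permutation matrix, i.e.\ a partial matching) are both carried out correctly, including the remark on multiplicities. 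So your write-up is a self-contained version of the paper's proof, at the cost of reproving the combinatorial finite-diagram equality that the paper outsources; what it buys is independence from the external reference and an explicit identification of where the discrete structure (integrality of the substochastic polytope) enters.
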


\begin{proof}
	Let $a,b\in \PD^p$ be two persistence diagrams. The case where $a,b$ have a finite number of points is already treated in \cite[Proposition 1]{tda:lacombe2018large}. 

In the general case, let $r > 0$. Due to \eqref{eq:pers_compact}, the diagrams $a^{(r)}$ and $b^{(r)}$ defined in Lemma \ref{lemma:mu_r_to_mu} have a finite mass (thus finite number of points). Therefore, $d_p(a^{(r)}, b^{(r)}) = \Dp(a^{(r)}, b^{(r)})$. By Lemma \ref{lemma:mu_r_to_mu}, the former converges to $d_p(a,b)$ while the latter converges to $\Dp(a,b)$, giving the conclusion. 
\end{proof}

As a consequence of this proposition, we will use $\Dp$ to denote the distance between two elements of $\PD^p$ from now on.

\begin{proposition}
The space $(\MM^p,\Dp)$ is a Polish space.
	\label{prop:MM_p_Polish}
\end{proposition}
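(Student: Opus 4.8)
Since Proposition~\ref{prop:basic_prop} already provides that $\Dp$ is a distance on $\MM^p$, it suffices to show that this metric space is separable and that $\Dp$ is complete. For separability, the plan is to prove that the countable family of finite sums $\sum_{i=1}^{N} q_i \delta_{x_i}$, with $N \in \N$, $q_i \in \Q_{>0}$ and $x_i \in \Q^2 \cap \upperdiag$, is dense. Given $\mu \in \MM^p$ and $\epsilon > 0$, I would first use Lemma~\ref{lemma:mu_r_to_mu} to replace $\mu$ by its restriction $\mu^{(r)}$, which has finite mass $\mu^{(r)}(\upperdiag) \le \Pers_p(\mu) / r^p$ by \eqref{eq:pers_compact}, up to a $\Dp$-error smaller than $\epsilon/3$. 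Next, restricting $\mu^{(r)}$ to a ball $\{\|x\| < R\}$ costs no more than the price of transporting the remaining mass onto $\thediag$, that is $\left( \int_{\{d(x,\thediag) > r,\ \|x\| \ge R\}} d(x,\thediag)^p \dd \mu(x) \right)^{1/p}$, which tends to $0$ as $R \to \infty$ by dominated convergence; pick $R$ so that this is smaller than $\epsilon/3$. The resulting measure is finite and supported in a compact set $K \subset \upperdiag$ with $d(K, \thediag) > 0$. Finally, I would partition $K$ into finitely many Borel cells of small diameter, move the mass of each cell to a nearby rational point of $\upperdiag$ (a transport map in the sense of Definition~\ref{def:induced_transport_map}, of cost at most $(\text{mesh})^p$ times the total mass), then round the atomic weights to rationals, the resulting mass excess or defect being created or absorbed along $\thediag$ at a cost proportional to the total rounding error and to $\sup_{x \in K} d(x,\thediag)^p < \infty$; with the mesh and the rounding chosen fine enough, the triangle inequality yields an element of this family at $\Dp$-distance at most $\epsilon$ from $\mu$.

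For completeness, let $(\mu_n)_n$ be a $\Dp$-Cauchy sequence. The key preliminary observation is that $\Dp(\mu, \emptydgm)^p = \Pers_p(\mu)$ for every $\mu \in \MM^p$, where $\emptydgm$ denotes the zero measure (the only admissible plan from $\mu$ to $\emptydgm$ sends all the mass of $\mu$ onto its orthogonal projection on $\thediag$). Since a Cauchy sequence is bounded, this produces a uniform bound $M \defeq \sup_n \Pers_p(\mu_n) < \infty$, whence $\mu_n(K) \le M / d(K,\thediag)^p$ for every compact $K \subset \upperdiag$ by \eqref{eq:pers_compact}; thus $(\mu_n)_n$ is uniformly bounded on compact sets, and by the standard sequential vague-compactness theorem for Radon measures on the locally compact, second countable space $\upperdiag$ (see Section~\ref{subsec:background_measures}) some subsequence satisfies $\mu_{n_k} \cvvague \mu$ for a Radon measure $\mu$. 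Testing against an increasing sequence of continuous compactly supported functions converging pointwise to $d(\cdot,\thediag)^p$, together with monotone convergence, gives $\Pers_p(\mu) \le M$, so $\mu \in \MM^p$. Applying the lower semicontinuity of $\Dp$ from Proposition~\ref{prop:basic_prop} to the constant sequence $\mu_{n_k}$ and to the sequence $(\mu_{n_{k+m}})_m \cvvague \mu$ then gives
\[
\Dp(\mu_{n_k}, \mu) \le \liminf_{m \to \infty} \Dp(\mu_{n_k}, \mu_{n_{k+m}}) \le \sup_{m \ge 0} \Dp(\mu_{n_k}, \mu_{n_{k+m}}),
\]
and the right-hand side tends to $0$ as $k \to \infty$ by the Cauchy property. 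Hence $\Dp(\mu_{n_k}, \mu) \to 0$, and a Cauchy sequence admitting a convergent subsequence converges, so $\Dp(\mu_n, \mu) \to 0$ and $\Dp$ is complete.

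The only genuinely delicate point is the last one: upgrading the vague convergence $\mu_{n_k} \cvvague \mu$ to $\Dp$-convergence. Rather than reproving (a version of) Theorem~\ref{thm:conv_dp}, the plan exploits the Cauchy structure of the sequence together with the lower semicontinuity of $\Dp$ already available from Proposition~\ref{prop:basic_prop}, which is precisely what makes the displayed estimate work; one must be careful to establish $\mu \in \MM^p$ before invoking that lower semicontinuity, so that $\Dp(\mu_{n_k},\mu)$ is meaningful. All remaining ingredients --- the explicit countable family, the uniform persistence bound, vague precompactness, lower semicontinuity of $\Pers_p$ --- are standard.
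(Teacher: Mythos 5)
Your proof is correct and follows essentially the same route as the paper's: separability via truncation away from the diagonal and at infinity, discretization onto a countable set of atoms, and rational rounding of the weights absorbed along $\thediag$; completeness via the uniform bound on $\Pers_p(\mu_n)$, vague precompactness from \eqref{eq:pers_compact}, and the lower semicontinuity of $\Dp$ combined with the Cauchy property. The minor variations (monotone convergence rather than lower semicontinuity of $\Dp(\cdot,\emptydgm)$ to show $\mu\in\MM^p$, rational points rather than dyadic square centers) do not change the argument.
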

As for Proposition \ref{prop:basic_prop}, this proposition appears in \cite[Proposition 2.7]{ot:figalli2010newTransportationDistance} in the bounded case, and its proof is straightforwardly adapted to our framework. For the sake of completeness, we provide a detailed proof in Appendix \ref{sec:proofs}.
 
We now state one of our main result: a characterization of convergence in $(\MM^p, \Dp)$.

\begin{theorem}\label{thm:conv_dp}
Let $\mu, \mu_1, \mu_2,\dots$ be measures in $\MM^p$. Then, 
\begin{equation}
\Dp(\mu_n,\mu) \to 0 \Leftrightarrow \begin{cases} \mu_n \cvvague \mu,  \\ \Pers_p(\mu_n) \to \Pers_p(\mu). \end{cases}
\end{equation} 
\end{theorem}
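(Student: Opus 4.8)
The plan is to prove the two implications separately, reducing as much as possible to the bounded case handled in \cite[Proposition 2.7]{ot:figalli2010newTransportationDistance}. For the forward direction, assume $\Dp(\mu_n,\mu)\to 0$. Since $x\mapsto d(x,\thediag)^p$ is continuous and the projection onto $\thediag$ has zero cost, testing optimal plans against the sub-optimal plan that sends everything with small persistence to the diagonal (as in the proof of Lemma~\ref{lemma:mu_r_to_mu}) one gets control of $\Pers_p$; more directly, the map $\mu\mapsto\Pers_p(\mu)^{1/p}$ is $1$-Lipschitz for $\Dp$ (it equals $\Dp(\mu,\emptydgm)$, the distance to the null measure, which is itself a valid element of $\MM^p$), hence $\Pers_p(\mu_n)\to\Pers_p(\mu)$ follows immediately from the triangle inequality. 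For vague convergence, fix $f\in C_c(\upperdiag)$ with $\supp(f)\subset\upperdiag$; then $\supp(f)$ is a compact subset of the open half-plane, so $d(\supp(f),\thediag)=:\delta>0$. Take an optimal plan $\pi_n\in\Opt_p(\mu_n,\mu)$; the mass that $\pi_n$ moves a distance less than $\delta/2$ stays, roughly, inside a fixed compact neighborhood where $f$ is uniformly continuous, and the mass moved further has total size $O(\Dp^p(\mu_n,\mu)/\delta^p)\to 0$. Splitting $\mu_n(f)-\mu(f)=\int (f(x)-f(y))\,\dd\pi_n(x,y)$ accordingly and using uniform continuity of $f$ plus this mass bound gives $\mu_n(f)\to\mu(f)$.

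For the reverse direction, assume $\mu_n\cvvague\mu$ and $\Pers_p(\mu_n)\to\Pers_p(\mu)$. Fix $r>0$ and recall $A_r=\{x: d(x,\thediag)\le r\}$ and the truncation $\mu^{(r)}$ from Lemma~\ref{lemma:mu_r_to_mu}, which lives on $\upperdiag\setminus A_r$, a set whose closure can be viewed as (a subset of) a bounded domain once we also cap the persistence from above — so introduce a second truncation keeping only points with persistence in $[r, R]$ and restricting to a bounded window, call it $\mu_n^{(r,R)}$. On such a fixed bounded domain, vague convergence $\mu_n\cvvague\mu$ restricts to vague convergence of the truncations together with convergence of total masses (this is where one must be careful that $r, R$ are continuity levels for $\mu$, which holds for all but countably many choices), so Figalli--Gigli's bounded-domain theorem gives $\Dp(\mu_n^{(r,R)},\mu^{(r,R)})\to 0$. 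Then one estimates, via the triangle inequality,
\[
\Dp(\mu_n,\mu)\le \Dp(\mu_n,\mu_n^{(r,R)}) + \Dp(\mu_n^{(r,R)},\mu^{(r,R)}) + \Dp(\mu^{(r,R)},\mu),
\]
and the two outer terms are bounded, by the sub-optimal ``project to diagonal / forget far points'' plan, by quantities of the form $\left(\Pers_p(\mu_n)-\Pers_p(\mu_n^{(r,R)})\right)^{1/p}$ and similarly for $\mu$. The crucial point is to show these tail contributions are \emph{uniformly small} in $n$: the low-persistence tail $\int_{A_r} d(x,\thediag)^p\dd\mu_n$ is handled because $\Pers_p(\mu_n)\to\Pers_p(\mu)$ and $\int_{A_r^c\cap\{\text{window}\}}d(\cdot,\thediag)^p\dd\mu_n\to$ the corresponding integral for $\mu$ (Portmanteau-type argument for the truncations), so the leftover mass's persistence integral converges to that of $\mu$, which is $<\epsilon$ for $r$ small and $R,$ window large; a symmetric argument controls the high-persistence / far-from-origin tail. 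Sending first $n\to\infty$, then $r\to 0$ and $R\to\infty$, yields $\limsup_n\Dp(\mu_n,\mu)\le\epsilon$ for every $\epsilon$.

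The main obstacle I anticipate is the reverse direction's tail control, specifically making rigorous that the ``lost mass'' (the persistence integral over the complement of the bounded continuity window) converges in $n$ to its limit value rather than merely being bounded — vague convergence alone does not control mass escaping to the diagonal or to infinity, and it is precisely the extra hypothesis $\Pers_p(\mu_n)\to\Pers_p(\mu)$ that must be leveraged to close this gap. Concretely, one writes $\Pers_p(\mu_n)=\int_{W}d(\cdot,\thediag)^p\dd\mu_n+\int_{W^c}d(\cdot,\thediag)^p\dd\mu_n$ for a bounded window $W$ bounded away from $\thediag$; the first integral converges to $\int_W d(\cdot,\thediag)^p\dd\mu$ by vague convergence (choosing $W$ with $\mu$-negligible boundary), so the second converges to $\int_{W^c}d(\cdot,\thediag)^p\dd\mu$, which can be made $<\epsilon^p$ by dominated convergence since $\Pers_p(\mu)<\infty$. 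A secondary nuisance is bookkeeping the two distinct kinds of truncation (persistence too small versus point too far from a base point) simultaneously and checking that $\upperdiag\setminus A_r$ intersected with a large ball is bi-Lipschitz equivalent to a bounded domain of the type covered by \cite{ot:figalli2010newTransportationDistance}, so that their result applies verbatim after rescaling.
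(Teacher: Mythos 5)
Your overall strategy is sound and, for the converse implication, genuinely different from the paper's. The paper extracts a vaguely convergent subsequence of optimal plans $\pi_n\in\Opt_p(\mu_n,\mu)$, identifies the limit as the identity coupling, and then hand-builds a modified plan $\tilde{\pi}_n$ (see \eqref{eq:def_pi}) that reroutes onto the diagonal all mass crossing between a compact $K$ and $K^c$, estimating each piece of its cost via Portmanteau. You instead truncate both measures to a compact window $W$ bounded away from $\thediag$, prove $\Dp$-convergence of the truncations, and absorb the two tail terms of the triangle inequality by the sub-optimal ``project to $\thediag$'' plan. Your tail estimate $\int_{W^c}d(\cdot,\thediag)^p\dd\mu_n=\Pers_p(\mu_n)-\int_{W}d(\cdot,\thediag)^p\dd\mu_n\to\int_{W^c}d(\cdot,\thediag)^p\dd\mu<\epsilon^p$ is exactly the mechanism of the paper's \eqref{eq:tightness}, i.e.\ the same leverage of the hypothesis $\Pers_p(\mu_n)\to\Pers_p(\mu)$; your forward direction is also fine (Markov's inequality on $\{d(x,y)>\eta\}$ plays the role of the paper's H\"older estimate, and $\Pers_p(\cdot)^{1/p}=\Dp(\cdot,0)$ gives the persistence convergence in both write-ups). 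Where the truncation route works, it is arguably more elementary, since it avoids the compactness argument for sequences of optimal plans.

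The one step you should not leave as stated is the claim that the window is ``bi-Lipschitz equivalent to a bounded domain of the type covered by \cite{ot:figalli2010newTransportationDistance}, so that their result applies verbatim after rescaling.'' It does not: the Figalli--Gigli distance on a bounded domain $\XX$ uses \emph{all} of $\partial\XX$ as a free reservoir, so on your window the artificial cuts at $d(x,\thediag)=r$, $d(x,\thediag)=R$ and the lateral sides would absorb mass at near-zero cost, whereas in $\Dp$ that mass must still pay at least $r^p$ to reach the diagonal; no bi-Lipschitz change of variables reconciles the two transport problems. The conclusion you need is nonetheless true, and the paper already contains the right tool: the truncations have uniformly bounded mass by \eqref{eq:pers_compact} and a common compact support, so Proposition \ref{prop:Dp_equals_Wp_finite} converts $\Dp(\mu_n^{(r,R)},\mu^{(r,R)})$ into an ordinary Wasserstein distance $W_{p,\rho}$ between equal-mass measures on the compact subset $\overline{W}\cup\{\thediag\}$ of $(\tilde{\upperdiag},\rho)$, where weak convergence together with convergence of total masses (which you correctly obtain from Portmanteau at continuity levels of $\mu$) is equivalent to $W_{p,\rho}$-convergence. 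With that substitution your argument closes.
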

This result is analog to the characterization of convergence of probability measures in the Wasserstein space (see \cite[Theorem 6.9]{ot:villani2008optimal}) and can be found in \cite[Proposition 2.7]{ot:figalli2010newTransportationDistance} in the case where the ground space is bounded. While the proof of the direct implication can be easily adapted from \cite{ot:figalli2010newTransportationDistance} (it can be found in Appendix \ref{sec:proofs}), a new proof is needed for the converse implication.

\begin{proof}[Proof of the converse implication]
For a given compact set $K \subset \upperdiag$, we denote its complementary set in $\upperdiag$ by $K^c$, its interior set by $\mathring{K}$, and its boundary by $\partial K$. Let $\mu,\mu_1,\mu_2\dots$ be elements of $\MM^p$ and assume that $\mu_n\cvvague \mu$ and $\Pers_p(\mu_n)\to \Pers_p(\mu)$. Since \[\Dp(\mu_n,\mu) \leq \Dp(\mu_n,0) + \Dp(\mu,0) = \Pers_p(\mu_n)^{1/p} + \Pers_p(\mu)^{1/p},\] the sequence $(\Dp(\mu_n,\mu))_n$ is bounded. Thus, if we show that $(\Dp(\mu_n,\mu))_n$ admits $0$ as an unique accumulation point, then the convergence holds. Up to extracting a subsequence, we may assume that $(\Dp(\mu_n,\mu))_n$ converges to some limit. Let $(\pi_n)_n \in \Opt(\mu_n, \mu)^\N$ be corresponding optimal transport plans. 
Let $K$ be a compact subset of $\Omega$. Recall (Prop.~\ref{prop:rel_cpt_vague} in Appendix \ref{subsec:background_measures}) that relative compactness for the vague convergence of a sequence $(\mu_n)_n$ is equivalent to $\sup_n \{\mu_n(K)\} < \infty$ for every compact $K \subset \upperdiag$. Hence, according to Prop.~\ref{prop:compact_VM}, the sequence $(\pi_n)_n$ is  relatively compact for the VM topology.

Let thus $\pi$ be the limit of any converging subsequence of $(\pi_n)_n$, whose indexes are still denoted by $n$. As $\mu_n \cvvague \mu$, $\pi$ is necessarily in $\Opt_p(\mu,\mu)$ (see \cite[Prop.~2.3]{ot:figalli2010newTransportationDistance}), i.e.\ $\pi$ is supported on $\{(x,x),x\in \Omega\}$. The vague convergence of $(\mu_n)_n$ and the convergence of $(\Pers_p(\mu_n))_n$ to $\Pers_p(\mu)$ imply that for a given compact set $K \subset \upperdiag$, we have
\begin{align*}
	&\limsup_{n \to \infty} \int_{K^c} d(x,\thediag)^p \dd \mu_n(x) \\
	&= \limsup_{n \to \infty} \left( \Pers_p(\mu_n) - \int_K d(x, \thediag)^p \dd \mu_n(x) \right) \\
	&=  \Pers_p(\mu) - \liminf_n \int_{\mathring{K}} d(x,\thediag)^p \dd \mu_n(x) - \liminf_n \int_{\partial K} d(x, \thediag)^p \dd \mu_n(x) \\
	&\leq \Pers_p(\mu) - \int_{\mathring{ K}} d(x, \thediag)^p \dd \mu(x) \text{ by the Portmanteau theorem} \\
	&= \int_{\overline{K^c}} d(x, \thediag)^p \dd \mu(x),
\end{align*}
where the Portmanteau theorem is recalled in Appendix \ref{subsec:background_measures}.
As $\Pers_p(\mu)$ is finite, for $\epsilon >0$, there exists  some compact set $K\subset \upperdiag$ with
\begin{equation} 
\limsup_n \int_{K^c} d(x,\thediag)^p \dd \mu_n(x) <\epsilon \quad \text{ and} \quad \int_{K^c} d(x,\thediag)^p \dd \mu(x) <\epsilon.
\label{eq:tightness}
 \end{equation}
Let $s:\upperdiag\to \thediag$ be the projection on $\thediag$ for the metric $d$. Such a projection is not unique for $q=1$ or for the more general Polish spaces $\XX$ of Remark \ref{rem:polish}, but we can always select a measurable projection $s$ \cite{cascales2003measurable}. We consider the following transport plan $\tilde{\pi}_n$ (consider informally that what went from $K$ to $K^c$ and from $K^c$ to $K$ is now transported onto the diagonal, while everything else is unchanged):
\begin{equation}
\begin{cases}
\tilde{\pi}_n = \pi_n &\hspace{-.1cm}\text{on } K^2 \sqcup (K^c)^2 , \\
\tilde{\pi}_n = 0 &\hspace{-.1cm}\text{on } K\times K^c \sqcup K^c \times K, \\
\tilde{\pi}_n(A\times B) = \pi_n(A \times B) + \pi_n(A\times (s^{-1}(B) \cap K^c)) &\hspace{-.1cm}\text{for } A\subset K,\ B\subset \thediag, \\
\tilde{\pi}_n(A\times B) = \pi_n (A \times B) + \pi_n(A\times (s^{-1}(B)\cap  K ))  &\hspace{-.1cm}\text{for } A\subset K^c,\ B\subset \thediag, \\
\tilde{\pi}_n(A \times B) = \pi_n(A \times B) + \pi_n((s^{-1}(A)\cap  K^c)\times B)  &\hspace{-.1cm}\text{for } A\subset  \thediag,\  B\subset K, \\
\tilde{\pi}_n(A \times B) = \pi_n( A \times B) + \pi_n((s^{-1}(A)\cap  K)\times B) &\hspace{-.1cm}\text{for } A\subset  \thediag,\ B\subset K^c.
\end{cases}
\label{eq:def_pi}
\end{equation}
Note that $\tilde{\pi}_n \in \Adm(\mu_n,\mu)$: for instance, for $A\subset K$ a Borel set, 
\begin{align*}
\tilde{\pi}_n(A \times \groundspace) &= \tilde \pi_n(A\times K) +  \tilde \pi_n(A\times K^c) + \tilde  \pi_n(A\times \thediag)\\
&=   \pi_n(A\times K) +  0 + \pi_n(A\times \thediag) + \pi_n(A\times (s^{-1}(\thediag)\cap K^c))\\
& = \pi_n(A\times \groundspace)=\mu_n(A),
\end{align*}
 and it is shown likewise that the other constraints are satisfied. As $\tilde{\pi}_n$ is suboptimal, $\Dp^p(\mu_n,\mu) \leq \int_{\groundspace^2} d(x,y)^p \dd\tilde{\pi}_n(x,y)$. The latter integral is equal to a sum of different terms, and we will show that each of them converges to $0$. Assume without loss of generality that the compact set $K$ belongs to an increasing sequence of compact sets whose union is $\upperdiag$, with $\pi(\partial(K \times K)) = 0$ for all compacts of the sequence.
\begin{itemize}
	\item We have $\iint_{K^2} d(x,y)^p \dd\tilde{\pi}_n(x,y) = \iint_{K^2} d(x,y)^p \dd \pi_n(x,y)$. The $\limsup$ of the integral is less than or equal to $\iint_{K^2} d(x,y)^p \dd\pi(x,y)$ by the Portmanteau theorem (applied to the sequence $(d(x,y)^p \dd \pi_n(x,y))_n$), and, recalling that $\pi$ is supported on the diagonal of $E_\upperdiag$, this integral is equal to $0$.
	
	\item For optimality reasons, any optimal transport plan must be supported on $\{ d(x,y)^p \leq d(x,\thediag)^p + d(y,\thediag)^p\}$ (this fact is detailed in \cite[Prop.~2.3]{ot:figalli2010newTransportationDistance}). It follows that
	\begin{align*}
	\iint_{(K^c)^2} d(x,y)^p \dd \tilde{\pi}_n(x,y) &= \iint_{(K^c)^2} d(x,y)^p \dd \pi_n(x,y)\\
	&\leq \int_{K^c} d(x,\thediag)^p \dd\mu_n(x) + \int_{K^c} d(y,\thediag)^p \dd \mu(y).
	\end{align*}
	Taking the $\limsup$ in $n$, and then letting $K$ goes to $\upperdiag$, this quantity converges to $0$ by \eqref{eq:tightness}.
	
	\item We have 
\begin{align*}
&\iint_{K\times \thediag} d(x,\thediag)^p \dd \tilde{\pi}_n(x,y) \\
&= \iint_{K\times \thediag} d(x,\thediag)^p \dd \pi_n(x,y) + \iint_{K\times K^c} d(x,\thediag)^p\dd\pi_n(x,y)\\
&= \iint_{K\times \groundspace} d(x,\thediag)^p \dd \pi_n(x,y) - \iint_{K^2} d(x,\thediag)^p\dd\pi_n(x,y)\\
&= \int_K d(x,\thediag)^p \dd \mu_n(x) - \iint_{K^2} d(x,\thediag)^p \dd \pi_n(x,y)
\end{align*}
	By the Portmanteau theorem applied to the sequence $(d(x,\thediag)^p \dd \mu_n(x))_n$, the $\limsup$ of the first term is less than or equal to $\int_K d(x,\thediag)^p \dd \mu(x)$. Recall that we assume that $\pi(\partial(K\times K))=0$. By applying the second characterization of Portmanteau theorem (see Prop. \ref{prop:portemanteau}) on the second term to the sequence $(d(x,y)^p \dd \pi_n(x,y))_n$, and using that $\pi$ is supported on the diagonal of $E_\upperdiag$, we obtain that the limsup of the second term is less than or equal to $- \iint_{K^2} d(x,\thediag)^p \dd \pi(x,y)= -\int_K d(x,\thediag)^p \dd\mu(x)$. Therefore, the $\limsup$ of the integral is equal to 0.
	
	\item The three remaining terms (corresponding to the three last lines of the definition \eqref{eq:def_pi}) are treated likewise this last case.
\end{itemize}
Finally, we have proven that $(\Dp(\mu_n,\mu))_n$ is bounded and that for any converging subsequence $(\mu_{n_k})_k$, $\Dp(\mu_{n_k},\mu)$ converges to $0$. It follows that $\Dp(\mu_n,\mu) \to 0$. 
\end{proof}

\begin{remark}
The assumption $\Pers_p(\mu_n) \to \Pers_p(\mu)$ is crucial to obtain $\Dp$-convergence assuming vague convergence. For example, the sequence defined by $\mu_n \defeq \delta_{(n, n+1)}$ converges vaguely to $\mu = 0$ and $(\Pers_p(\mu_n))_n$ does converge (it is constant), while $\Dp(\mu_n, 0) \nrightarrow 0$. This does not contradict Theorem \ref{thm:conv_dp} since $\Pers_p(\mu) = 0 \neq \lim_n \Pers_p(\mu_n)$. 
\end{remark}

Theorem \ref{thm:conv_dp} implies some useful results. First, it entails that the topology of the metric $\Dp$ is stronger than the vague topology. As a consequence, the following corollary holds, using Proposition \ref{prop:diagram_close} ($\PD^p$ is closed in $\MM^p$ for the vague topology).
\begin{corollary}\label{cor:Dp_close_dp}
$\PD^p$ is closed in $\MM^p$ for the metric $\Dp$.
\end{corollary}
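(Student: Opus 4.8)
The plan is to reduce $\Dp$-closedness to vague closedness, which has already been established, by exploiting the fact (noted just before the statement) that the topology induced by $\Dp$ is finer than the vague topology. Since $(\MM^p,\Dp)$ is a metric space (Proposition \ref{prop:MM_p_Polish}), it suffices to prove sequential closedness: given a sequence $(a_n)_n$ in $\PD^p$ and a measure $\mu \in \MM^p$ with $\Dp(a_n,\mu)\to 0$, we must show that $\mu \in \PD^p$.

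First I would apply Theorem \ref{thm:conv_dp} to the sequence $(a_n)_n$: the convergence $\Dp(a_n,\mu)\to 0$ implies in particular that $a_n \cvvague \mu$ (the accompanying convergence $\Pers_p(a_n)\to\Pers_p(\mu)$ is part of the equivalence but is not needed for this corollary). This is precisely the quantitative statement that $\Dp$-convergence entails vague convergence.

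Next I would invoke Proposition \ref{prop:diagram_close}, which asserts that $\PD^p$ is closed in $\MM^p$ for the vague topology. The points $a_n$ lie in $\PD^p$, they converge vaguely to $\mu$, and $\mu$ is assumed to lie in $\MM^p$; hence $\mu \in \PD^p$, which is what we wanted. This completes the proof.

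I do not expect any real obstacle here, as the corollary is essentially immediate from the two cited results; the only point deserving care is that vague closedness of persistence diagrams must be taken relative to $\MM^p$ (a vague limit of diagrams in the full space $\MM$ of Radon measures need not have finite $p$-persistence), which is exactly why one appeals to Proposition \ref{prop:diagram_close} together with the standing hypothesis $\mu\in\MM^p$ built into the notion of closedness for $\Dp$, rather than to an unconditional vague-closedness statement for $\PD$.
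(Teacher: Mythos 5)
Your proof is correct and is essentially the paper's own argument: the direct implication of Theorem \ref{thm:conv_dp} gives vague convergence from $\Dp$-convergence, and Proposition \ref{prop:diagram_close} (vague closedness of point measures), combined with the hypothesis $\mu\in\MM^p$, yields $\mu\in\PD^p$. Your closing remark about relativizing vague closedness to $\MM^p$ is a correct and worthwhile precision, but the route is the same.
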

We recover in particular that the space $(\PD^p,\Dp)$ is a Polish space (Proposition \ref{prop:MM_p_Polish}), a result already proved in \cite[Theorems 7 and 12]{tda:mukherjee2011probabilitymeasure} with a different approach.

Secondly, we show that the vague convergence of $\mu_n$ to $\mu$ along with the convergence of $\Pers_p(\mu_n) \to \Pers_p(\mu)$ is equivalent to the weak convergence of a weighted measure (see Appendix \ref{subsec:background_measures} for a definition of weak convergence, denoted by $\cvweak$ in the following).
	For $\mu \in \MM^p$, let us introduce the Borel measure with finite mass $\mu^{(p)}$ defined, for a Borel subset $A \subset \Omega$, as:
	\begin{equation}\label{eq:def_mu^p}
		\mu^{(p)}(A) = \int_A d(x,\thediag)^p \dd \mu(x).
	\end{equation}
\begin{corollary}\label{cor:Dp_to_weak_cv} 
	For a sequence $(\mu_n)_n$ and a persistence measure $\mu \in \MM^p$, we have
	\[ \Dp(\mu_n,\mu) \to 0 \text{ if and only if } \mu^{(p)}_n \cvweak \mu^{(p)}. \]
\end{corollary}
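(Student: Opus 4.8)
The plan is to deduce this from Theorem \ref{thm:conv_dp}. By that theorem, $\Dp(\mu_n,\mu)\to 0$ is equivalent to the conjunction ``$\mu_n\cvvague\mu$ and $\Pers_p(\mu_n)\to\Pers_p(\mu)$'', so it suffices to show that this conjunction is itself equivalent to $\mu_n^p\cvweak\mu^p$. The two observations that make everything go through are: (i) $\mu^p$ is a \emph{finite} Borel measure with $\mu^p(\upperdiag)=\Pers_p(\mu)$, hence (being finite on the Polish space $\upperdiag$) inner regular; and (ii) both $x\mapsto d(x,\thediag)^p$ and $x\mapsto d(x,\thediag)^{-p}$ are continuous and positive on $\upperdiag$, so multiplying a continuous compactly supported function on $\upperdiag$ by either of them again produces a continuous compactly supported function on $\upperdiag$. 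Throughout, I use the change of measure $\dd\mu^p=d(\cdot,\thediag)^p\dd\mu$ (and likewise for each $\mu_n$, which lies in $\MM^p$ precisely because $\mu_n^p$ is finite).

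For the direct implication, I would first note that (ii) turns vague convergence of $\mu_n$ into vague convergence of $\mu_n^p$: for $g\in\mathcal{C}_c(\upperdiag)$ one has $\mu_n^p(g)=\int g(x)d(x,\thediag)^p\dd\mu_n(x)\to\int g(x)d(x,\thediag)^p\dd\mu(x)=\mu^p(g)$ since $g\cdot d(\cdot,\thediag)^p\in\mathcal{C}_c(\upperdiag)$, so $\mu_n^p\cvvague\mu^p$. Combined with the mass convergence $\mu_n^p(\upperdiag)=\Pers_p(\mu_n)\to\Pers_p(\mu)=\mu^p(\upperdiag)$, this upgrades to weak convergence by the standard tightness argument: given $\epsilon>0$, use (i) to pick a compact $K\subset\upperdiag$ with $\mu^p(\upperdiag\setminus K)<\epsilon$ and (Urysohn) a function $\chi\in\mathcal{C}_c(\upperdiag)$ with $\ones_K\le\chi\le 1$; then $\int(1-\chi)\dd\mu_n^p=\mu_n^p(\upperdiag)-\mu_n^p(\chi)\to\mu^p(\upperdiag)-\mu^p(\chi)\le\mu^p(\upperdiag\setminus K)<\epsilon$, and for any bounded continuous $f$ one splits $\mu_n^p(f)$ into $\mu_n^p(f\chi)$ (which converges, as $f\chi\in\mathcal{C}_c(\upperdiag)$) and $\mu_n^p(f(1-\chi))$ (bounded in absolute value by $\|f\|_\infty\int(1-\chi)\dd\mu_n^p$), obtaining $\limsup_n|\mu_n^p(f)-\mu^p(f)|\le 2\|f\|_\infty\epsilon$; letting $\epsilon\to0$ gives $\mu_n^p\cvweak\mu^p$.

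For the converse, I would test $\mu_n^p\cvweak\mu^p$ against the constant function $1$ to get $\Pers_p(\mu_n)=\mu_n^p(\upperdiag)\to\mu^p(\upperdiag)=\Pers_p(\mu)$, and against functions of the form $g\cdot d(\cdot,\thediag)^{-p}$ with $g\in\mathcal{C}_c(\upperdiag)$ (which lie in $\mathcal{C}_c(\upperdiag)\subset\mathcal{C}_b(\upperdiag)$ by (ii)) to get $\mu_n(g)=\int g(x)d(x,\thediag)^{-p}\dd\mu_n^p(x)\to\int g(x)d(x,\thediag)^{-p}\dd\mu^p(x)=\mu(g)$, i.e.\ $\mu_n\cvvague\mu$. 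Theorem \ref{thm:conv_dp} then yields $\Dp(\mu_n,\mu)\to0$, which finishes the equivalence.

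The only genuinely nontrivial step is the vague-to-weak upgrade in the direct implication — i.e.\ excluding escape of mass to ``infinity'' in $\upperdiag$ — and this is precisely where the hypothesis $\Pers_p(\mu_n)\to\Pers_p(\mu)$ (equivalently, convergence of the total masses $\mu_n^p(\upperdiag)$) is essential; the remark following Theorem \ref{thm:conv_dp} shows it cannot be dropped. Everything else is routine rewriting via the change of measure.
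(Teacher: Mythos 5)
Your proof is correct and follows essentially the same route as the paper's: reduce via Theorem \ref{thm:conv_dp}, transfer vague convergence through the change of measure $\dd\mu^p = d(\cdot,\thediag)^p\,\dd\mu$, and then identify weak convergence of the finite measures $\mu_n^p$ with vague convergence plus convergence of total masses. The only difference is that the paper simply cites this last equivalence (Proposition \ref{prop:weak_vague} in the appendix), whereas you re-derive it inline with a correct Urysohn/tightness argument.
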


\begin{proof}
Consider $\mu, \mu_1, \mu_2, \dots \in \MM^p$ and assume that $\Dp(\mu_n, \mu) \to 0$. By Theorem \ref{thm:conv_dp}, this is equivalent to $\mu_n \cvvague \mu$ and $\mu^{(p)}_n(\Omega) = \Pers_p(\mu_n) \to \Pers_p(\mu) = \mu^{(p)}(\Omega)$. Since for any continuous function $f$ compactly supported, the map $x \mapsto d(x, \thediag)^p f(x)$ is also continuous and compactly supported, $\mu_n \cvvague \mu$ implies $\mu^{(p)}_n \cvvague \mu^{(p)}$. Likewise, the map $x \mapsto d(x, \thediag)^{-p} f(x)$ is continuous and compactly supported, so that $\mu^{(p)}_n \cvvague \mu^{(p)}$ also implies $\mu_n \cvvague \mu$. Hence, $\mu_n \cvvague \mu$ is equivalent to  $\mu^{(p)}_n \cvvague \mu^{(p)}$. By Proposition \ref{prop:weak_vague}, the vague convergence $\mu^{(p)}_n \cvvague \mu^{(p)}$ along with the convergence of the masses is equivalent to $\mu^{(p)}_n \cvweak \mu^{(p)}$. 
\end{proof} 

We end this section with a characterization of relatively compact sets in $(\MM^p, \Dp)$.
\begin{proposition} A set $F$ is relatively compact in $(\MM^p,\Dp)$ if and only if the set $\{\mu^{(p)},\mu \in F \}$ is tight and $\sup_{\mu \in F} \Pers_p(\mu) < \infty$.
\label{prop:prokhorov}
\end{proposition}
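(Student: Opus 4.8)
The plan is to reduce this to the classical Prokhorov theorem applied to the weighted measures $\mu^p$ introduced in Corollary~\ref{cor:Dp_to_weak_cv}, using the equivalence $\Dp(\mu_n,\mu)\to 0 \Leftrightarrow \mu_n^p \cvweak \mu^p$ established there. The key observation is that the map $\mu \mapsto \mu^p$ is a bijection from $\MM^p$ onto the space $\MM_f(\Omega)$ of finite measures on $\Omega$ (the inverse being $\nu \mapsto \nu_p$, where $\dd\nu_p(x) = d(x,\thediag)^{-p}\dd\nu(x)$), and by the argument in the proof of Corollary~\ref{cor:Dp_to_weak_cv} it is a homeomorphism from $(\MM^p,\Dp)$ onto $(\MM_f(\Omega), \cvweak)$: indeed $\Dp(\mu_n,\mu)\to 0$ iff $\mu_n^p\cvweak\mu^p$. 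Hence $F\subset \MM^p$ is relatively compact in $(\MM^p,\Dp)$ if and only if $\{\mu^p : \mu\in F\}$ is relatively compact in $(\MM_f(\Omega),\cvweak)$.

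First I would make the homeomorphism statement precise: the bijection $\mu\leftrightarrow\mu^p$ is clear from~\eqref{eq:def_mu^p} since $d(\cdot,\thediag)^p > 0$ on $\Omega$, and $\Pers_p(\mu) = \mu^p(\Omega)$. For the topology, one direction is exactly Corollary~\ref{cor:Dp_to_weak_cv}; conversely, if a sequence $\nu_n=\mu_n^p$ in $\{\mu^p:\mu\in F\}$ converges weakly, its limit $\nu$ has a preimage $\mu := \nu_p \in \MM^p$ (finiteness of $\Pers_p(\mu)=\nu(\Omega)$ is automatic), and $\nu_n\cvweak\nu$ gives $\Dp(\mu_n,\mu)\to 0$ again by Corollary~\ref{cor:Dp_to_weak_cv}; so weakly convergent sequences in the image correspond to $\Dp$-convergent sequences in $F$, and relative compactness transfers. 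Since both $(\MM^p,\Dp)$ and $(\MM_f(\Omega),\cvweak)$ are metrizable (the former by Proposition~\ref{prop:MM_p_Polish}; the latter since $\Omega$ is Polish and weak convergence of finite measures is metrized e.g.\ by a Prokhorov-type or bounded-Lipschitz metric once we also track total mass), relative compactness is equivalent to sequential relative compactness on both sides, so this reduction is legitimate.

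Then I would invoke Prokhorov's theorem in the form valid for finite (not necessarily probability) measures on a Polish space: a family $G \subset \MM_f(\Omega)$ is relatively compact for the weak topology if and only if $G$ is tight and $\sup_{\nu\in G}\nu(\Omega)<\infty$. Applying this with $G=\{\mu^p:\mu\in F\}$ and recalling $\mu^p(\Omega)=\Pers_p(\mu)$ yields exactly the stated criterion: $\{\mu^p:\mu\in F\}$ is tight and $\sup_{\mu\in F}\Pers_p(\mu)<\infty$.

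The main obstacle is not any single deep step but rather being careful about the versions of the classical results used: Prokhorov's theorem is most often stated for probability measures, so I need the finite-measure version (tightness plus a uniform mass bound), which follows from the probabilistic one by normalization together with the observation that a weak limit of measures with masses in a bounded set again has finite mass — and one must handle the mass bound separately since tightness alone does not control total mass for a sequence of finite measures with masses going to infinity. The other point requiring mild care is confirming that the bijection $\mu\mapsto\mu^p$ really is a homeomorphism and not merely a sequential homeomorphism; since the target space is metrizable this distinction is harmless, but it should be stated. Everything else — that $\mu\mapsto\mu^p$ is well-defined and invertible, that $\mu^p(\Omega)=\Pers_p(\mu)$ — is immediate from the definitions.
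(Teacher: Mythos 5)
Your proposal is correct and follows essentially the same route as the paper: reduce relative compactness in $(\MM^p,\Dp)$ to relative compactness of $\{\mu^p : \mu\in F\}$ for the weak topology via Corollary~\ref{cor:Dp_to_weak_cv}, then apply Prokhorov's theorem for finite measures (which the paper already states in exactly the needed form as Proposition~\ref{prop:rel_cpt_weak}, so your caution about the probability-measure version is unnecessary here). The extra care you take with the bijection $\mu\mapsto\mu^p$ and metrizability is sound but not needed beyond what the paper records.
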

\begin{proof} From Corollary \ref{cor:Dp_to_weak_cv}, the relative compactness of a set $F \subset \MM^p$ for the metric $\Dp$ is equivalent to the relative compactness of the set $\{ \mu^{(p)} ,\ \mu \in F \}$ for the weak convergence. Recall that all $\mu^{(p)}$ have a finite mass, as $\mu^{(p)}(\Omega) = \Pers_p(\mu) < \infty$. Therefore, one can use Prokhorov's theorem (Proposition \ref{prop:rel_cpt_weak}) to conclude. 
\end{proof}

\begin{remark}
This characterization is equivalent to the one described in \cite[Theorem 21]{tda:mukherjee2011probabilitymeasure} for persistence diagrams. The notions introduced by the authors of off-diagonally birth-death boundedness, and uniformness are rephrased using the notion of tightness, standard in measure theory.
\end{remark}

We end this section with a remark on the existence of transport maps, assuming that one of the two measures has a density with respect to the Lebesgue measure on $\upperdiag$. We denote by $f_{\#}\mu$ the pushforward of a measure $\mu$ by a map $f$, defined by $f_{\#}\mu(A)=\mu(f^{-1}(A))$ for $A$ a Borel set.
\begin{remark}
	Following \cite[Corollary 2.5]{ot:figalli2010newTransportationDistance}, one can prove that if $\mu \in \MM^2$ has a density with respect to the Lebesgue measure on $\upperdiag$, then for any measure $\nu \in \MM^2$, there exists an unique optimal transport plan $\pi$ between $\mu$ and $\nu$ for the $\mathrm{OT}_2$ metric. The restriction of this transport plan to $\upperdiag\times \groundspace$ is equal to $(\id,T)_{\#}\mu$ where $T:\upperdiag \to \groundspace$ is the gradient of some convex function, whereas the transport plan restricted to $\thediag \times \upperdiag$ is given by $(s,\id)_{\#}(\nu-T_{\#}\mu)$, where $s:\upperdiag \to \thediag$ is the projection on the diagonal.
	A proof of this fact in the context of persistence measures would require to introduce various notions that are out of the scope covered by this paper. We refer the interested reader to \cite[Prop.~2.3]{ot:figalli2010newTransportationDistance} and \cite[Theorem 6.2.4]{ot:ambrosio2008gradient} for details.
\end{remark}

\subsection{Persistence measures in the finite setting}
\label{subsec:uniformly_bounded_mass}
In practice, many statistical results regarding persistence diagrams are stated for sets of diagrams with uniformly bounded number of points \cite{tda:kwitt2015statisticalTDAkernelPerspective,tda:carriere2017sliced}, and the specific properties of $\Dp$ in this setting are therefore of interest. 
Introduce for $m \geq 0$ the subset $\MM^p_{\leq m}$ of $\MM^p$ defined as $\MM^p_{\leq m} \defeq \{ \mu \in \MM^p,\ \mu(\upperdiag) \leq m \}$, and the set $\MM_f^p$ of finite persistence measures, $\MM^p_f \defeq \bigcup_{m \geq 0} \MM^p_{\leq m}$. Define similarly the set $\PD_{\leq m}$ (resp.~$\PD_f$). Note that the assumption $\Pers_p(a) < \infty$ is always satisfied for a finite diagram $a$ (which is not true for general Radon measures), so that the exponent $p$ is not needed when defining $\PD_{\leq m}$ and $\DD_f$. 
\begin{proposition}
$\MM^p_f$ (resp. $\PD_f$) is dense in $\MM^p$ (resp. $\PD^p$) for the metric $\Dp$.
\label{prop:finite-to-global}
\end{proposition}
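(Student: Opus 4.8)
The plan is to combine the approximation lemma already available (Lemma~\ref{lemma:mu_r_to_mu}) with a truncation of the measure to a compact region of $\upperdiag$, so that the remaining piece has finite mass. Recall that Lemma~\ref{lemma:mu_r_to_mu} handles the mass that accumulates near the diagonal: for $\mu \in \MM^p$, writing $A_r = \{x : d(x,\thediag) \leq r\}$ and $\mu^{(r)}$ for the restriction of $\mu$ to $\upperdiag \setminus A_r$, one has $\Dp(\mu^{(r)},\mu) \to 0$ as $r \to 0$. The only obstruction to $\mu^{(r)}$ having finite mass is that it may still carry infinite mass ``at infinity'' along the diagonal direction (points $(t_1,t_2)$ with $t_2 - t_1$ bounded below but $t_1 \to +\infty$ or $t_1 \to -\infty$). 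So the remaining task is to cut this off.

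The key steps, in order: \textbf{(1)} Fix $\epsilon > 0$. Using Lemma~\ref{lemma:mu_r_to_mu}, choose $r > 0$ so that $\Dp(\mu^{(r)},\mu) < \epsilon/2$; note $\mu^{(r)}$ is supported on the closed set $\upperdiag \setminus A_r$ where $d(\cdot,\thediag) \geq r$. \textbf{(2)} For $R > 0$, let $K_R \defeq \{x \in \upperdiag : d(x,\thediag) \geq r,\ \|x\| \leq R\}$, a compact subset of $\upperdiag$, and let $\nu_R$ be the restriction of $\mu^{(r)}$ to $K_R$. By \eqref{eq:pers_compact}, $\nu_R(\upperdiag) \leq \Pers_p(\mu)/r^p < \infty$, so $\nu_R \in \MM^p_f$. \textbf{(3)} Bound $\Dp(\nu_R, \mu^{(r)})$ by the cost of the suboptimal plan induced by the identity on $K_R$ and the projection onto $\thediag$ elsewhere, which gives $\Dp^p(\nu_R,\mu^{(r)}) \leq \int_{(\upperdiag\setminus A_r)\setminus K_R} d(x,\thediag)^p \dd\mu(x)$. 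Since this is the tail of the convergent integral $\Pers_p(\mu^{(r)}) < \infty$, the monotone (or dominated) convergence theorem gives $\Dp(\nu_R,\mu^{(r)}) \to 0$ as $R \to \infty$; pick $R$ with $\Dp(\nu_R,\mu^{(r)}) < \epsilon/2$. \textbf{(4)} Triangle inequality: $\Dp(\nu_R,\mu) < \epsilon$, proving density of $\MM^p_f$ in $\MM^p$.

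For the diagram case, the same argument applies verbatim using the corresponding statements for $d_p$ in Lemma~\ref{lemma:mu_r_to_mu} and Proposition~\ref{prop:Dp_equals_dp}: if $a \in \PD^p$, then $a^{(r)}$ restricted to the compact set $K_R$ is a \emph{finite} diagram (finitely many points, each with finite multiplicity, by \eqref{eq:pers_compact}), hence lies in $\PD_f$, and the two approximation steps give $\Dp(\,\cdot\,,a) < \epsilon$. I do not expect a genuine obstacle here; the only mild technical point is step~(3), ensuring the induced-transport-map construction of Definition~\ref{def:induced_transport_map} is legitimate when restricting to $K_R$ (it is, since the identity on $K_R$ together with projection onto $\thediag$ on the complement satisfies the balance condition), and that the tail integral indeed vanishes, which is immediate from $\Pers_p(\mu) < \infty$ and monotone convergence.
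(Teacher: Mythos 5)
Your proof is correct, but it contains an unnecessary detour born of a misconception. The paper's proof is a one-liner: it observes that $\mu^{(r)}$ \emph{already} has finite mass, so Lemma~\ref{lemma:mu_r_to_mu} alone gives density. The point you missed is that the Chebyshev-type bound \eqref{eq:pers_compact} applies to \emph{any} Borel set at positive distance from the diagonal, bounded or not: taking $A = \upperdiag \setminus A_r$ gives $\mu^{(r)}(\upperdiag) = \mu(\upperdiag\setminus A_r) \le \Pers_p(\mu)/r^p < \infty$. So the scenario you describe --- infinite mass escaping along the diagonal direction with $d(\cdot,\thediag)$ bounded below by $r$ --- is impossible for $\mu \in \MM^p$; indeed your own step~(2) proves a bound on $\nu_R(\upperdiag)$ that is uniform in $R$, which already forces $\mu^{(r)}(\upperdiag) \le \Pers_p(\mu)/r^p$. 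Consequently your second truncation (steps 2--4, cutting to the compact $K_R$) is redundant, though every step in it is valid, so the argument as written does establish the result. The same remark applies to the diagram case: $a^{(r)}$ has finite mass and hence finitely many points (this is exactly how the paper argues in the proof of Proposition~\ref{prop:Dp_equals_dp}), so it lies in $\PD_f$ without further restriction.
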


\begin{proof}
This is a straightforward consequence of Lemma \ref{lemma:mu_r_to_mu}. Indeed, if $\mu\in \MM^p$ and $r>0$, then \eqref{eq:pers_compact} implies that $\mu^{(r)}$ is of finite mass.
\end{proof}

Let $\tilde{\upperdiag} = \upperdiag \sqcup \{\thediag\}$ be the quotient of $\groundspace$ by the closed subset $\thediag$---i.e.~we encode the diagonal by just one point (still denoted by $\thediag$). The distance $d$ on $\groundspace^2$ induces naturally a function $\tilde{d}$ on $\tilde{\upperdiag}^2$, defined for $x,y \in \upperdiag$ by $\tilde{d}(x,y) = d(x,y)$, $\tilde{d}(x,\thediag) = \tilde{d}(\thediag,x) = d(x,s(x))$ and $\tilde{d}(\thediag,\thediag) = 0$. However, $\tilde{d}$ is not a distance since one can have $\tilde{d}(x,y) > \tilde{d}(x,\thediag) + \tilde{d}(y,\thediag)$. Define 
\begin{equation}
	\rho(x,y) \defeq \min \{ \tilde{d}(x,y), \tilde{d}(x,\thediag) + \tilde{d}(y,\thediag)\}.
\label{eq:def_rho}
\end{equation}
It is straightforward to check that $\rho$ is a distance on $\tilde{\upperdiag}$ and that $(\tilde{\upperdiag},\rho)$ is a Polish space. One can then define the Wasserstein distance $W_{p,\rho}$ with respect to $\rho$ for finite measures on $\tilde{\upperdiag}$ which have the same masses, that is the infimum of $\tilde{C}_p(\tilde{\pi}) \defeq \iint_{\tilde{\upperdiag}^2} \rho(x,y)^p \dd \tilde{\pi}(x,y)$, for $\tilde{\pi}$ a transport plan with corresponding marginals (see Section \ref{subsec:wasserstein}). The following theorem states that {the problem of computing the $\Dp$ metric between two persistence measures with finite masses can be turn into the one of computing the Wasserstein distances between two measures supported on $\tilde{\upperdiag}$ with the same mass. For the sake of simplicity, we assume in the following that $d(x,y) = \|x-y\|_q$ with $q > 1$, ensuring that the quantity $\argmin_{y \in \thediag} d(x,y)$ is reduced to the orthogonal projection $s(x)$ of $x$ onto the diagonal $\thediag$. The following result could be seamlessly adapted to the case $q=1$.

\begin{proposition}\label{prop:Dp_equals_Wp_finite}
Let $\mu,\nu \in \MM_f^p$ and $r\geq \mu(\Omega) + \nu(\Omega)$. Define $\tilde{\mu} = \mu + (r-\mu(\Omega))\delta_{\thediag}$ and $\tilde{\nu} = \nu + (r-\nu(\Omega))\delta_{\thediag}$. Then $\Dp(\mu,\nu) = W_{p,\rho}(\tilde{\mu},\tilde{\nu})$.
\end{proposition}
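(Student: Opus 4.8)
The plan is to construct explicit maps that turn an admissible plan $\pi\in\Adm(\mu,\nu)$ into a transport plan $\tilde\pi$ between $\tilde\mu$ and $\tilde\nu$ (in the sense of equal-mass Wasserstein transport on $\tilde\upperdiag$) with no greater cost, and conversely, thereby establishing both inequalities $\Dp(\mu,\nu)\le W_{p,\rho}(\tilde\mu,\tilde\nu)$ and $W_{p,\rho}(\tilde\mu,\tilde\nu)\le \Dp(\mu,\nu)$. The key observation underlying both directions is that $\rho$ on $\tilde\upperdiag$ is exactly the ``greedy'' cost that allows mass to travel through the diagonal: $\rho(x,y)=\min\{\tilde d(x,y),\tilde d(x,\thediag)+\tilde d(y,\thediag)\}$, and — as recalled in the proof of \Cref{thm:conv_dp} via \cite[Prop.~2.3]{ot:figalli2010newTransportationDistance} — any optimal $\pi\in\Opt_p(\mu,\nu)$ is supported on $\{(x,y):d(x,y)^p\le d(x,\thediag)^p+d(y,\thediag)^p\}$, so on the support of an optimal plan $d(x,y)$ and $\rho(x,y)$ agree. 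The padding mass $r-\mu(\Omega)$ and $r-\nu(\Omega)$ at $\thediag$ is chosen large enough ($r\ge\mu(\Omega)+\nu(\Omega)$) that whatever mass $\mu$ sends to the diagonal in $\pi$, and whatever mass $\nu$ receives from the diagonal, can be absorbed/supplied by this reservoir of diagonal mass in $\tilde\pi$ without overflow.

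First I would prove $W_{p,\rho}(\tilde\mu,\tilde\nu)\le \Dp(\mu,\nu)$. Take $\pi\in\Opt_p(\mu,\nu)$ (nonempty by \Cref{prop:basic_prop}); we may assume $\pi(\thediag\times\thediag)=0$. Define $\tilde\pi$ on $\tilde\upperdiag\times\tilde\upperdiag$ by: on $\upperdiag\times\upperdiag$, set $\tilde\pi$ equal to the pushforward of $\pi$ restricted to $\upperdiag\times\upperdiag$; the mass $\pi(A\times\thediag)$ for $A\subset\upperdiag$ becomes mass from $A$ to the point $\thediag$; similarly $\pi(\thediag\times B)$ becomes mass from $\thediag$ to $B$; and the remaining mass needed to make the marginals of $\tilde\pi$ equal to $\tilde\mu$ and $\tilde\nu$ (both of total mass $r$) is placed as a Dirac-type atom at $(\thediag,\thediag)$, which has cost $0$. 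One checks the first marginal is $\mu$ on $\upperdiag$ plus the correct leftover at $\thediag$, which equals $\tilde\mu$ by the choice of $r$, and likewise for the second; the cost of $\tilde\pi$ is $\iint d(x,y)^p\,\dd\pi$ restricted to the relevant pieces, and since $\rho\le\tilde d\le d$ on each relevant region, $\tilde C_p(\tilde\pi)\le C_p(\pi)=\Dp^p(\mu,\nu)$.

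For the reverse inequality $\Dp(\mu,\nu)\le W_{p,\rho}(\tilde\mu,\tilde\nu)$, take an optimal $\tilde\pi\in\Pi(\tilde\mu,\tilde\nu)$ for the cost $\rho^p$. Here is where the main obstacle lies: a unit of mass transported from $x\in\upperdiag$ to $y\in\upperdiag$ at cost $\rho(x,y)$ may be ``cheaper than the direct route'' precisely because $\rho(x,y)=\tilde d(x,\thediag)+\tilde d(y,\thediag)<d(x,y)$; to build a valid $\pi\in\Adm(\mu,\nu)$ with cost controlled by $\tilde C_p(\tilde\pi)$, I must reroute such mass through the diagonal — sending $x$'s mass to $s(x)\in\thediag$ and supplying $y$'s mass from $s(y)\in\thediag$ — and split the original Wasserstein coupling accordingly. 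Concretely, disintegrate $\tilde\pi$ and partition each ``fiber'' into the part where $\tilde d(x,y)\le \tilde d(x,\thediag)+\tilde d(y,\thediag)$ (kept as direct transport, cost unchanged since $\rho=\tilde d=d$ there) and the complementary part (rerouted via $\thediag$, where the new cost $d(x,s(x))^p+d(s(y),y)^p=\tilde d(x,\thediag)^p+\tilde d(y,\thediag)^p$ must be compared with $\rho(x,y)^p=(\tilde d(x,\thediag)+\tilde d(y,\thediag))^p$); for $p\ge 1$ one has $a^p+b^p\le(a+b)^p$, so the rerouted cost is no larger. The mass at $\thediag$ in $\tilde\mu$ and $\tilde\nu$ (the padding) is simply discarded / sent to $\thediag$, contributing nothing to $\Adm(\mu,\nu)$ since there are no constraints on $\pi(\thediag\times\thediag)$. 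Assembling these pieces gives $\pi\in\Adm(\mu,\nu)$ with $C_p(\pi)\le\tilde C_p(\tilde\pi)=W_{p,\rho}^p(\tilde\mu,\tilde\nu)$, completing the proof. The careful measure-theoretic bookkeeping of the rerouting — done cleanly via disintegration of $\tilde\pi$ with respect to its first and second marginals, or via an explicit gluing lemma — is the part that requires the most attention, though it is conceptually routine; the inequality $a^p+b^p\le(a+b)^p$ for $p\ge1$ and the support property of optimal plans are the two analytic facts that make everything fit together.
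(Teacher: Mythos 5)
Your proposal is correct and follows essentially the same route as the paper: one direction via the map that pushes an admissible plan to a coupling of $\tilde\mu,\tilde\nu$ (the paper's $\iota$, using $\rho\le d$ and the padding $r\ge\mu(\Omega)+\nu(\Omega)$ to absorb the diagonal mass), and the other via the split of $\tilde\upperdiag^2$ into the region where $\rho=\tilde d$ and its complement, rerouting the latter through $\thediag$ with the inequality $a^p+b^p\le(a+b)^p$ (the paper's sets $H$, $H^c$ in Lemma \ref{lem:Dp_smaller_than_W_p}, followed by the map $\kappa$). The only cosmetic difference is that the paper first shows $W_{p,\rho}^p$ equals the infimum of the $d^p$-cost over $\Pi(\tilde\mu,\tilde\nu)$ and then converts a minimizer, whereas you fold the rerouting and the conversion to $\Adm(\mu,\nu)$ into one step.
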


Before proving Proposition \ref{prop:Dp_equals_Wp_finite}, we need the two following lemmas:
\begin{lemma}\label{lemma:kappa_iota}
Let $\mu,\nu \in \MM^p_f$ and $r\geq \max(\mu(\upperdiag),\nu(\upperdiag))$. Let $\tilde{\mu} \defeq \mu + (r-\mu(\upperdiag))\delta_{\thediag}$, $\tilde{\nu} \defeq \nu + (r-\nu(\upperdiag))\delta_{\thediag}$ and $s: \upperdiag \to \thediag$ be the orthogonal projection on the diagonal. 
\begin{enumerate}
\item Define $T(\mu,\nu)$ the set of plans $\pi \in \Adm(\mu,\nu)$ satisfying $\pi(\{(x,y)\in \upperdiag \times \thediag,\ y\neq s(x) \} ) = \pi(\{(x,y)\in \thediag \times \upperdiag,\ x \neq s(y) \} ) = 0$ along with $\pi(\thediag \times \thediag) = 0$. Then, $\Opt_p(\mu,\nu) \subset T(\mu,\nu)$. \label{it:opt_subset_t}
\item Let $\pi \in T(\mu,\nu)$ be such that $\mu(\Omega)  +  \pi(\thediag\times \upperdiag)  \leq  r$. Define  $\iota(\pi) \in \Pi(\tilde{\mu},\tilde{\nu})$ by, for Borel sets $A, B \subset \upperdiag$,
\begin{equation}
\begin{cases}
	\iota(\pi)(A \times B) = \pi(A \times B),\\
	\iota(\pi) (A \times \{\thediag\}) = \pi(A \times \thediag), \\
	\iota(\pi)(\{ \thediag \}\times B) = \pi(\thediag \times B), \\
	\iota(\pi)(\{\thediag \} \times \{\thediag \}) = r - \mu(\Omega) - \pi(\thediag \times \upperdiag) \geq 0.
\end{cases}
\label{eq:pi_to_pi_tilde}
\end{equation}
Then, $C_p(\pi) = \iint_{\tilde{\upperdiag} \times \tilde{\upperdiag}} \tilde d(x,y)^p \dd \iota(\pi)(x,y)$. \label{it:iota}
\item Let $\tilde{\pi} \in  \Pi(\tilde{\mu},\tilde{\nu})$. Define $\kappa(\tilde{\pi}) \in T(\mu,\nu)$  by,
\[
	\begin{cases}
		\kappa(\tilde{\pi})(A \times B) = \tilde{\pi}(A \times B) & \text{ for } A, B \subset \upperdiag, \\
		\kappa(\tilde{\pi})(A \times B) = \tilde{\pi}((A \cap s^{-1} (B)) \times \{\thediag\}) & \text{ for } A \subset \upperdiag, B \subset \thediag, \\
		\kappa(\tilde{\pi})(A \times B) = \tilde{\pi}(\{\thediag\} \times (B \cap s^{-1} (A))) & \text{ for  } A \subset \thediag, B \subset \upperdiag,  \\
		\kappa(\tilde{\pi})(\thediag, \thediag) = 0.
	\end{cases}
\]
Then, $\iint_{\tilde{\upperdiag} \times \tilde{\upperdiag}} \tilde d(x,y)^p \dd \tilde{\pi}(x,y) = C_p(\kappa(\tilde{\pi}))$. \label{it:kappa}
\end{enumerate}
\end{lemma}

\begin{proof}~
\begin{enumerate}
\item Consider $\pi \in \Adm(\mu,\nu)$, and define $\pi'$ that coincides with $\pi$ on $\upperdiag \times \upperdiag$, and is such that we enforce mass transported on the diagonal to be transported on its orthogonal projection: more precisely, for all Borel set $A \subset \upperdiag$, $B \subset \thediag$, $\pi'(A \times B) = \pi((s^{-1}(B)\cap A) \times B)$ and $\pi'(B \times A) = \pi(B \times (s^{-1}(B)\cap A) )$. Note that $\pi' \in T(\mu,\nu)$. Since $s(x)$ is the unique minimizer of $y \mapsto d(x,y)^p$, it follows that $C_p(\pi') \leq C_p(\pi)$, with equality if and only if $\pi \in T(\mu,\nu)$, and thus $\Opt_p(\mu,\nu) \subset T(\mu,\nu)$. 
\item Write $\tilde{\pi} = \iota(\pi)$. The mass $\tilde{\pi}(\{\thediag \} \times \{\thediag \})$ is nonnegative by definition. One has for all Borel sets $A \subset \upperdiag$,
\begin{align*}
\tilde{\pi}(A \times \tilde{\Omega}) &= \tilde{\pi}(A \times \upperdiag) + \tilde{\pi}(A \times \{\thediag\})\\
&= \pi(A \times \upperdiag) + \pi(A \times \thediag) = \pi(A \times \groundspace) = \mu(A) = \tilde{\mu}(A).
\end{align*} 
Similarly, $\tilde{\pi}(\tilde{\upperdiag} \times B) = \tilde{\nu}(B)$ for all $B \subset \upperdiag$. Observe also that
\[ \tilde{\pi}(\{\thediag\} \times \tilde{\upperdiag}) = \tilde{\pi} (\{\thediag\} \times \{\thediag\}) + \tilde{\pi} (\{\thediag\} \times \upperdiag) = r - \mu(\upperdiag) = \tilde{\mu}(\{\thediag\}). \]
Similarly, $\tilde{\pi}(\tilde{\upperdiag} \times \{\thediag\} ) = \tilde{\nu}(\{\thediag\})$. It gives that $\iota(\pi) \in \Pi(\tilde{\mu},\tilde{\nu})$, so that $\iota$ is well defined. Observe that 
\begin{align*}
	\iint_{\tilde{\upperdiag} \times \tilde{\upperdiag}} \tilde d(x,y)^p \dd \tilde{\pi}(x,y) &= \iint_{\upperdiag \times \upperdiag} d(x,y)^p \dd \pi(x,y) \\
&\quad+ \int_\upperdiag d(x,\thediag)^p \dd \pi(x,\thediag) \\
&\quad+\int_\upperdiag d(\thediag,y)^p \dd \pi(\thediag,y) + 0 \\
	&= C_p(\pi)\text{ as }\pi\in T(\mu,\nu).
\end{align*}
\item Write $\pi = \kappa(\tilde{\pi})$. For $A\subset  \Omega$ a Borel set,
\begin{align*}
\pi(A \times \groundspace)  &= \pi(A\times \upperdiag) + \pi(A\times \thediag) \\
&=\tilde{\pi}(A\times  \upperdiag)+ \tilde{\pi}(A\times \{\thediag\}) = \tilde{\pi}(A \times \tilde{\Omega}) = \mu(A).
\end{align*} 
Similarly, $\pi(\groundspace \times B) = \nu(B)$ for all $B \subset \upperdiag$. Therefore, $\pi\in \Adm(\mu,\nu)$, and by construction, if a point $x\in \Omega$ is transported on $\thediag$, it is transported on $s(x)$, so that $\pi \in T(\mu,\nu)$. Observe that $\mu(\Omega) +  \pi(\thediag \times  \Omega) \leq  \tilde{\pi}(\tilde{\Omega}\times \tilde{\Omega})=r$, so that $\iota(\pi)$ is well defined. Also, $\iota(\pi) = \tilde{\pi}$, so that, according to point \ref{it:iota}, $C_p(\pi) = \iint_{\tilde{\upperdiag} \times \tilde{\upperdiag}} \tilde d(x,y)^p \dd \tilde{\pi}(x,y)$.
\end{enumerate} 
\end{proof}

We show that the inequality $\Dp(\mu,\nu) \leq  W_{p,\rho}(\tilde{\mu},\tilde{\nu})$ holds as long as $r\geq \max(\mu(\Omega),\nu(\Omega))$.
\begin{lemma}\label{lem:Dp_smaller_than_W_p} Let $\mu,\nu \in \MM^p_f$ and $r\geq \max(\mu(\upperdiag),\nu(\upperdiag))$. Let $\tilde{\mu} \defeq \mu + (r-\mu(\upperdiag))\delta_{\thediag}$, $\tilde{\nu} \defeq \nu + (r-\nu(\upperdiag))\delta_{\thediag}$. Then, $\Dp(\mu,\nu) \leq W_{p,\rho}(\tilde{\mu},\tilde{\nu})$.
\end{lemma}
\begin{proof}
Let $\tilde{\pi} \in \Pi(\tilde{\mu}, \tilde{\nu})$. Define the set $H \defeq \{(x,y)\in \tilde{\upperdiag}^2,\ \rho(x,y) = d(x,y) \}$,
and let $H^c$ be its complementary set in $\tilde{\upperdiag}^2$, i.e.~the set where $\rho(x,y) = d(x,\thediag) + d(\thediag,y)$.
Define $\tilde{\pi}' \in \MM(\tilde{\upperdiag}^2)$ by, for Borel sets $A,B \subset \upperdiag$:
\[\begin{cases}
\tilde{\pi}'(A\times B) = \tilde{\pi}((A\times B)\cap H) \\
\tilde{\pi}'(A \times \{\thediag \}) = \tilde{\pi}((A \times \tilde{\upperdiag})\cap H^c) + \tilde{\pi}(A \times  \{\thediag \})\\
\tilde{\pi}'(\{\thediag \}\times B) = \tilde{\pi}((\tilde{\upperdiag} \times B)\cap H^c) + \tilde{\pi}(\{\thediag\}\times B).
\end{cases}\]
We easily check that $\tilde{\pi}' \in \Pi(\tilde{\mu},\tilde{\nu})$. Also, using $(a+b)^p\geq a^p+b^p$ for positive $a,b$, we have
\begin{align*}
\iint_{\tilde{\upperdiag} \times \tilde{\upperdiag}} \rho(x,y)^p \dd \tilde{\pi}(x,y) &= \iint_{H} \tilde d(x,y)^p \dd \tilde{\pi}(x,y) \\
&\quad+ \iint_{H^c} (\tilde d(x,\thediag)+\tilde d(\thediag, y))^p \dd \tilde{\pi}(x,y) \\
&\geq \iint_{H}\tilde d(x,y)^p \dd \tilde{\pi}'(x,y) \\
&\quad+ \iint_{H^c} \left(\tilde d(x,\thediag)^p + \tilde d(y,\thediag)^p \right) \dd \tilde{\pi}(x,y) \\
&= \iint_{\tilde{\upperdiag} \times \tilde{\upperdiag}} d(x,y)^p \dd \tilde{\pi}'(x,y) \\
&\geq \inf_{\tilde{\pi}' \in \Pi(\tilde{\mu}, \tilde{\nu})} \iint_{\tilde{\upperdiag} \times \tilde{\upperdiag}}\tilde d(x,y)^p \dd \tilde{\pi}'(x,y).
\end{align*} 
We conclude by taking the infimum on $\tilde{\pi}$ that 
\[ W_{p,\rho}(\tilde{\mu}, \tilde{\nu}) \geq \inf_{\tilde{\pi}' \in \Pi(\tilde{\mu}, \tilde{\nu})} \iint_{\tilde{\upperdiag} \times \tilde{\upperdiag}} \tilde d(x,y)^p \dd \tilde{\pi}'(x,y).\] 
Since $\rho(x,y) \leq \tilde d(x,y)$, it follows that 
\begin{equation} W_{p,\rho}^p(\tilde{\mu},\tilde{\nu}) = \inf_{\tilde{\pi} \in \Pi(\tilde{\mu}, \tilde{\nu})} \iint_{\tilde{\upperdiag}^2} \tilde d(x,y)^p \dd \tilde{\pi}(x,y).
\label{eq:from_rho_to_d}
\end{equation}
Since $\tilde d$ is continuous, the infimum in the right hand side of \eqref{eq:from_rho_to_d} is reached \cite[Theorem 4.1]{ot:villani2008optimal}. Consider thus $\tilde{\pi} \in \Pi(\tilde{\mu}, \tilde{\nu})$ which realizes the infimum. We can write, using Lemma \ref{lemma:kappa_iota},
\begin{align*}
	W_{p,\rho}^p(\tilde{\mu}, \tilde{\nu}) &=\iint_{\tilde{\upperdiag}^2}\tilde d(x,y)^p \dd \tilde{\pi}(x,y) = \iint_{\groundspace \times \groundspace} d(x,y)^p \dd \kappa(\tilde{\pi})(x,y) \\
								   &\geq \inf_{\pi \in T(\mu, \nu)} \iint_{\groundspace \times \groundspace} d(x,y)^p \dd \pi(x,y) = \Dp^p(\mu, \nu),
\end{align*}
which concludes the proof. 
\end{proof}

\begin{proof}[Proof of Proposition \ref{prop:Dp_equals_Wp_finite}]
Let $\pi \in T(\mu,\nu)$. As $\mu(\Omega) + \pi(\thediag \times \upperdiag) \leq \mu(\Omega) +  \nu(\Omega) \leq r$, one can define $\tilde{\pi} = \iota(\pi)$. Since $\rho(x,y) \leq \tilde d(x,y)$, we have $\tilde{C}_p(\tilde{\pi}) \leq \iint \tilde d(x,y)^p \dd \tilde{\pi}(x,y) = C_p(\pi)$ (Lemma \ref{lemma:kappa_iota}). Taking infimum gives $W_{p,\rho}(\tilde{\mu}, \tilde{\nu}) \leq \Dp(\mu,\nu).$ The other inequality holds according to Lemma \ref{lem:Dp_smaller_than_W_p}. 
\end{proof}

\begin{remark}
The starting idea of this theorem---informally,``adding the mass of one diagram to the other and vice-versa"---is known in TDA as a \emph{bipartite graph matching} \cite[Ch.~VIII.4]{tda:edelsbrunner2010computational} and used in practical computations \cite{tda:kerber2017geometryHelps}. Here, Proposition \ref{prop:Dp_equals_Wp_finite} states that solving this bipartite graph matching problem can be formalized as computing a Wasserstein distance on the metric space $(\tilde{\Omega}, \rho)$ and as such, makes sense (and remains true) for more general measures.
\end{remark}

\begin{remark} Proposition \ref{prop:Dp_equals_Wp_finite} is useful for numerical purposes since it allows us in applications, when dealing with a finite set of finite measures (in particular diagrams), to directly use the various tools developed in computational optimal transport \cite{ot:CuturiPeyre2017COT} to compute Wasserstein distances. This alternative to the combinatorial algorithms considered in the literature \cite{tda:kerber2017geometryHelps,tda:turner2014frechet} is studied in details in \cite{tda:lacombe2018large}. This result is also helpful to prove the existence of $p$-Fr\'echet means of sets of persistence measures (see Section \ref{sec:barycenter}).
\end{remark}

\subsection{The $\Dinf$ distance}
\label{subsec:bottleneck}

In classical optimal transport, the $\infty$-Wasserstein distance is known to have a much more erratic behavior than its $p<\infty$ counterparts \cite[Section 5.5.1]{otam}. However, in the context of persistence diagrams, the $d_\infty$ distance defined in Eq.~\eqref{eq:dgm_bottleneck_metric} appears naturally as an interleaving distance between persistence modules and satisfies strong stability results: it is thus worthy of interest. It also happens that, when restricted to diagrams having some specific finiteness properties, most irregular behaviors are suppressed and a convenient characterization of convergence exists. 
\begin{definition}
Let $\supp(\mu)$ denote the support of a measure $\mu$ and define $\Pers_\infty(\mu) \defeq \sup \{ d(x, \thediag) ,\ x \in \supp(\mu) \}$. Let
\begin{equation}
	\MM^\infty \defeq \{ \mu \in \MM ,\ \Pers_\infty(\mu) < \infty \} \quad \text{ and } \quad \PD^\infty \defeq \PD \cap \MM^\infty.
\end{equation}
For $\mu,\nu \in \MM^\infty$ and $\pi \in \Adm(\mu,\nu)$, let $C_\infty(\pi) \defeq \sup \{ d(x,y),\ (x,y) \in \supp(\pi)\}$ and let
\begin{equation} \label{eq:def_dinfty}
\Dinf(\mu,\nu) \defeq  \inf_{\pi \in \Adm(\mu,\nu)} C_\infty(\pi).
\end{equation}
The set of transport plans minimizing \eqref{eq:def_dinfty} is denoted by $\Opt_\infty(\mu,\nu)$.
\end{definition}

Recall that $E_\upperdiag=(\groundspace\times \groundspace)\backslash(\thediag\times \thediag)$.
\begin{proposition}\label{prop:basic_infty} Let $\mu, \nu \in \MM^\infty$. For the VM topology on $E_\upperdiag$,
\begin{itemize}
\item the map $\pi \in \Adm(\mu,\nu) \mapsto C_\infty(\pi)$ is lower semi-continuous.
\item The set $\Opt_\infty(\mu,\nu)$ is a non-empty sequentially compact set.
\item $\Dinf$ is lower semi-continuous.
\end{itemize}
Moreover, $\Dinf$ is a metric on $\MM^\infty$.
\end{proposition}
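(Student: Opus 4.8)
The plan is to mirror the structure of the proof of Proposition \ref{prop:basic_prop}, replacing the integral cost $C_p$ by the sup-cost $C_\infty$, and to exploit the already-established sequential compactness of $\Adm(\mu,\nu)$ for the vague topology on $E_\upperdiag$. First I would prove lower semi-continuity of $\pi \mapsto C_\infty(\pi)$. Fix $\pi_n \cvvague \pi$ in $\Adm(\mu,\nu)$ and let $\ell \defeq \liminf_n C_\infty(\pi_n)$; I want $C_\infty(\pi) \leq \ell$. Suppose not: then there is $(x_0,y_0) \in \supp(\pi)$ with $d(x_0,y_0) > \ell + 2\eta$ for some $\eta > 0$, and an open neighborhood $U \times V$ of $(x_0,y_0)$, relatively compact in $E_\upperdiag$, on which $d(x,y) > \ell + \eta$. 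Since $\pi(U\times V) > 0$ and $U\times V$ is open, vague convergence (via the Portmanteau-type statement recalled in Appendix \ref{subsec:background_measures}) gives $\liminf_n \pi_n(U\times V) \geq \pi(U\times V) > 0$, so for infinitely many $n$ the plan $\pi_n$ charges $U\times V$, whence $C_\infty(\pi_n) > \ell + \eta$ for those $n$; passing to a subsequence realizing the $\liminf$ contradicts $\liminf_n C_\infty(\pi_n) = \ell$. (One subtlety to check: $C_\infty(\pi)$ is finite because $\mu,\nu \in \MM^\infty$ means $\Pers_\infty(\mu), \Pers_\infty(\nu) < \infty$, and any optimal-type plan is supported on pairs $(x,y)$ with $d(x,y) \leq d(x,\thediag) + d(y,\thediag)$, so $\Dinf(\mu,\nu) \leq \Pers_\infty(\mu) + \Pers_\infty(\nu) < \infty$.)

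Next, existence of an optimal plan: take a minimizing sequence $(\pi_n)_n$ with $C_\infty(\pi_n) \to \Dinf(\mu,\nu)$; by the sequential compactness of $\Adm(\mu,\nu)$ (Proposition \ref{prop:basic_prop}), extract a vaguely convergent subsequence $\pi_n \cvvague \pi \in \Adm(\mu,\nu)$, and lower semi-continuity gives $C_\infty(\pi) \leq \liminf_n C_\infty(\pi_n) = \Dinf(\mu,\nu)$, so $\pi \in \Opt_\infty(\mu,\nu)$, which is therefore non-empty. For sequential compactness of $\Opt_\infty(\mu,\nu)$: it is a subset of the sequentially compact set $\Adm(\mu,\nu)$, so any sequence in it has a vaguely convergent subsequence with limit $\pi \in \Adm(\mu,\nu)$; lower semi-continuity of $C_\infty$ forces $C_\infty(\pi) \leq \Dinf(\mu,\nu)$, hence $\pi \in \Opt_\infty(\mu,\nu)$, and the set is closed in $\Adm(\mu,\nu)$, hence sequentially compact. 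Lower semi-continuity of $\Dinf$ itself follows the same template as in Proposition \ref{prop:basic_prop}: given $\mu_n \cvvague \mu$, $\nu_n \cvvague \nu$, pick $\pi_n \in \Opt_\infty(\mu_n,\nu_n)$; these lie in a vaguely relatively compact family (same mass-bound argument on compacts as in the proof of Theorem \ref{thm:conv_dp}), extract $\pi_{n_k} \cvvague \pi$, check $\pi \in \Adm(\mu,\nu)$ by passing to the limit in the marginal constraints against compactly supported test functions, and conclude $\Dinf(\mu,\nu) \leq C_\infty(\pi) \leq \liminf_k C_\infty(\pi_{n_k}) = \liminf_k \Dinf(\mu_{n_k},\nu_{n_k})$, then argue the full $\liminf$ along $(\mu_n,\nu_n)$ is attained along a subsequence.

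Finally, that $\Dinf$ is a distance: symmetry is clear since $C_\infty$ is symmetric under $(x,y)\mapsto(y,x)$ and $\Adm(\mu,\nu)$, $\Adm(\nu,\mu)$ correspond under the flip. For $\Dinf(\mu,\nu)=0 \Rightarrow \mu=\nu$, an optimal plan $\pi$ with $C_\infty(\pi)=0$ is supported on $\{(x,x)\}$, which forces the two marginals on $\upperdiag$ to agree, i.e. $\mu=\nu$ (the diagonal carries no constraint). For the triangle inequality, given $\mu,\nu,\lambda \in \MM^\infty$ and optimal (or near-optimal) plans $\pi_{12} \in \Opt_\infty(\mu,\nu)$, $\pi_{23} \in \Opt_\infty(\nu,\lambda)$, I would glue them along $\nu$: disintegrate both with respect to their $\nu$-marginal on $\overline{\upperdiag}$ and form the composition plan $\pi_{13}$ with marginals $\mu$ and $\lambda$, using that $\nu$ is the shared marginal on $\upperdiag$ while the diagonal is unconstrained so the gluing can be completed there. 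Then $d(x,z) \leq d(x,y) + d(y,z) \leq C_\infty(\pi_{12}) + C_\infty(\pi_{23})$ holds $\pi_{13}$-almost everywhere, so $\Dinf(\mu,\lambda) \leq C_\infty(\pi_{13}) \leq \Dinf(\mu,\nu) + \Dinf(\nu,\lambda)$. The main obstacle I anticipate is handling the gluing carefully at the diagonal $\thediag$: because admissible plans have no constraint on $\partial\upperdiag \times \partial\upperdiag$ and mass can be created/destroyed there, one must be a little careful that the glued plan still has the correct marginals $\mu$ and $\lambda$ on $\upperdiag$; the clean way is to first normalize $\nu$ to finite mass by the truncation of Lemma \ref{lemma:mu_r_to_mu} (which is $\Dinf$-continuous by an argument analogous to that Lemma), perform the standard disintegration/gluing for the finite-mass truncations, and pass to the limit using the lower semi-continuity of $\Dinf$ just established.
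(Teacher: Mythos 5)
Your treatment of the three bullet points is correct and is essentially the paper's own argument: lower semi-continuity of $C_\infty$ follows from the Portmanteau inequality for open sets (the paper applies it directly to $U_r=\{(x,y)\in E_\upperdiag,\ d(x,y)>r\}$ for $r>\liminf_n C_\infty(\pi_n)$ to get $\pi(U_r)=0$; you argue by contradiction on a neighborhood of a point of $\supp(\pi)$, which is the same mechanism), and then non-emptiness and sequential compactness of $\Opt_\infty(\mu,\nu)$ and lower semi-continuity of $\Dinf$ are obtained exactly as in Proposition \ref{prop:basic_prop}. Symmetry and the separation axiom are also handled as in the paper.

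The gap is in the triangle inequality. First, the proposed reduction to finite mass via the truncation of Lemma \ref{lemma:mu_r_to_mu} fails in $\MM^\infty$: for $p<\infty$ the bound \eqref{eq:pers_compact} guarantees that $\mu^{(r)}$ has finite mass, but a measure in $\MM^\infty$ such as $\sum_{i\in\N}\delta_{(i,i+1)}$ has infinite mass located at distance exactly $\sqrt{2}/2$ from $\thediag$, so $\mu^{(r)}$ is still of infinite mass for small $r$. Second, and more importantly, finiteness of mass is not the obstruction to gluing: the obstruction is that the second marginal of $\pi_{12}$ and the first marginal of $\pi_{23}$ agree only on $\upperdiag$ (both equal $\nu$ there) and may differ on $\thediag$, and this mismatch is independent of whether the measures are finite, so the truncation does not remove it. The correct fix---which is what the paper uses, via the boundary-adapted gluing lemma \cite[Lemma 2.1]{ot:figalli2010newTransportationDistance}---is to pad each plan with the other's diagonal surplus placed on $\thediag\times\thediag$ (which is unconstrained and carries zero cost) so that the two middle marginals coincide on all of $\groundspace$; ordinary disintegration then yields a three-coordinate measure $\gamma$ whose $1$--$3$ marginal is admissible for $(\mu,\lambda)$ and whose support satisfies $d(x,z)\leq d(x,y)+d(y,z)$, giving $\Dinf(\mu,\lambda)\leq C_\infty(\pi_{12})+C_\infty(\pi_{23})$. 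Your remark that ``the diagonal is unconstrained so the gluing can be completed there'' is exactly the right idea; the truncation detour that follows it is both inapplicable here and, in the limiting step (bounding $\Dinf(\mu^{(r)},\nu^{(r)})$ in terms of $\Dinf(\mu,\nu)$ without already having the triangle inequality), at risk of circularity.
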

The proofs of these results are found in Appendix \ref{sec:proofs}. 

As in the case $p < \infty$, $\Dinf$ and $d_\infty$ coincide on $\PD^ \infty$.
\begin{proposition} \label{prop:dinf_equal_Dinf}
For $a,b \in \PD^\infty$, $\Dinf(a,b) = d_\infty(a,b)$.
\end{proposition}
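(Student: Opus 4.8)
The plan is to mimic the strategy already used for $\Dp = d_p$ in Proposition~\ref{prop:Dp_equals_dp}: first dispose of the case of diagrams with finitely many points off the diagonal, then reduce the general case to this one by a truncation argument analogous to Lemma~\ref{lemma:mu_r_to_mu}. For the finite case, one shows directly that $\Adm(a,b)$ and the set of partial matchings $\Gamma(a,b)$ yield the same infimum of the $\sup$-cost. Given a partial matching $\pi \in \Gamma(a,b)$ (a bijection between $a \cup \thediag$ and $b \cup \thediag$), it induces a transport plan in $\Adm(a,b)$ supported on the graph of the matching, with the same bottleneck cost, where points matched to the diagonal are sent to their orthogonal projection (the projection only decreases $\sup\|x-y\|$). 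Conversely, given an optimal $\pi \in \Opt_\infty(a,b)$, one argues it can be taken to be supported on a set realizing a partial matching: since $a,b$ are finite, $\pi$ is a finite sum of point masses $\sum \pi_{ij}\delta_{(x_i,y_j)}$ plus diagonal terms, and by a Birkhoff–von Neumann / integrality-type argument (the transportation polytope with integer marginals has integral vertices) one can choose an integral optimal plan, which is exactly a partial matching. This gives $\Dinf(a,b) = d_\infty(a,b)$ when $a,b \in \PD^\infty$ have finite mass.

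For the general case, fix $a,b \in \PD^\infty$ and $r>0$, and set $a^{(r)}, b^{(r)}$ as in Lemma~\ref{lemma:mu_r_to_mu}, the restrictions of $a,b$ to $\{d(x,\thediag) > r\}$. By \eqref{eq:pers_compact}, since $\Pers_\infty(a) < \infty$, the set $\{x \in \supp(a), d(x,\thediag) > r\}$ is relatively compact in $\upperdiag$ and $a$ is a diagram, hence $a^{(r)}$ has finitely many points; likewise $b^{(r)}$. So $\Dinf(a^{(r)},b^{(r)}) = d_\infty(a^{(r)},b^{(r)})$ by the finite case. Then I would show both sides converge as $r \to 0$: on the left, $\Dinf(a^{(r)}, a) \leq r$ — using the transport plan induced by the identity off $A_r = \{d(x,\thediag)\le r\}$ and the projection onto $\thediag$ on $A_r$ (as in the proof of Lemma~\ref{lemma:mu_r_to_mu}), whose $\sup$-cost is at most $\sup_{x \in A_r} d(x,\thediag) \le r$ — together with the triangle inequality for $\Dinf$; on the right, the same estimate applied to $d_\infty$ directly (it is classical that $d_\infty(a^{(r)},a) \le r$). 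Passing $r \to 0$ in $|\Dinf(a,b) - \Dinf(a^{(r)},b^{(r)})| \le 2r$ and $|d_\infty(a,b) - d_\infty(a^{(r)},b^{(r)})| \le 2r$ yields $\Dinf(a,b) = d_\infty(a,b)$.

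The main obstacle is the finite case, specifically justifying that the infimum over admissible transport plans $\Adm(a,b)$ (which are continuous/measure-theoretic objects allowing mass-splitting) equals the infimum over genuine bijective partial matchings $\Gamma(a,b)$. For the $\sup$-cost this is slightly more delicate than for the $\ell^p$-cost because one cannot argue via strict convexity; instead one needs the observation that for $\sup$-type functionals the optimal value over the transportation polytope is attained at a vertex, and vertices of a transportation polytope with integer marginals are integral, hence correspond to partial matchings (after the standard reduction that sends mass between two diagonal points to cost zero and mass to the diagonal to the orthogonal projection). One should also be a little careful that the "reservoir" on the diagonal in the Figalli–Gigli formulation has infinite mass, but only finitely much of it is actually used by an optimal plan between finite diagrams, so the integrality argument applies after restricting to the finite support. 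I would double-check whether the paper intends to cite \cite{tda:lacombe2018large} here too (as it does for $\Dp$), in which case the finite case can simply be quoted and only the truncation limit argument needs to be written out.
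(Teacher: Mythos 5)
There is a genuine gap in your reduction of the general case to finite diagrams. For $a \in \PD^\infty$ the hypothesis is only that $\Pers_\infty(a) = \sup\{d(x,\thediag),\ x \in \supp(a)\} < \infty$; this bounds the \emph{persistence} of the points, not their location, so $\{x \in \supp(a),\ d(x,\thediag) > r\}$ is in general an infinite subset of the unbounded strip $\{r < d(\cdot,\thediag) \leq \Pers_\infty(a)\}$ and is not relatively compact. Equation \eqref{eq:pers_compact} cannot be invoked here: it requires $\Pers_p(a) < \infty$ for some \emph{finite} $p$, which a general element of $\PD^\infty$ need not satisfy. A concrete counterexample is $a = \sum_{i \in \N} \delta_{(i,i+1)} \in \PD^\infty$, for which $a^{(r)} = a$ has infinitely many points whenever $r < \sqrt{2}/2$. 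Worse, Proposition \ref{prop:closure_of_finiteDgm} of the paper shows that the closure of $\PD_f$ for $\Dinf$ is the \emph{strict} subset $\PD_0^\infty$, and its proof exhibits, for any $a \in \PD^\infty \setminus \PD_0^\infty$, a radius $r>0$ with $\Dinf(a',a) \geq r$ for every finite diagram $a'$. So no approximation-by-finite-diagrams scheme can reach such an $a$: your argument as written only establishes the identity on $\PD_0^\infty$ (or on $\PD^\infty \cap \PD^p$ for finite $p$), not on all of $\PD^\infty$.

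The paper avoids truncation entirely. It writes arbitrary, possibly infinite, diagrams as $a = \sum_{i \in I}\delta_{x_i}$ and $b = \sum_{j\in J}\delta_{y_j}$, reduces an admissible plan (after sending diagonal mass to orthogonal projections) to a doubly stochastic matrix $P$ indexed on $(-J\cup I)\times(-I\cup J)$, and extracts from $\supp(P)$ a genuine permutation via Hall's marriage theorem, using that any $k$ rows of a doubly stochastic matrix must be supported on at least $k$ columns; this step is insensitive to $I,J$ being infinite. Your finite-case argument (the sup-cost depends only on the support, hence is minimized at a vertex of the transportation polytope, whose vertices are integral) is sound and is morally the same extraction step, but it does not survive the passage to infinitely many points. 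To repair your proof, drop the truncation and replace the integrality argument by the Hall's-theorem argument applied directly to the support of an arbitrary admissible plan between the (possibly infinite) diagrams.
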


\begin{proof}
Consider two diagrams $a,b \in \PD^\infty$, written as $a = \sum_{i \in I} \delta_{x_i}$ and $b = \sum_{j \in J} \delta_{y_j}$, where $I,J \subset \N^*$ are (possibly infinite) sets of indices.
The marginals constraints imply that a plan $\pi \in \Adm(\mu,\nu)$ is supported on $(\{x_i\}_i \cup \thediag)\times (\{y_j\}_j \cup \thediag)$. If some of the mass $\pi(\{x_i\},\thediag)$ (resp.~$\pi(\thediag,\{y_j\})$) is sent on a point other than the projection of $x_i$ (resp.~$y_j$) on the diagonal $\thediag$, then the cost of such a plan can always be (strictly if $q>1$) reduced. Introduce the matrix $C$ indexed on $(-J \cup I) \times (-I \cup J)$ defined by
	\begin{equation}
		\begin{cases}
			C_{i,j} = d(x_i, y_j) & \text{ for } i,j > 0, \\
			C_{i,j} = d(\thediag, y_j) & \text{ for } i < 0 , j > 0, \\
			C_{i,j} = d(x_i, \thediag) & \text{ for } i > 0, j < 0, \\
			C_{i,j} = 0 & \text { for } i,j < 0.
			\end{cases}
		\label{eq:cost_matrix}
	\end{equation}
In this context, an element of $\Opt(a,b)$ can be written a matrix $P$ indexed on $(-J \cup I) \times (-I \cup J)$, and marginal constraints state that $P$ must belong to the set of doubly stochastic matrices $\SS$. Therefore, $\Dinf(a,b) = \inf_{P \in \SS} \sup \{ C_{i,j} ,\  (i,j) \in \supp(P) \}$, where $\SS$ is the set of doubly stochastic matrices indexed on $(-J \cup I) \times (-I \cup J)$, and $\supp(P)$ denotes the support of $P$, that is the set $\{(i,j),\ P_{i,j} > 0\}$. 

Let $P\in  \SS$. For any $k \in \N$, and any set of distinct indices $\{i_1, \dots ,i_k \} \subset -J\cup I$, we have
\[
	k = \sum_{k' = 1}^k \underbrace{\sum_{j \in -I\cup J} P_{i_{k'}, j}}_{=1} = \sum_{j \in -I\cup J} \underbrace{\sum_{k' = 1}^k P_{i_{k'}, j}}_{\leq 1}.
\]
Thus, the cardinality of $\{j ,\ \exists k' \text{ such that } (i_{k'},j) \in \supp(P) \}$ must be larger than $k$. Said differently, the marginals constraints impose that any set of $k$ points in $a$ must be matched to \emph{at least} $k$ points in $b$ (points are counted with eventual repetitions here). Under such conditions, the Hall's marriage theorem (see \cite[p.~51]{ot:hall1967combinatorial}) guarantees the existence of a permutation matrix $P'$ with $\supp(P') \subset \supp(P)$. As a consequence,
\begin{align*}
		 \sup \{ C_{i,j} ,\ (i,j) \in \supp(P) \} &\geq \sup \{ C_{i,j} ,\ (i,j) \in \supp(P') \} \\
		 &\geq \inf_{P' \in \SS'} \sup \{ C_{i,j} ,\ (i,j) \in \supp(P') \} = d_\infty(a,b),
\end{align*}
where $\SS'$ denotes the set of permutations matrix indexed on $(-J \cup I) \times (-I \cup J)$. Taking the infimum on $P \in \SS$ on the left-hand side and using that $\SS' \subset \SS$ finally gives that $\Dinf (a,b) = d_\infty(a,b)$. 
\end{proof}

\begin{proposition}
The space $(\MM^\infty, \Dinf)$ is complete.
\label{prop:d_infty_polish}
\end{proposition}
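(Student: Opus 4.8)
The plan is to show that $(\MM^\infty, \Dinf)$ is complete by a standard Cauchy-sequence argument, mimicking the usual proof that $\infty$-Wasserstein spaces are complete but adapting to the partial-transport setting. First I would take a Cauchy sequence $(\mu_n)_n$ in $(\MM^\infty, \Dinf)$. The crucial preliminary observation is that $\Dinf$ dominates $\Dp$ (up to a mass-dependent factor on compact regions) in an appropriate local sense: more precisely, for any $\mu, \nu \in \MM^\infty$ and any compact $K \subset \upperdiag$ with $d(K, \thediag) > 0$, if $\Dinf(\mu,\nu) < \delta < d(K,\thediag)$, then mass of $\mu$ lying in $K$ can only be transported within a $\delta$-neighborhood (it cannot reach the diagonal), so $\mu$ and $\nu$ agree "up to $\delta$" on such regions. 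This should let me conclude that $(\mu_n)_n$ is relatively compact for the vague topology: indeed, for fixed compact $K$, $\sup_n \mu_n(K) < \infty$ follows because all $\mu_n$ are eventually within bounded $\Dinf$-distance of a fixed $\mu_{n_0}$, and $\Dinf$-closeness forces $\mu_n(K)$ to stay comparable to $\mu_{n_0}(K^{\delta})$ for a slightly enlarged compact $K^\delta$ (still away from the diagonal). By Proposition \ref{prop:rel_cpt_vague}, a vaguely convergent subsequence $\mu_{n_k} \cvvague \mu$ exists.

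Next I would check that the vague limit $\mu$ actually lies in $\MM^\infty$. This follows because $\Pers_\infty$ is controlled along the sequence: if $\Dinf(\mu_n, \mu_m) \leq \delta$ for all $n,m \geq N$, then any point $x \in \supp(\mu_n)$ with $d(x,\thediag) > \delta$ must be matched to a point $y \in \supp(\mu_m)$ with $d(x,y) \leq \delta$, hence $d(y, \thediag) \geq d(x,\thediag) - \delta$; combined with the finiteness of $\Pers_\infty(\mu_N)$ this gives a uniform bound $\sup_n \Pers_\infty(\mu_n) \leq \Pers_\infty(\mu_N) + \delta < \infty$, and then $\mu$ inherits the same bound since $\supp(\mu)$ is contained (up to the region near the diagonal) in the limit of the supports. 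Then I would show $\Dinf(\mu_n, \mu) \to 0$: fix $\epsilon > 0$, pick $N$ so that $\Dinf(\mu_n,\mu_m) \leq \epsilon$ for $n,m \geq N$, and for each such $n$ extract from a subsequence of optimal plans $\pi_{n,m} \in \Opt_\infty(\mu_n, \mu_m)$ (nonempty and compact by Proposition \ref{prop:basic_infty}) a vague limit $\pi_n \in \Adm(\mu_n, \mu)$ along $m = n_k \to \infty$; by lower semicontinuity of $C_\infty$ (Proposition \ref{prop:basic_infty}) we get $C_\infty(\pi_n) \leq \liminf_k C_\infty(\pi_{n,n_k}) \leq \epsilon$, hence $\Dinf(\mu_n, \mu) \leq \epsilon$ for all $n \geq N$.

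The main obstacle I anticipate is the relative compactness step and the handling of mass escaping to (or arriving from) the diagonal, which is precisely where the partial-transport structure differs from classical optimal transport: one must be careful that $\Dinf$-closeness does not prevent arbitrarily large amounts of low-persistence mass from appearing or disappearing near $\thediag$, so the vague-compactness argument must be localized to compact sets bounded away from the diagonal, and one should note that what happens arbitrarily close to $\thediag$ is irrelevant both for membership in $\MM^\infty$ and for the value of $\Dinf$ (a point at distance $<\epsilon$ from the diagonal can always be sent to $\thediag$ at cost $<\epsilon$). A secondary technical point is verifying that the vague limit of admissible plans is admissible for the limit measures in the $\Dinf$ setting; this is the analogue of \cite[Prop.~2.3]{ot:figalli2010newTransportationDistance} used elsewhere in the paper, and I would either invoke it directly or reprove the short argument that marginal constraints pass to the vague limit on $E_\upperdiag$. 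With these pieces in place the completeness follows. I would also remark that, combined with Proposition \ref{prop:diagram_close}-type closedness (if available for $p=\infty$), this shows $(\PD^\infty, \Dinf)$ is complete as well, though it is of course not separable as noted in the introduction.
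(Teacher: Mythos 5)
Your proposal is correct and follows essentially the same route as the paper: extract a vague limit from the Cauchy sequence by bounding $\mu_n(K)$ for compacts $K$ away from the diagonal (using that $\Dinf$-closeness below $d(K,\thediag)$ forces mass in $K$ to be matched within an $\epsilon$-neighborhood rather than to $\thediag$), check the limit lies in $\MM^\infty$, and conclude $\Dinf(\mu_n,\mu)\to 0$ via lower semicontinuity. Your last two steps merely unroll the lower semicontinuity of $\Dinf$ (Proposition \ref{prop:basic_infty}) by hand through compactness of plans and lower semicontinuity of $C_\infty$, which is exactly how the paper establishes that lemma.
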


\begin{proof}
Let $(\mu_n)_n$ be a Cauchy sequence for $\Dinf$. Fix a compact $K\subset \upperdiag$, and pick $\epsilon = d(K,\thediag) / 2$. There exists $n_0$ such that for $n > n_0$, $\Dinf(\mu_n, \mu_{n_0}) < \epsilon$. Let $K_\epsilon \defeq \{ x \in \upperdiag,\ d(x,K) \leq \epsilon \}$. By considering $\pi_n \in \Opt_\infty(\mu_n, \mu_{n_0})$, and since $\Dinf(\mu_n, \mu_{n_0}) < \epsilon$, we have that
	\begin{equation}
		\mu_n(K) = \pi_n(K \times \groundspace) = \pi_n(K \times K_\epsilon) \leq \mu_{n_0}(K_\epsilon).
	\label{eq:mu_n_K_is_bounded}
	\end{equation}
Therefore, $(\mu_n(K))_n$ is uniformly bounded, and Proposition \ref{prop:rel_cpt_vague} implies that $(\mu_n)_n$ is relatively compact. Finally, the exact same computations as in the proof of the completeness for $p<\infty$ (see Appendix \ref{sec:proofs}) show that $(\mu_n)_n$ converges for the $\Dinf$ metric. 
\end{proof}

\begin{remark} Contrary to the case $p < \infty$, the space $\PD^\infty$ (and therefore $\MM^\infty$) is not separable. Indeed, for $I \subset \N$, define the diagram $a_I \defeq \sum_{i \in I} \delta_{(i, i+1)} \in \PD^\infty$. The family $\{ a_I,\ I \subset \N \}$ is uncountable, and for two distinct $I, I'$, $\Dinf(a_I, a_{I'}) = \frac{\sqrt{2}}{2}$. This result is similar to \cite[Theorem 4.20]{tda:bubenik2018topological}.
\end{remark}

We now show that the direct implication in Theorem \ref{thm:conv_dp} still holds in the case $p=\infty$.

\begin{proposition} 	\label{prop:Dinf_implies_vaguecv}
Let $\mu, \mu_1, \mu_2,\dots$ be measures in $\MM^\infty
$. If $\Dinf(\mu_n,\mu) \to 0$, then $(\mu_n)_n$ converges vaguely to $\mu$ and $\Pers_\infty(\mu_n)$ converges to $\Pers_\infty(\mu)$.
\end{proposition}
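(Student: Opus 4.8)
The plan is to fix a continuous, compactly supported test function $f : \upperdiag \to \R$ with support in some compact $K \subset \upperdiag$, and to show $\mu_n(f) \to \mu(f)$. Write $\delta \defeq d(K, \thediag) > 0$. Since $\Dinf(\mu_n,\mu) \to 0$, for $n$ large we have $\Dinf(\mu_n,\mu) < \delta/2$, so we may pick $\pi_n \in \Opt_\infty(\mu_n,\mu)$ with $C_\infty(\pi_n) < \delta/2$, i.e. $\pi_n$ is supported on $\{(x,y) : d(x,y) < \delta/2\}$. The key structural consequence is that no mass of $\mu_n$ sitting on $K$ can be matched to the diagonal $\thediag$ (the diagonal is at distance $\geq \delta$ from $K$), and symmetrically for $\mu$; moreover any point matched to a point of $K$ lies in the enlarged compact $K_\epsilon \defeq \{x \in \upperdiag : d(x,K) \leq \delta/2\}$, whose distance to $\thediag$ is still $\geq \delta/2 > 0$.

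The main step is then the estimate
\begin{equation*}
|\mu_n(f) - \mu(f)| = \left| \iint_{\groundspace^2} \big(f(x) - f(y)\big) \dd \pi_n(x,y) \right| \leq \iint_{\groundspace^2} |f(x) - f(y)| \dd \pi_n(x,y),
\end{equation*}
where the first equality uses that $\pi_n$ has marginals $\mu_n$ and $\mu$ on $\upperdiag$ (and $f \equiv 0$ on $\thediag$, so the diagonal contributions vanish). On the region where neither $x$ nor $y$ belongs to $K$, the integrand $|f(x)-f(y)|$ is nonzero only if exactly one of $x,y$ lies in $K$ — but if $x \notin K$ and $y \notin K$ then $f(x)=f(y)=0$, so in fact the integrand vanishes unless at least one of $x,y$ is in $K$; and by the support control above, whenever $x \in K$ (resp. $y \in K$) we also have $y \in K_{\delta/2}$ (resp. $x \in K_{\delta/2}$), with $d(x,y) < \delta/2$. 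Hence the integral is over a region contained in $K_{\delta/2} \times K_{\delta/2}$, on which $\pi_n$ has finite mass bounded by $\mu(K_{\delta/2})$ (using the marginal inequality $\pi_n(K_{\delta/2} \times \groundspace) = \mu_n(K_{\delta/2})$, and the fact that $\mu_n(K_{\delta/2})$ is itself controlled: $\mu_n(K_{\delta/2}) = \pi_n(K_{\delta/2}\times \groundspace)$ is matched into $K_\delta$, so $\mu_n(K_{\delta/2}) \leq \mu(K_\delta) < \infty$ by the same interleaving argument, uniformly in large $n$). Using the uniform continuity of $f$ on the compact $K_{\delta/2}$ (or just its modulus of continuity $\omega_f$), we bound $|f(x)-f(y)| \leq \omega_f(\delta/2)$ on the relevant region, giving
\begin{equation*}
|\mu_n(f) - \mu(f)| \leq \omega_f(\delta/2) \cdot \mu(K_\delta).
\end{equation*}
Since $\delta$ here is fixed by $K$, this is not yet small; the fix is to redo the argument with $\Dinf(\mu_n,\mu) < \eta$ for arbitrarily small $\eta > 0$ (possible for $n$ large), replacing $\delta/2$ by $\eta$ throughout: then $|\mu_n(f)-\mu(f)| \leq \omega_f(\eta)\cdot \mu(K_\delta)$ for all $n$ large enough (with $K_\delta$ now fixed, say $\delta = d(K,\thediag)/2$, as soon as $\eta < \delta$), and letting $n \to \infty$ then $\eta \to 0$ gives $\mu_n(f) \to \mu(f)$.

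The main obstacle is the bookkeeping around the diagonal and the non-boundedness of $\upperdiag$: one must carefully justify that the diagonal $\thediag \times \thediag$ and diagonal-off-diagonal regions contribute nothing to $\iint (f(x)-f(y))\dd\pi_n$ (because $f$ vanishes on $\thediag$ and $K$ is bounded away from $\thediag$, no mass of $f$'s support is transported to/from the diagonal), and that all the relevant mass stays in a fixed compact set with finite, uniformly-bounded $\mu_n$-measure — this is exactly where the $C_\infty(\pi_n) < \eta$ control and the enlarged-compact trick from the proof of Proposition \ref{prop:d_infty_polish} (see \eqref{eq:mu_n_K_is_bounded}) come in. Everything else is a routine uniform-continuity estimate.
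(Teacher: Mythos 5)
Your proof is correct and essentially the paper's: both fix an optimal plan $\pi_n$, use that it displaces mass by at most $\Dinf(\mu_n,\mu)$ to obtain the uniform bound $\mu_n(K)\leq \mu(K_{\delta/2})$ (this is exactly the interleaving argument of \eqref{eq:mu_n_K_is_bounded}), and conclude with a continuity estimate on $f$ — you invoke the modulus of continuity of $f$ directly where the paper first approximates $f$ by a Lipschitz function, an immaterial difference. One small caution on the bookkeeping: in your first pass, with $\delta=d(K,\thediag)$, the enlargement $K_\delta$ reaches the diagonal, so $\mu(K_\delta)$ need \emph{not} be finite for a Radon measure on $\upperdiag$ (mass may accumulate near $\thediag$); this is harmless in your final version where $\delta=d(K,\thediag)/2$, but every set you charge to $\mu$ must stay at positive distance from $\thediag$ — the cleanest route is to bound $\pi_n\big((K\times\groundspace)\cup(\groundspace\times K)\big)\leq \mu_n(K)+\mu(K)\leq \mu(K_{\delta/2})+\mu(K)$, which avoids the double enlargement altogether.
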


\begin{proof} First, the convergence of $\Pers_\infty(\mu_n)$ towards $\Pers_\infty(\mu)$ is a consequence of the reverse triangle inequality:
\[ |\Pers_\infty(\mu_n)-\Pers_\infty(\mu)|=|\Dinf(\mu_n,0)-\Dinf(\mu,0)|\leq \Dinf(\mu_n,\mu),\]
which converges to $0$ as $n$ goes to $\infty$. 

We now prove the vague convergence. Let $f\in C_c(\upperdiag)$, whose support is included in some compact set $K$. For any $\epsilon >0$, there exists a $L$-Lipschitz function $f_\epsilon$, whose support is included in $K$, with $\| f - f_\epsilon \|_\infty \leq \epsilon$. Observe that $\sup_k \mu_k(K) < \infty$ using the same arguments than for \eqref{eq:mu_n_K_is_bounded}. Let $\pi_n \in \Opt_\infty(\mu_n,\mu)$. We have 
\begin{align*}
|\mu_n(f)-\mu(f)| &\leq |\mu_n(f-f_\epsilon)| + |\mu(f-f_\epsilon)|  + |\mu_n(f_\epsilon)-\mu(f_\epsilon)| \\
&\leq (\mu_n(K) + \mu(K)) \epsilon + |\mu_n(f_\epsilon)-\mu(f_\epsilon)| \\
&\leq (\sup_k \mu_k(K) + \mu(K))\epsilon + |\mu_n(f_\epsilon)-\mu(f_\epsilon)|.
\end{align*}
Also, 
\begin{align*}
&|\mu_n(f_\epsilon)-\mu(f_\epsilon)| = \left|\iint_{\groundspace^2}(f_\epsilon(x)-f_\epsilon(y))\dd \pi_n(x,y) \right| \\
&\qquad \qquad\leq \iint_{\groundspace^2} |f_\epsilon(x)-f_\epsilon(y)| \dd\pi_n(x,y) \\
&\qquad \qquad\leq  L \iint\limits_{(K \times \groundspace) \cup (\groundspace \times K)} d(x,y)\dd\pi_n(x,y) \text{ as $f_\epsilon$ is $L$-Lipschitz continuous} \\
&\qquad \qquad\leq L C_\infty(\pi_n) (\pi_n(K \times  \groundspace) + \pi_n(\groundspace\times K)) \\
&\qquad \qquad\leq L  \Dinf(\mu_n,\mu) \left(\sup_k \mu_k(K) + \mu(K)\right).
\end{align*}
This last quantity converge to $0$ as $n$ goes to $\infty$ for fixed $\epsilon$. Therefore, taking the $\limsup$ in $n$ and then letting $\epsilon$ go to $0$, we obtain that $\mu_n(f) \to \mu(f)$. 

\end{proof}

\begin{remark} As for the case $1 \leq p < \infty$, Proposition \ref{prop:Dinf_implies_vaguecv} implies that $\Dinf$ metricizes the vague convergence, and thus using Propositions \ref{prop:dinf_equal_Dinf} and \ref{prop:diagram_close}, we have that $(\PD^\infty, d_\infty)$ is closed in $(\MM^\infty, \Dinf)$ and is---in particular---complete.
\end{remark}

Contrary to the $p< \infty$ case, a converse of Proposition \ref{prop:Dinf_implies_vaguecv} does not hold, even on the subspace of persistence diagrams (see Figure \ref{fig:ex_bottleneck}). To recover a space with a structure more similar to $\PD^p$, it is useful to look at a smaller set. 
Introduce $\PD^\infty_0$ the set of persistence diagrams such that for all $r >0$, there is a finite number of points of the diagram of persistence larger than $r$ and recall that $\PD_f$ denotes the set of persistence diagrams with finite number of points.

\begin{figure}
	\center
	\includegraphics[width=\textwidth]{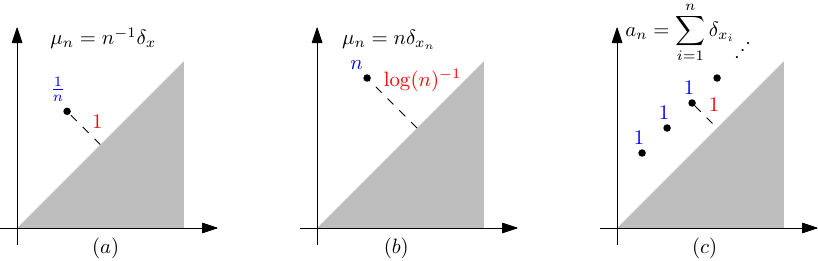}
	\caption{Illustration of differences between $\Dp$, $\Dinf$, and vague convergences. Blue color represents the mass on a point while red color designates distances. $(a)$ A case where $\Dp(\mu_n, 0) \to 0$ for any $p < \infty$ while $\Dinf(\mu_n, 0) = 1$. $(b)$ A case where $\Dinf(\mu_n, 0) \to 0$ while for all $p < \infty$, $\Dp(\mu_n, \mu) \to \infty$. $(c)$ A sequence of persistence diagrams $a_n \in \PD^\infty$, where $(a_n)_n$ converges vaguely to $a = \sum_{i} \delta_{x_i}$ and $\Pers_\infty(a_n)=\Pers_\infty(a)$, but $(a_n)$ does not converge to $a$ for $\Dinf$.}
	\label{fig:ex_bottleneck}
\end{figure}

\begin{proposition}
	The closure of $\PD_f$ for the distance $\Dinf$ is $\PD_0^\infty$.
	\label{prop:closure_of_finiteDgm}
\end{proposition}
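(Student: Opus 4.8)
The plan is to prove the two inclusions $\overline{\PD_f}^{\,\Dinf} \subseteq \PD_0^\infty$ and $\PD_0^\infty \subseteq \overline{\PD_f}^{\,\Dinf}$ separately. For the second (easier) inclusion, I would take $a \in \PD_0^\infty$ and for each $r>0$ form the truncation $a^{(r)}$ obtained by discarding the points of persistence $\leq r$; by the defining property of $\PD_0^\infty$, $a^{(r)}$ has finitely many points, so $a^{(r)} \in \PD_f$. It remains to check $\Dinf(a^{(r)},a)\to 0$ as $r\to 0$: the transport plan induced by the identity on $\groundspace\setminus A_r$ and the projection onto $\thediag$ on $A_r$ (exactly as in Lemma \ref{lemma:mu_r_to_mu}) is admissible between $a$ and $a^{(r)}$ and has $C_\infty$-cost at most $r$, since every point moved is moved a distance at most its persistence, which is at most $r$. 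Hence $\Dinf(a^{(r)},a)\le r\to 0$, proving $a\in\overline{\PD_f}$.

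For the first inclusion, I would argue by contraposition: suppose $a\in\PD$ with $a\notin\PD_0^\infty$, i.e. there is some $r_0>0$ such that $a$ has infinitely many points of persistence $>r_0$. I want to show $a$ is not a $\Dinf$-limit of finite diagrams, and in fact that no finite diagram $b\in\PD_f$ can be $\Dinf$-close to $a$: I will show $\Dinf(a,b)\ge r_0/2$ (or some fixed fraction of $r_0$) for every $b\in\PD_f$. The key point is a pigeonhole/Hall-type argument in the spirit of Proposition \ref{prop:dinf_equal_Dinf}: if $b$ has $N$ points and $a$ has strictly more than $N$ points of persistence $>r_0$, pick $N+1$ such points of $a$; in any admissible plan, by the marginal-counting argument (any $k$ points of $a$ are matched to at least $k$ points of $b\cup\thediag$, but here we only have $N$ off-diagonal points in $b$), at least one of these $N+1$ points of $a$ must be transported to the diagonal, incurring a cost equal to its persistence, which exceeds $r_0$. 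Combined with the observation (Proposition \ref{prop:Dinf_implies_vaguecv} and the completeness result) that $(\PD^\infty,\Dinf)$ is complete, so $\overline{\PD_f}$ taken inside $\MM^\infty$ consists of genuine persistence diagrams, this shows any $\Dinf$-limit of finite diagrams must lie in $\PD_0^\infty$.

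One technical subtlety I would need to handle carefully: a priori a $\Dinf$-Cauchy sequence of finite diagrams converges to some $\mu\in\MM^\infty$ (by Proposition \ref{prop:d_infty_polish}), and I must know this limit is a persistence diagram, not merely a Radon measure — this follows from the remark after Proposition \ref{prop:Dinf_implies_vaguecv}, that $(\PD^\infty,\Dinf)$ is closed in $(\MM^\infty,\Dinf)$, so I can invoke it. Then the contrapositive argument above shows the limit cannot have infinitely many points above any fixed persistence threshold, hence lies in $\PD_0^\infty$. I expect the main obstacle to be making the pigeonhole argument fully rigorous when $a$ has infinitely many points: one must argue that for a given finite $b$ with $N$ points, choosing $N+1$ points of $a$ each of persistence $>r_0$ forces an edge of length $>r_0$ in \emph{every} admissible plan, which requires restating the marginal-counting inequality from the proof of Proposition \ref{prop:dinf_equal_Dinf} at the level of a single $\pi\in\Adm(a,b)$ rather than for doubly-stochastic matrices — but since $b$ is finite this reduction is routine, so the argument should go through cleanly.
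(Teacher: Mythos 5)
Your proposal is correct and follows essentially the same route as the paper: the inclusion $\PD_0^\infty\subseteq\overline{\PD_f}$ via truncation at persistence $r$ with the diagonal-projection plan of cost at most $r$, and the reverse inclusion by observing that a diagram with infinitely many points of persistence above some $r_0>0$ stays at $\Dinf$-distance at least $r_0$ from every finite diagram, since the excess mass must be sent to the diagonal. Your explicit appeal to the closedness of $\PD^\infty$ in $(\MM^\infty,\Dinf)$ to rule out non-diagram limits is a point the paper leaves implicit, but it is the same argument.
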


\begin{proof}
Consider $a \in \PD_0^\infty$. By definition, for all $n \in \N$, $a$ has a finite number of points with persistence larger than $\frac{1}{n}$, so that the restriction $a_n$ of $a$ to points with persistence larger than $\frac{1}{n}$ belongs to $\PD_f$. As $\Dinf(a, a_n) \leq \frac{1}{n} \rightarrow 0$, $\PD_0^\infty$ is contained in the closure of $\PD_f$.

Conversely, consider a diagram $a \in \PD^\infty \backslash \PD_0^\infty$. There is a constant $r > 0$ such that $a$ has infinitely many points with persistence larger than $r$. For any finite diagram $a' \in \PD_f$, we have $\Dinf(a', a) \geq r$, so that $a$ is not the limit for the $\Dinf$ metric of any sequence in $\PD_f$. 
\end{proof}

\begin{remark}
	The space $\PD_0^\infty$ is exactly the set introduced in \cite[Theorem 3.5]{tda:blumberg2014robust} as the completion of $\PD_f$ for the bottleneck metric $d_\infty$. Here, we recover that $\PD_0^\infty$ is complete as a closed subset of the complete space $\PD^\infty$.
\end{remark}

Define for $r>0$ and $a\in \PD$, $a^{(r)}$ the persistence diagram restricted to $\{x\in \upperdiag,\ d(x, \thediag) > r\}$ (as in Lemma \ref{lemma:mu_r_to_mu}). The following characterization of convergence holds in $\PD_0^\infty$. 
\begin{proposition} \label{prop:nice_conv_dinf}
	Let $a, a_1, a_2,\dots$ be persistence diagrams in $\PD_0^\infty$. Then,
	\[\Dinf(a_n, a) \to 0 \Leftrightarrow \begin{cases} a_n \cvvague a, \\ (a_n^{(r)})_n \text{ is tight for all positive $r$}. \end{cases} \]
\end{proposition}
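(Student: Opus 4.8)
The plan is to prove the two implications separately, reusing the machinery already developed for the $\Dp$ case whenever the estimates are uniform in $p$. For the forward implication ($\Dinf(a_n,a)\to 0$ implies the two right-hand conditions), vague convergence is exactly Proposition~\ref{prop:Dinf_implies_vaguecv}, so nothing new is needed there. For tightness of $(a_n^{(r)})_n$ at a fixed $r>0$, I would argue as follows: since $a\in\PD_0^\infty$, the diagram $a^{(r/2)}$ has finitely many points, hence is supported on a compact set $K\subset\upperdiag$; choose $n_0$ so that $\Dinf(a_n,a)<r/4$ for $n\ge n_0$ and pick $\pi_n\in\Opt_\infty(a_n,a)$. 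Any point $x$ of $a_n$ with $d(x,\thediag)>r$ cannot be matched by $\pi_n$ to the diagonal (that would cost $>r>\Dinf(a_n,a)$), so it is matched to a point $y$ of $a$ with $\|x-y\|<r/4$, and then $d(y,\thediag)>r-r/4>r/2$, so $y\in K$ and $x\in K_{r/4}\defeq\{x:d(x,K)\le r/4\}$. Hence $a_n^{(r)}$ is supported on the common compact set $K_{r/4}$ for all $n\ge n_0$, and the finitely many remaining terms $a_1^{(r)},\dots,a_{n_0-1}^{(r)}$ are individually compactly supported; taking the union of all these compacts gives tightness. (One should also note that each $a_n\in\PD_0^\infty$ so $a_n^{(r)}$ is a genuine finite diagram, a point measure of finite mass.)

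For the converse, assume $a_n\cvvague a$ and $(a_n^{(r)})_n$ is tight for every $r>0$; I want $\Dinf(a_n,a)\to 0$. Fix $\varepsilon>0$; it suffices to show $\limsup_n\Dinf(a_n,a)\le 2\varepsilon$. First split off the small-persistence part: $\Dinf(a,a^{(\varepsilon)})\le\varepsilon$ by the same sub-optimal (project onto $\thediag$) transport plan used in Lemma~\ref{lemma:mu_r_to_mu}, and similarly $\Dinf(a_n,a_n^{(\varepsilon)})\le\varepsilon$ for every $n$; so by the triangle inequality it is enough to bound $\limsup_n\Dinf(a_n^{(\varepsilon)},a^{(\varepsilon)})$. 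Now $a^{(\varepsilon)}$ is a fixed finite diagram; by tightness there is a compact $K\subset\upperdiag$ with $a_n^{(\varepsilon)}$ supported in $K$ for all $n$, and enlarging $K$ we may assume it also contains the support of $a^{(\varepsilon)}$ and that $\partial K$ is $a^{(\varepsilon)}$-negligible. The vague convergence $a_n\cvvague a$ forces $a_n^{(\varepsilon)}\cvvague a^{(\varepsilon)}$ away from $\partial A_\varepsilon$ (a minor technical point: one may perturb $\varepsilon$ slightly so that no point of $a$ has persistence exactly $\varepsilon$, making $a_n^{(\varepsilon)}\cvvague a^{(\varepsilon)}$ hold). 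Since both diagrams are now finite and supported on the fixed compact $K$, vague convergence means that for $n$ large the points of $a_n^{(\varepsilon)}$, counted with multiplicity, converge pointwise to those of $a^{(\varepsilon)}$ (same total number of points, each within $\varepsilon$ of a point of $a^{(\varepsilon)}$): concretely, for $n$ large enough $a_n^{(\varepsilon)}$ and $a^{(\varepsilon)}$ have the same number $N$ of points and, after relabeling, the $i$-th point of $a_n^{(\varepsilon)}$ is within distance $\varepsilon$ of the $i$-th point of $a^{(\varepsilon)}$. This furnishes a bijection realizing $\Dinf(a_n^{(\varepsilon)},a^{(\varepsilon)})\le\varepsilon$, so $\limsup_n\Dinf(a_n,a)\le 3\varepsilon$ (combining the three $\le\varepsilon$ bounds); letting $\varepsilon\to 0$ finishes.

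The main obstacle I anticipate is the last step of the converse: promoting "vague convergence of finite point measures on a fixed compact set" to "the points can be matched pairwise within $\varepsilon$." One has to be careful with multiplicities and with points of $a$ lying near $\partial K$ or having persistence near $\varepsilon$; the clean way is to fix a finite set of disjoint small balls around the (finitely many) points of $a^{(\varepsilon)}$, each of radius $<\varepsilon$ and with negligible boundary, use vague convergence to pin down the mass $a_n^{(\varepsilon)}$ puts in each ball and in the complement (which must go to $0$), and conclude that for large $n$ all the mass of $a_n^{(\varepsilon)}$ sits inside these balls with exactly the right multiplicities. A secondary nuisance is making sure the threshold $\varepsilon$ can be chosen as a continuity level of the map $a\mapsto a^{(\cdot)}$ (i.e.\ no point of $a$ has persistence exactly $\varepsilon$), which is harmless since we only need a sequence $\varepsilon\to 0$. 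Everything else (the two $\Dinf(\cdot,\cdot^{(\varepsilon)})\le\varepsilon$ bounds, the forward direction) is a direct adaptation of already-established arguments.
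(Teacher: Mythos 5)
Your proof is correct and follows essentially the same route as the paper's: the forward direction combines Proposition~\ref{prop:Dinf_implies_vaguecv} with a bottleneck-matching argument forcing the high-persistence points of $a_n$ into a fixed compact set, and the converse uses Portmanteau on small disjoint balls around the finitely many points of the truncated limit diagram, together with tightness and integer-valuedness to rule out stray mass, before exhibiting an explicit suboptimal matching. The only (cosmetic) difference is that the paper matches $a_n$ directly to $a^{(r)}$, sending the low-persistence points of $a_n$ to the diagonal for a bound of $2r+\epsilon$, whereas you truncate both diagrams at level $\epsilon$ and apply the triangle inequality twice to get $3\epsilon$; the technical caveats you flag (choosing the threshold to avoid persistence values of $a$, handling multiplicities via ball counts) are exactly the points the paper's proof also has to address.
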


\begin{proof}
Let us prove first the direct implication. Proposition \ref{prop:Dinf_implies_vaguecv} states that the convergence with respect to $\Dinf$ implies the vague convergence. Fix $r>0$. By definition, $a^{(r)}$ is made of a finite number of points, all included in some open bounded set $U \subset \upperdiag$. As $a_n^{(r)}(U^c)$ is a sequence of integers, the bottleneck convergence implies that for $n$ large enough, $a_n^{(r)}(U^c)$ is equal to 0. Thus, $(a_n^{(r)})_n$ is tight.

Let us prove the converse. Consider $a \in \PD_0^\infty$ and a sequence $(a_n)_n$ that converges vaguely to $a$, with $(a_n^{(r)})$ tight for all $r>0$. Fix $r >0$ and let $x_1,\dots,x_K$ be an enumeration of the points in $a^{(r)}$, the point $x_k$ being present with multiplicity $m_k \in \N$. Denote by $B(x,\epsilon)$ (resp. $\overline{B}(x,\epsilon)$) the open (resp. closed) ball of radius $\epsilon$ centered at $x$. By the Portmanteau theorem, for $\epsilon$ small enough,
\[
	\begin{cases}
		\displaystyle \liminf_{n\to \infty} a_n(B(x_k,\epsilon)) \geq a(B(x_k,\epsilon)) = m_k \\
		\displaystyle \limsup_{n\to \infty} a_n(\overline{B}(x_k,\epsilon)) \leq a(\overline{B}(x_k,\epsilon)) = m_k,
	\end{cases}
\]
so that, for $n$ large enough, there are exactly $m_k$ points of $a_n$ in $B(x_k,\epsilon)$ (since $(a_n(B(x_k, \epsilon)))_n$ is a converging sequence of integers). The tightness of $(a_n^{(r)})_n$ implies the existence of some compact $K \subset \upperdiag$ such that for $n$ large enough, $a_n^{(r)}(K^c) = 0$ (as the measures take their values in $\N$). Applying Portmanteau's theorem to the closed set $K' \defeq K \backslash \bigcup_{i=1}^K B(x_i, \epsilon)$ gives 
\[ \limsup_{n \to \infty} a_n^{(r)}(K') \leq a^{(r)}(K') = 0. \]
This implies that for $n$ large enough, there are no other points in $a_n$ with persistence larger than $r$ and thus $\Dinf(a^{(r)},a_n)$ is less than or equal to $r + \epsilon$. Finally,
\[ \limsup_{n\to \infty} \Dinf(a_n,a) \leq \limsup_{n\to \infty} \Dinf(a_n,a^{(r)}) +r \leq 2r + \epsilon.\]
Letting $\epsilon \to 0$ then $r \to 0$, the bottleneck convergence holds. 
\end{proof}

\section{$p$-Fr\'echet means for distributions supported on $\MM^p$}
\label{sec:barycenter}
In this section, we state the existence of $p$-Fr\'echet means for probability distributions supported on $\MM^p$. We start with the finite case (i.e.~averaging finitely many persistence measures) and then extend the result to any probability distribution with finite $p$-th moment. We then study the specific case of distribution supported on $\PD^p$ (i.e.~averaging persistence diagrams), and show that in the finite setting, the set of $p$-Fr\'echet means is a convex set whose extreme points are in $\PD^d$ (i.e.~are actual persistence diagrams). 

\begin{remark}
In this section, we will assume that $1 <p <\infty$ and $1 < q < \infty$ (recall $d( \cdot, \cdot) = \| \cdot - \cdot \|_q$). These assumptions will ensure that $(i)$ the projection of $x \in \upperdiag$ onto $\thediag$ is uniquely defined and $(ii)$ the $p$-Fr\'echet mean of $k$ points $x_1 \dots x_k$ in $\groundspace$, i.e.~minimizer of $x \mapsto \sum_{i=1}^k \|x - x_i\|_q^p$, is also uniquely defined; two facts used in our proofs. 
\end{remark}

Recall that $(\MM^p, \Dp)$ is a Polish space, and let $W_{p,\Dp}$ denote the Wasserstein distance (see Section \ref{subsec:wasserstein}) between probability measures supported on $(\MM^p, \Dp)$. We denote by $\WW^p(\MM^p)$ the space of probability measures $\P$ supported on $\MM^p$, equipped with the $W_{p,\Dp}$ metric, which are at a finite distance from $\delta_\emptydgm$---the Dirac mass supported on the empty diagram---i.e.~\[W_{p,\Dp}^p(\P,\delta_{\emptydgm}) = \int_{\nu \in \MM^p} \Dp^p(\nu, 0) \dd \P(\nu) = \int_{\nu \in \MM^p} \Pers_p(\nu) \dd \P(\nu)< \infty.\]

\begin{definition}\label{def:variance_and_bary} 
Consider $\P \in \WW^p(\MM^p)$. A measure $\mu^* \in \MM^p$ is a \emph{$p$-Fr\'echet mean} of $\P$ if it minimizes $\EE : \mu \in \MM^p \mapsto \int_{\nu \in \MM^p} \Dp^p (\mu,\nu) \dd \P(\nu)$.
\end{definition}

\subsection{$p$-Fr\'echet means in the finite case}
\label{subsec:bary_finite}
Let $\P$ be of the form $\sum_{i=1}^N \lambda_i \delta_{\mu_i}$ with $N \in \N$, $\mu_i$ a persistence measure of finite mass $m_i$, and $(\lambda_i)_i$ non-negative weights that sum to $1$. Define $\mtot \defeq \sum_{i=1}^N m_i$. To prove the existence of $p$-Fr\'echet means for such a $\P$, we show that, in this case, $p$-Fr\'echet means correspond to $p$-Fr\'echet means for the Wasserstein distance of some distribution on $\MM^p_{\mtot}(\tilde{\Omega})$, the sets of measures on $\tilde{\Omega}$ that all have the same mass $\mtot$ (see Section \ref{subsec:uniformly_bounded_mass}), a problem well studied in the literature \cite{ot:agueh2011barycenters,ot:carlier2010matchingForTeam,ot:carlier2015numerical}.

We start with a lemma which affirms that if a measure $\mu$ has too much mass (larger than $\mtot$), then it cannot be a $p$-Fr\'echet mean of $\mu_1 \dots \mu_N$.

\begin{lemma}\label{lem:barycenter_small_mass} We have $\inf\{\EE(\mu),\ \mu\in \MM^p\} = \inf\{\EE(\mu), \mu\in \MM^p_{\leq \mtot}\}$.
\end{lemma}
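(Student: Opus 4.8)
The plan is to show that any candidate minimizer $\mu$ with mass exceeding $\mtot$ can be replaced by a measure of mass at most $\mtot$ without increasing the energy $\EE$. Since the inequality $\inf\{\EE(\mu),\ \mu \in \MM^p_{\leq \mtot}\} \geq \inf\{\EE(\mu),\ \mu \in \MM^p\}$ is trivial (the infimum over a smaller set is larger), the real content is the reverse inequality: given an arbitrary $\mu \in \MM^p$, I want to produce $\mu' \in \MM^p_{\leq \mtot}$ with $\EE(\mu') \leq \EE(\mu)$. First I would fix optimal transport plans $\pi_i \in \Opt_p(\mu, \mu_i)$ for each $i$; by Proposition \ref{prop:basic_prop} these exist. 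The key structural observation is that, for each $i$, the portion of the mass of $\mu$ that $\pi_i$ sends to the diagonal $\thediag$ contributes $\int d(x,\thediag)^p$ over that portion to $C_p(\pi_i)$, whereas "deleting" that mass (i.e.\ projecting it onto $\thediag$) costs exactly the same against $\mu_i$ and strictly less (zero) against a measure supported near it.

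The core combinatorial point is this: if $\mu(\Omega) > \mtot = \sum_i m_i$, then for \emph{every} index $i$, since $\pi_i \in \Adm(\mu,\mu_i)$ transports all of $\mu$ and $\mu_i$ has total mass $m_i < \mu(\Omega)$, at least mass $\mu(\Omega) - m_i > 0$ of $\mu$ must be sent to the diagonal under $\pi_i$. More usefully, consider the set $S$ of points of $\Omega$ that are sent to $\thediag$ by \emph{all} of the plans $\pi_1,\dots,\pi_N$ simultaneously — or rather, I would argue there must exist a Borel set $E \subset \Omega$ with $\mu(E) = \mu(\Omega) - \mtot$ (or some positive amount) such that restricting $\mu$ away from $E$ and instead using, for each $i$, the modified plan that projects $E$ onto $\thediag$, does not increase any $C_p(\pi_i)$. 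The cleanest way to set this up: let $\mu' $ be the restriction of $\mu$ to $\Omega \setminus E$ for a suitable $E$; build $\pi_i'$ from $\pi_i$ by sending $E$ to its diagonal projection while keeping the rest; check $\pi_i' \in \Adm(\mu', \mu_i)$ and $C_p(\pi_i') \leq C_p(\pi_i)$, hence $\Dp^p(\mu',\mu_i) \leq \Dp^p(\mu,\mu_i)$ for all $i$, giving $\EE(\mu') \leq \EE(\mu)$.

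The one subtlety is choosing $E$ so that $\mu(\Omega \setminus E) \leq \mtot$ while simultaneously being "diagonal-bound" under all plans. A single plan $\pi_i$ only forces $\mu(\Omega) - m_i$ of mass onto the diagonal, and these diagonal-bound sets differ across $i$; their intersection could be small. To handle this I would instead delete mass greedily, one plan at a time: first use $\pi_1$ to identify and remove a piece $E_1$ with $\mu(E_1) = \mu(\Omega) - \mtot$ that $\pi_1$ sends to $\thediag$ (possible since $\pi_1$ sends at least $\mu(\Omega)-m_1 \geq \mu(\Omega)-\mtot$ to the diagonal) — this already drops the mass to $\mtot$ — and then argue that for the \emph{other} indices $j \neq 1$, replacing the part of $\pi_j$ over $E_1$ by projection-to-diagonal still yields an admissible (sub-)plan with no larger cost, using the triangle-type inequality $d(x,\thediag)^p \leq$ (cost of whatever $\pi_j$ did with $x$) — this last step uses that any point paired off-diagonally by $\pi_j$ has cost at least... hmm, actually this is not automatic, so the honest route is: for $j\neq 1$, modify $\pi_j$ by sending $E_1$ to $\thediag$ and note the new cost over $E_1$ is $\int_{E_1} d(x,\thediag)^p\,\dd\mu(x)$, which must be compared to what $\pi_j$ spent on $E_1$. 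Since $\pi_j$ is optimal it satisfies the support condition $d(x,y)^p \leq d(x,\thediag)^p + d(y,\thediag)^p$ from \cite[Prop.~2.3]{ot:figalli2010newTransportationDistance}, so removing the $\mu_j$-side contribution (which we are allowed to do since we only need $\pi_j' \in \Adm(\mu',\mu_j)$, not a bijection of masses) can only help; the net effect is $\leq 0$. I expect this bookkeeping — making precise the modified plan $\pi_j'$ and verifying it lies in $\Adm(\mu',\mu_j)$ with $C_p(\pi_j')\le C_p(\pi_j)$ — to be the main obstacle, essentially a disintegration argument on the plans restricted over $E_1$; everything else is the elementary observation that shrinking the feasible set cannot lower an infimum.
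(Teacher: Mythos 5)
Your overall strategy (delete diagonal-bound mass from a candidate $\mu$) is the right one, and you correctly isolate the key difficulty, but the resolution you choose --- the greedy, one-plan-at-a-time deletion --- does not work, and the fix you sketch for it is not valid. If you delete a set $E_1$ that $\pi_1$ sends to $\thediag$ but that $\pi_j$ ($j\neq 1$) matches to genuine mass of $\mu_j$, then the marginal constraint in $\Adm(\mu',\mu_j)$ still forces that $\mu_j$-mass to be transported somewhere; the cheapest option is its projection onto $\thediag$, at cost $d(y,\thediag)^p$ per unit, and there is no reason this is dominated by the original cost $d(x,y)^p$ (take $x$ and $y$ close to each other and both far from the diagonal). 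The support condition $d(x,y)^p\le d(x,\thediag)^p+d(y,\thediag)^p$ from \cite[Prop.~2.3]{ot:figalli2010newTransportationDistance} bounds $d(x,y)^p$ from \emph{above}, which is the wrong direction here, and you cannot simply ``remove the $\mu_j$-side contribution'': admissible plans must have \emph{both} prescribed marginals. So the greedy step can strictly increase $C_p(\pi_j)$ for $j\neq 1$.

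The route you considered and discarded --- removing only the mass that \emph{every} plan sends to the diagonal --- is the one that works; your worry that ``their intersection could be small'' is resolved by a union bound rather than avoided. Concretely, let $f_i$ be the density of $A\mapsto\pi_i(\thediag\times A)$ with respect to $\mu$ (so $0\le f_i\le 1$), and set $\dd\mu'=(1-\min_j f_j)\,\dd\mu$. No re-routing of any $\mu_j$-mass is needed, each modified plan lies in $\Adm(\mu_j,\mu')$, and the cost of each plan decreases by $\int d(x,\thediag)^p\min_j f_j\,\dd\mu\ge 0$. The remaining mass is controlled automatically:
\begin{equation*}
\mu'(\Omega)=\int_\Omega\bigl(1-\min_j f_j\bigr)\dd\mu\le\sum_{j=1}^N\int_\Omega(1-f_j)\dd\mu=\sum_{j=1}^N\pi_j(\Omega\times\Omega)\le\sum_{j=1}^N m_j=\mtot,
\end{equation*}
using $1-\min_j f_j\le\sum_j(1-f_j)$ (valid since each $f_j\le 1$). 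This is the step missing from your argument: the simultaneously-diagonal-bound part is always large enough that its complement has mass at most $\mtot$, so no sequential deletion is needed.
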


\begin{proof}
The idea of the proof is to show that if a measure $\mu$ has some mass that is mapped to the diagonal in each transport plan between $\mu$ and $\mu_i$, then we can build a measure $\mu'$ by ``removing'' this mass, and then observe that such a measure $\mu'$ has a smaller energy.

Let thus $\mu \in \MM^p$. Let $\pi_i \in \Opt_p(\mu_i,\mu)$ for $i=1,\dots,N$. The measure $A\subset \Omega \mapsto \pi_i(\thediag \times A)$ is absolutely continuous with respect to $\mu$. Therefore, it has a density $f_i$ with respect to $\mu$. Define for $A \subset \upperdiag$ a Borel set,
\[ \mu'(A) \defeq \mu(A) - \int_A \min_j f_j(x) \dd \mu(x),\]
and, for $i = 1,\dots,N$, a measure $\pi'_i$, equal to $\pi_i$ on $\Omega \times \groundspace$ and which satisfies for $A \subset \upperdiag$ a Borel set,
\[\pi'_i(\thediag \times A) = \pi'_i(s(A) \times A) \defeq \pi_i(\thediag \times A) - \int_A \min_j f_j(x) \dd \mu(x),\]
where $s$ is the orthogonal projection on $\thediag$. As $\pi_i(\thediag\times A) = \int_A f_i (x) \dd\mu(x)$, $\pi'_i(A)$ is nonnegative, and as $\pi_i(\thediag\times A) \leq \mu(A)$, it follows that $\mu'(A)$ is nonnegative. To prove that $\pi'_i \in \Adm(\mu_i,\mu')$, it is enough to check that for $A \subset \upperdiag$, $\pi'_i(\groundspace \times A)=\mu'(A)$:
\begin{align*}
\pi'_i(\groundspace \times A) &= \pi_i(\upperdiag \times A) + \pi_i(\thediag \times A) - \int_A \min_j f_j(x) \dd \mu(x) \\
&= \mu(A) - \int_A \min_j f_j(x) \dd \mu(x) = \mu'(A).
\end{align*} 
Also, 
\begin{align*}
\mu'(\Omega) &= \int_\Omega(1-\min_j f_j) \dd\mu(x) \leq  \sum_{j=1}^N \int_\Omega (1-f_j) \dd\mu(x) \\
&= \sum_{j=1}^N (\mu(\Omega)-\pi_j(\thediag \times \Omega)) = \sum_{j=1}^N (\pi_j(\groundspace\times \Omega)-\pi_j(\thediag \times \Omega)) \\
&=  \sum_{j=1}^N \pi_j(\Omega \times \Omega) \leq \sum_{j=1}^N \pi_j(\Omega \times \groundspace) = \sum_{j=1}^N m_j = \mtot.
\end{align*}
and thus $\mu'(\Omega) \leq \mtot$. To conclude, observe that
\begin{align*}
\EE(\mu') &\leq \sum_{i=1}^N \lambda_i C_p(\pi'_i) = \sum_{i=1}^N \lambda_i \left(\iint_{\Omega \times \groundspace} d(x,y)^p \dd \pi_i(x,y)\right.\\
&\qquad \left. + \iint_{\thediag \times \Omega} d(x,y)^p \dd \pi_i(x,y) - \int_\Omega d(x,\thediag)^p \min_j f_j(x) \dd\mu(x) \right) \\
&\leq \sum_{i=1}^N \lambda_i C_p(\pi) = \EE(\mu).
\end{align*} 

\end{proof}

Recall that $W_{p,\rho}$ denotes the Wasserstein distance between measures with same mass supported on the metric space $(\tilde{\upperdiag}, \rho)$ (see Sections \ref{subsec:wasserstein} and  \ref{subsec:uniformly_bounded_mass}).
\begin{proposition}\label{prop:equiv_minimum_functionals}
Let $\Psi: \mu  \in \MM^p_{\leq \mtot}   \mapsto \tilde{\mu} \in \MM_{\mtot}^p(\tilde{\upperdiag}) $, where $\tilde{\mu} \defeq \mu + (\mtot - \mu(\upperdiag)) \delta_{\thediag}$. The functionals
\begin{align*}
	&\EE : \mu \in  \MM^p_{\leq \mtot} \mapsto \sum_{i=1}^N \lambda_i \Dp^p(\mu , \mu_i) \text{ and }  \\
	&\FF : \tilde{\mu} \in  \MM_{\mtot}^p(\tilde{\Omega}) \mapsto \sum_{i=1}^N \lambda_i W^p_{p,\rho}(\tilde{\mu} , \Psi (\mu_i) ),
\end{align*}
have the same infimum values and $\argmin \EE = \Psi^{-1} (\argmin  \FF)$.
\end{proposition}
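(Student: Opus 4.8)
The strategy is to exploit Proposition~\ref{prop:Dp_equals_Wp_finite}, which identifies $\Dp(\mu,\mu_i)$ with $W_{p,\rho}(\tilde\mu,\tilde{\mu_i})$ whenever the added reservoir mass is large enough, together with Lemma~\ref{lem:barycenter_small_mass}, which lets us restrict the minimization of $\EE$ to $\MM^p_{\leq \mtot}$. Concretely, I would first observe that $\Psi$ is a bijection from $\MM^p_{\leq \mtot}$ onto $\MM^p_{\mtot}(\tilde\upperdiag)$: injectivity is clear since $\mu$ is recovered from $\tilde\mu$ by restricting to $\upperdiag$, and surjectivity holds because any $\tilde\mu \in \MM^p_{\mtot}(\tilde\upperdiag)$ decomposes as $\tilde\mu\!\restriction_{\upperdiag} + c\,\delta_{\thediag}$ with $c = \mtot - \tilde\mu(\upperdiag) \geq 0$, and $\mu \defeq \tilde\mu\!\restriction_{\upperdiag}$ lies in $\MM^p_{\leq \mtot}$. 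Note also that $\Psi(\mu_i) = \tilde{\mu_i}$ with the convention $r = \mtot \geq m_i$ of Proposition~\ref{prop:Dp_equals_Wp_finite}, since $\mtot = \sum_j m_j \geq m_i$ and $\mtot \geq \mu(\upperdiag) + \mu_i(\upperdiag)$ is \emph{not} required for the identity $\Dp = W_{p,\rho}$ — only $r \geq \max(\mu(\upperdiag), \mu_i(\upperdiag))$ is, and this holds since $\mu \in \MM^p_{\leq \mtot}$ and $m_i \leq \mtot$.

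Granting this, the core of the argument is the pointwise identity $\EE(\mu) = \FF(\Psi(\mu))$ for every $\mu \in \MM^p_{\leq\mtot}$. This follows term by term: for each $i$, Proposition~\ref{prop:Dp_equals_Wp_finite} applied with $r = \mtot$ gives $\Dp(\mu,\mu_i) = W_{p,\rho}(\tilde\mu, \tilde{\mu_i})$ where $\tilde\mu = \mu + (\mtot - \mu(\upperdiag))\delta_{\thediag} = \Psi(\mu)$ and $\tilde{\mu_i} = \mu_i + (\mtot - m_i)\delta_{\thediag}$; summing against the weights $\lambda_i$ yields $\EE(\mu) = \FF(\Psi(\mu))$. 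Since $\Psi$ is a bijection between $\MM^p_{\leq\mtot}$ and $\MM^p_{\mtot}(\tilde\upperdiag)$, this immediately gives $\inf\{\FF(\tilde\mu) : \tilde\mu \in \MM^p_{\mtot}(\tilde\upperdiag)\} = \inf\{\EE(\mu) : \mu \in \MM^p_{\leq\mtot}\}$, and by Lemma~\ref{lem:barycenter_small_mass} the latter equals $\inf\{\EE(\mu): \mu \in \MM^p\}$, so the two functionals in the statement share their infimum value. Likewise $\argmin_{\MM^p_{\leq\mtot}}\EE = \Psi^{-1}(\argmin \FF)$ from the bijection, and Lemma~\ref{lem:barycenter_small_mass} upgrades the left-hand side to $\argmin_{\MM^p}\EE$ (any global minimizer of $\EE$ over $\MM^p$ achieves the common infimum, hence lies in $\MM^p_{\leq\mtot}$ up to the reduction in the lemma — one should check that the minimizer produced there actually attains the inf, which it does since $\EE(\mu') \leq \EE(\mu)$).

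The one point demanding care — and the main potential obstacle — is verifying that the mass bookkeeping in Proposition~\ref{prop:Dp_equals_Wp_finite} is compatible with a \emph{single} common reservoir size $r = \mtot$ used simultaneously for all $i$. Proposition~\ref{prop:Dp_equals_Wp_finite} as stated requires $r \geq \mu(\upperdiag) + \nu(\upperdiag)$, but its proof (via Lemmas~\ref{lemma:kappa_iota} and~\ref{lem:Dp_smaller_than_W_p}) only uses $r \geq \max(\mu(\upperdiag), \nu(\upperdiag))$; here $\mu \in \MM^p_{\leq\mtot}$ gives $\mu(\upperdiag) \leq \mtot$ and $\mu_i(\upperdiag) = m_i \leq \mtot$, so $r = \mtot$ is admissible for every $i$ at once. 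I would state this explicitly, invoking the version of the identity with the weaker hypothesis $r \geq \max(\mu(\upperdiag),\nu(\upperdiag))$ (which is exactly Lemma~\ref{lem:Dp_smaller_than_W_p} combined with the inequality established just after it in the proof of Proposition~\ref{prop:Dp_equals_Wp_finite}). The remainder is then purely formal manipulation of infima and preimages under a bijection.
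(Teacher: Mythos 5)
There is a genuine gap, and it sits exactly at the point you flagged as ``the one point demanding care.'' Your argument rests on the pointwise identity $\EE(\mu) = \FF(\Psi(\mu))$ for \emph{every} $\mu \in \MM^p_{\leq\mtot}$, justified by the claim that the proof of Proposition \ref{prop:Dp_equals_Wp_finite} only uses $r \geq \max(\mu(\Omega),\nu(\Omega))$. That claim is false: under the weaker hypothesis only the one-sided inequality $\Dp \leq W_{p,\rho}$ (Lemma \ref{lem:Dp_smaller_than_W_p}) survives. The reverse inequality $W_{p,\rho}(\tilde\mu,\tilde\nu) \leq \Dp(\mu,\nu)$ is obtained by applying $\iota$ to an optimal plan $\pi \in T(\mu,\nu)$, and point 2 of Lemma \ref{lemma:kappa_iota} requires $\mu(\Omega) + \pi(\thediag\times\upperdiag) \leq r$ for $\iota(\pi)$ to be a well-defined (nonnegative) coupling; since $\pi(\thediag\times\upperdiag)$ can be as large as $\nu(\Omega)$, this is where $r \geq \mu(\Omega)+\nu(\Omega)$ enters, and it cannot be dropped. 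The identity itself genuinely fails under $r=\max$: take $N=1$, $\mu_1=\delta_y$ (so $\mtot=1$) and $\mu=\delta_x$ with $d(x,\thediag)=d(y,\thediag)=1$ and $2^{1/p} < d(x,y) < 2$ (possible since $p>1$). Then $\EE(\mu)=\Dp^p(\delta_x,\delta_y)=\min(d(x,y)^p,2)=2$, whereas $\Psi(\mu)=\delta_x$, $\tilde\mu_1=\delta_y$ and $\FF(\Psi(\mu))=\rho(x,y)^p=d(x,y)^p>2$. So $\EE < \FF\circ\Psi$ strictly, and your ``purely formal manipulation of infima under a bijection'' has nothing to transport.

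The paper's proof is built precisely to circumvent this. It introduces the set $G$ of measures $\mu$ admitting, for each $i$, an optimal plan $\pi_i\in\Opt_p(\mu_i,\mu)$ with $\pi_i(\Omega\times\thediag)=0$; for such plans one computes $\pi_i(\thediag\times\Omega)=\mu(\Omega)-m_i$, so the mass condition of Lemma \ref{lemma:kappa_iota} reads $m_i + (\mu(\Omega)-m_i) = \mu(\Omega) \leq \mtot$ and $\iota(\pi_i)$ is well defined with the single reservoir $r=\mtot$ — this is where $\EE=\FF\circ\Psi$ actually holds. It then shows by an explicit construction (pushing the mass of $\mu_i$ that would go to the diagonal onto the point $T(x)$ minimizing $\lambda_i d(x,\cdot)^p + \sum_{j\neq i}\lambda_j d(\cdot,\thediag)^p$) that any $\mu\notin G$ admits a strict improvement, hence is irrelevant for the infimum and the argmin. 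Both of these steps are missing from your proposal, and the second in particular cannot be replaced by bookkeeping on $r$: you would need to either reprove the strict-improvement argument or enlarge $r$ to $2\mtot$ throughout, which changes the target functional $\FF$ and the space $\MM^p_{\mtot}(\tilde\upperdiag)$ in the statement. The surrounding formal steps (bijectivity of $\Psi$, the reduction via Lemma \ref{lem:barycenter_small_mass}) are fine.
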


\begin{proof}
Let $G$ be the set of $\mu\in \MM^p$ such that, for all $i$, there exists $\pi_i\in  \Opt_p(\mu_i, \mu)$ with $\pi_i(\Omega,\thediag) = 0$. By point \ref{it:iota} of Lemma \ref{lemma:kappa_iota}, for $\mu\in  G$  and  $\pi_i  \in \Opt_p(\mu_i,\mu)$ with $\pi_i(\Omega,\thediag) = 0$, $\iota(\pi_i)$ is well defined and satisfies 
\[ \Dp^p(\mu_i,\mu) = C_p(\pi_i) = \iint_{\tilde{\upperdiag} \times \tilde{\upperdiag}} \tilde d(x,y)^p \dd \iota(\pi_i)(x,y) \geq \tilde{C}_p(\iota(\pi_i)) \geq W_{p,\rho}^p(\tilde \mu_i,\tilde{\mu}), \]
so that $\FF(\Psi(\mu)) \leq \EE(\mu)$. As, by Lemma \ref{lem:Dp_smaller_than_W_p}, $\EE \leq  \FF \circ \Psi$, we therefore have $\EE(\mu) = \FF(\Psi(\mu))$ for $\mu \in G$.

We now show that if $\mu \notin G$, then there exists $\mu' \in \MM^p$ with $\EE(\mu')<\EE(\mu)$. Let $\mu \notin G$ and $\pi_i \in  \Opt_p(\mu_i,\mu)$. Assume that for some $i$, we have $\pi_i(\upperdiag, \thediag) > 0$, and introduce $\nu \in \MM^p$ defined as $\nu(A) = \pi_i(A,\thediag)$ for $A \subset \upperdiag$. 
Define
\[
	T : \upperdiag \ni x \mapsto \argmin_{y \in \upperdiag} \left\{ \lambda_i d(x,y)^p + \sum_{j \neq i} \lambda_j d(y,\thediag)^p \right\} \in \upperdiag.
\]
Note that since $p > 1$, this map is well defined (the minimizer is unique due to strict convexity) and continuous thus measurable. Consider the measure $\mu' = \mu + (T_\# \nu)$, where $T_\# \nu$ is the push-forward of $\nu$ by the application $T$. Consider the transport plan $\pi_i'$ deduced from $\pi_i$ where $\nu$ is transported onto $T_\# \nu$ instead of being transported to $\thediag$ (see Figure \ref{fig:illu_transport_barycenter}). More precisely, $\pi_i'$ is the measure on $\groundspace \times \groundspace$ defined by, for Borel sets $A, B \subset \upperdiag$:
\begin{align*}
	&\pi_i'(A \times B) = \pi_i(A \times B) + \nu( A \cap T^{-1}(B)), \\
	&\pi_i'(A\times \thediag) = 0, \quad \pi_i'(\thediag\times B) = \pi_i(\thediag\times B).
\end{align*}
\begin{figure}
	\center
	\includegraphics[width=0.5\textwidth]{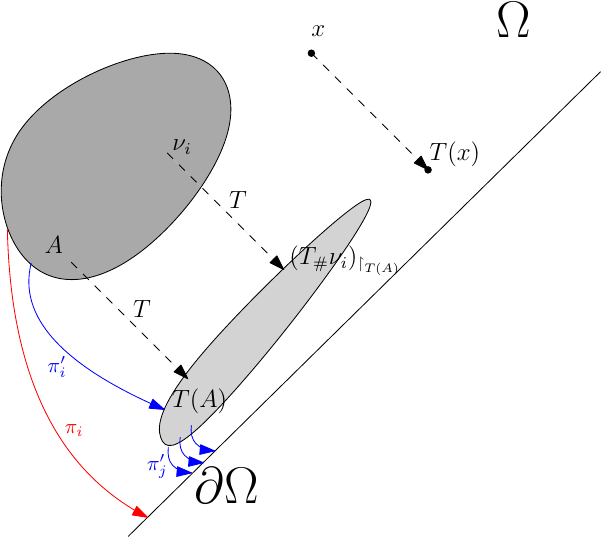}
	\caption{Global picture of the proof. The main idea is to observe that the cost induced by $\pi_i$ (red) is strictly greater than the sum of costs induces by the $\pi_i'$s (blue), which leads to a strictly better energy.}
	\label{fig:illu_transport_barycenter}
\end{figure}
We have $\pi_i' \in \Adm(\mu_i,\mu')$. Indeed, for Borel sets $A, B \subset \upperdiag$:
\[
	\pi_i'(A\times \groundspace) = \pi_i'(A\times \upperdiag) = \pi_i(A\times \upperdiag) + \nu(A) = \pi_i(A\times \groundspace) = \mu_i(A),
\]
and
\begin{align*}
\pi_i'(\groundspace\times B) &= \pi_i'(\upperdiag\times B) + \pi_i'(\thediag\times B) \\
&= \pi_i(\upperdiag\times B) + \nu(T^{-1}(B)) + \pi_i(\thediag\times B)\\
& = \mu(B) + T_\# \nu(B) = \mu'(B).
\end{align*}
	
Using $\pi_i'$ instead of $\pi_i$ changes the transport cost by the quantity \[\int_\upperdiag [d(x, T(x))^p - d(x, \thediag)^p] \dd \nu(x)<0.\]

In a similar way, we define for $j \neq i$ the plan $\pi_j' \in \Adm(\mu_j, \mu')$ by transporting the mass induced by the newly added $(T_\# \nu)$ to the diagonal $\thediag$. Using these modified transport plans increases the total cost by 
\[ \sum_{j\neq i}\lambda_j \int_\upperdiag d(T(x), \thediag)^p \dd \nu(x). \]

One can observe that
\[ \int_\upperdiag \left[\lambda_i\left(d(x,T(x))^p - d(x, \thediag)^p \right) + \sum_{j\neq i}\lambda_j d(T(x), \thediag)^p  \right] \dd \nu(x) < 0 \]
due to the definition of $T$ and $\nu(\upperdiag) > 0$.

Therefore, the total transport cost induced by the $(\pi_i')_{i=1 \dots N}$ is strictly less than $\EE(\mu)$, and thus $\EE(\mu') < \EE(\mu)$.
Finally, we have
\[ \inf_{\mu\in \MM^p_{\leq  m_{tot}}} \hspace{-0.3cm} \EE(\mu) = \inf_{\mu\in  G} \EE(\mu)  = \inf_{\mu\in G}\FF(\Psi(\mu))\geq  \inf_{\mu\in \MM^p_{\leq  m_{tot}}}\hspace{-0.3cm}\FF(\Psi(\mu)) \geq \inf_{\mu\in \MM^p_{\leq  m_{tot}}} \hspace{-0.3cm} \EE(\mu),\]
where the last inequality comes from $\FF \circ \Psi \geq \EE$ (Lemma \ref{lem:Dp_smaller_than_W_p}). Therefore,  $\inf  \EE = \inf \FF \circ \Psi$, which is equal to $\inf \FF$, as $\Psi$ is a bijection. Also, if $\mu$ is a minimizer of $\EE$ (should it exist), then $\mu \in G$ and $\EE(\mu) = \FF(\Psi(\mu))$. Therefore, as the infimum are equal, $\Psi(\mu)$ is a minimizer of $\FF$. Reciprocally, if $\tilde{\mu}$ is a minimizer of $\FF$, then, by Lemma \ref{lem:Dp_smaller_than_W_p}, $\FF(\tilde{\mu}) \geq \EE(\Psi^{-1}(\tilde{\mu}))$, and, as the infimum are equal, $\Psi^{-1}(\tilde{\mu})$ is a minimizer of $\EE$. 
\end{proof}

The existence of minimizers $\tilde{\mu}$ of $\FF$, that is ``Wasserstein barycenter'' (i.e.~$p$-Fr\'echet means for the Wasserstein distance) of $\tilde{\P} \defeq \sum_{i=1}^N \lambda_i \delta_{\tilde{\mu_i}}$, is well-known (see \cite[Theorem 8]{ot:agueh2011barycenters}). Proposition \ref{prop:equiv_minimum_functionals} asserts that $\Psi^{-1}(\tilde{\mu})$ is a minimizer of $\EE$ on $\MM^p_{\leq \mtot}$, and thus a $p$-Fr\'echet mean of $\P$ according to Lemma \ref{lem:barycenter_small_mass}. We therefore have proved the existence of $p$-Fr\'echet means in the finite case.

\subsection{Existence and consistency of $p$-Fr\'echet means}
\label{subsec:bary_in_M}

We now extend the results of the previous section to the $p$-Fr\'echet means of general probability measures supported on $\MM^p$. First, we show a \emph{consistency} result, in the vein of \cite[Theorem 3]{ot:leGouic2016existenceConsistencyWB}.

\begin{proposition} Let $\P_n, \P$ be probability measures in $\WW^p(\MM^p)$. Assume that each $\P_n$ has a $p$-Fr\'echet mean $\mu_n$ and that $W_{p,\Dp}(\P_n,\P) \to 0$. Then, the sequence $(\mu_n)_n$ is relatively compact in $(\MM^p,\Dp)$, and any limit of a converging subsequence is a $p$-Fr\'echet mean of $\P$.
\label{prop:consistencyBary}
\end{proposition}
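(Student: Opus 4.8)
\textbf{Proof plan for Proposition \ref{prop:consistencyBary}.}
The plan is to mimic the structure of the Le Gouic--Loubes argument \cite{ot:leGouic2016existenceConsistencyWB}, exploiting that $(\MM^p, \Dp)$ is Polish (Proposition \ref{prop:MM_p_Polish}) and that $\Dp$-convergence is characterized by Theorem \ref{thm:conv_dp}. First I would record the elementary variance inequality: writing $\EE_\P(\mu) \defeq \int \Dp^p(\mu,\nu)\dd\P(\nu)$ for the Fr\'echet functional, a triangle inequality together with the definition of the Wasserstein distance on $(\MM^p,\Dp)$ gives, for any $\mu$,
\[
\left| \EE_{\P_n}(\mu)^{1/p} - \EE_{\P}(\mu)^{1/p} \right| \leq W_p(\P_n, \P),
\]
so $\EE_{\P_n} \to \EE_{\P}$ uniformly on $\Dp$-bounded sets; in particular $\inf \EE_{\P_n} \to \inf \EE_{\P}$, and since $\EE_{\P_n}(\mu_n) = \inf \EE_{\P_n}$, the sequence $(\mu_n)_n$ is a sequence of \emph{almost}-minimizers of $\EE_{\P}$. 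Also $\EE_{\P}(0) = \int \Pers_p(\nu)\dd\P(\nu) < \infty$ since $\P \in \WW^p(\MM^p)$, which bounds $\inf \EE_{\P}$.

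The second step is relative compactness of $(\mu_n)_n$, which by Proposition \ref{prop:prokhorov} amounts to showing that $\{\mu_n^p\}_n$ is tight and $\sup_n \Pers_p(\mu_n) < \infty$. The mass bound follows from $\Pers_p(\mu_n)^{1/p} = \Dp(\mu_n, 0) \leq W_p(\P_n, \delta_0) + (\text{something controlled})$; more directly, $\EE_{\P_n}(\mu_n) \leq \EE_{\P_n}(0) \to \EE_\P(0) < \infty$, and by the triangle inequality $\Dp(\mu_n,0) \leq \Dp(\mu_n, \nu) + \Dp(\nu,0)$ integrated against $\P_n$ yields a uniform bound on $\Pers_p(\mu_n)$. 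For tightness, I would argue by contradiction using the approach of \cite{ot:leGouic2016existenceConsistencyWB}: if mass of $\mu_n^p$ escaped to infinity (i.e.\ towards the diagonal or towards spatial infinity in $\upperdiag$), one could, roughly, replace $\mu_n$ by its restriction $\mu_n^{(r)}$ to a suitable compact region and show via Lemma \ref{lemma:mu_r_to_mu}-type truncation estimates and the fact that $\P$ itself is $W_p$-tight (as a single measure on the Polish space $\WW^p(\MM^p)$, its support is tight, so most of the $\P_n$-mass sits on a $\Dp$-compact family of measures $\nu$ whose $\nu^p$ are uniformly tight) that the truncated measure strictly decreases $\EE_{\P_n}$ for large $n$, contradicting minimality. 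This gives a $\Dp$-convergent subsequence $\mu_{n_k} \to \mu_\infty$.

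Finally I would identify any such limit $\mu_\infty$ as a Fr\'echet mean of $\P$. Since $\mu_{n_k} \to \mu_\infty$ in $\Dp$, the family $\{\mu_{n_k}\}_k \cup \{\mu_\infty\}$ is $\Dp$-bounded, so the uniform convergence $\EE_{\P_{n_k}} \to \EE_\P$ from Step 1 applies; combined with $\mu \mapsto \EE_\P(\mu)$ being $\Dp$-continuous on $\Dp$-bounded sets (again by the $W_p$ triangle inequality in the $\mu$ variable), we get
\[
\EE_\P(\mu_\infty) = \lim_k \EE_{\P_{n_k}}(\mu_{n_k}) = \lim_k \inf \EE_{\P_{n_k}} = \inf \EE_\P,
\]
so $\mu_\infty$ is a Fr\'echet mean of $\P$. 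The main obstacle is Step 2, the tightness/no-mass-escape argument: unlike the bounded-domain setting of \cite{ot:figalli2010newTransportationDistance}, mass in $\MM^p$ can drift both toward the diagonal $\thediag$ and off to infinity along it, so the truncation-and-improvement argument must be set up carefully, using that the weighting by $d(\cdot,\thediag)^p$ (i.e.\ passing to $\mu_n^p$) is exactly what converts the relevant escape of mass into a genuine decrease of $\EE_{\P_n}$, and using the tightness of $\supp(\P)$ in $(\WW^p(\MM^p), W_p)$ to control the contribution of the data measures $\nu$.
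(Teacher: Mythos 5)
Your overall architecture is the right one and matches the paper's (both adapt Le Gouic--Loubes): your Step 1 (the estimate $|\EE_{\P_n}(\mu)^{1/p}-\EE_\P(\mu)^{1/p}| = |W_p(\delta_\mu,\P_n)-W_p(\delta_\mu,\P)|\le W_p(\P_n,\P)$, hence convergence of the infima) and Step 3 ($\Dp$-continuity of $\EE_\P$ and identification of the limit) are correct and essentially what the paper does. The difference lies in how the compactness is organized. The paper only extracts a \emph{vaguely} convergent subsequence, which is cheap: $\mu_n(K)^{1/p}\le \Dp(\mu_n,0)/d(K,\thediag) = W_p(\delta_{\mu_n},\delta_{0})/d(K,\thediag)$, and minimality of $\mu_n$ bounds $W_p(\delta_{\mu_n},\P_n)$ by $W_p(\delta_{0},\P_n)$, so $\sup_n\mu_n(K)<\infty$ for every compact $K$. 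It then identifies the vague limit $\mu$ as a Fr\'echet mean and only at the very end upgrades $\mu_{n_k}\cvvague\mu$ to $\Dp(\mu_{n_k},\mu)\to 0$ via Lemma \ref{lemma:Skorokhod}, whose hypotheses (vague convergence together with $\Dp(\mu_{n_k},\nu)\to\Dp(\mu,\nu)$ for some fixed $\nu$) are available at that stage. You instead try to prove $\Dp$-relative compactness up front, via Proposition \ref{prop:prokhorov}, by a truncation-and-contradiction argument.

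That is where the gap is. The claim that removing the escaping mass of $\mu_n$ ``strictly decreases $\EE_{\P_n}$'' is not true as stated: against a single data measure $\nu$, truncation can \emph{increase} the cost. Concretely, if $\mu_n$ has unit mass at $(n,n+1)$ and $\nu$ has unit mass at $(n,n+0.9)$, the optimal plan matches them at cost $(0.1)^p$, whereas after deleting the point of $\mu_n$ the mass of $\nu$ must be sent to the diagonal at the larger cost $(0.9/\sqrt{2})^p$. What rescues the argument is that, under $W_p(\P_n,\P)\to 0$, such data measures carry asymptotically negligible weight (uniform integrability of $\nu\mapsto\Pers_p(\nu)$ under $\P_n$, plus tightness of $\{\nu^p\}$ over most of the support of $\P$), and one must separately handle mass of $\mu_n$ escaping toward the diagonal versus escaping to infinity along it. This bookkeeping --- a decomposition of $\upperdiag$ into regions $A_r$, $K_{M,r}$, $L_{M,r}$ and term-by-term Portmanteau estimates on the measures $d(x,y)^p\,\dd\pi_n(x,y)$ --- is precisely the content of the proof of Lemma \ref{lemma:Skorokhod} in Appendix \ref{sec:proof_barycenters}, and it is the hardest part of the whole proposition. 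As written, your Step 2 asserts the conclusion of that lemma without supplying the estimates, and the naive monotonicity you invoke fails.
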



\begin{proof}
In order to prove relative compactness of $(\mu_n)_n$, we use the characterization stated in Proposition \ref{prop:rel_cpt_vague}. Consider a compact set $K \subset \upperdiag$. We have, because of \eqref{eq:pers_compact},
\begin{align*}
	\mu_n(K)^\frac{1}{p} &\leq \frac{1}{d(K,\thediag)} \Dp(\mu_n, \emptydgm) = \frac{1}{d(K,\thediag)} W_{p,\Dp}(\delta_{\mu_n} , \delta_{\emptydgm} ) \\
	&\leq \frac{1}{d(K,\thediag)} \left( W_{p,\Dp}(\delta_{\mu_n}, \P_n) + W_{p,\Dp}(\P_n, \delta_{\emptydgm}) \right)
\end{align*}
Since $\mu_n$ is a $p$-Fr\'echet mean of $\P_n$, it minimizes $\{ W_{p,\Dp}(\delta_\nu, \P_n),\ \nu \in \MM^p\}$, and in particular $W_{p,\Dp}(\delta_{\mu_n}, \P_n) \leq W_{p,\Dp}(\delta_{\emptydgm} , \P_n)$. Furthermore, as by assumption $W_{p,\Dp}(\P_n, \P) \rightarrow 0$, we have that $\sup_n W_{p,\Dp}(\P_n,\ \delta_{\emptydgm}) < \infty$. As a consequence $\sup_n \mu_n(K) < \infty$, and Proposition \ref{prop:rel_cpt_vague} allows us to conclude that the sequence $(\mu_n)_n$ is relatively compact for the vague convergence.

To conclude the proof, we use the following two lemmas, whose proofs are found in Appendix \ref{sec:proof_barycenters}. 
\begin{lemma}\label{lem:rewrite} There exists a subsequence $(\mu_{n_k})_k$ of $(\mu_n)_n$ which vaguely converges towards $\mu$ a $p$-Fr\'echet mean of $\P$ and there exists $\nu \in \MM^p$ such that $\Dp(\mu_{n_k}, \nu) \rightarrow \Dp(\mu, \nu)$ as $k\to \infty$.
\end{lemma}

\begin{lemma}\label{lemma:Skorokhod}
Let $\mu, \mu_1, \mu_2, \dots \in \MM^p$. Then, $\Dp(\mu_n,\mu) \to 0$ if and only if (i) $\mu_n \cvvague \mu$ and (ii) there exists a persistence measure $\nu \in \MM^p$ such that $\Dp(\mu_n, \nu) \to \Dp(\mu, \nu)$.
\end{lemma}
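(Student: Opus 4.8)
The forward implication is essentially immediate: if $\Dp(\mu_n,\mu)\to 0$, then by Theorem \ref{thm:conv_dp} we get $\mu_n\cvvague\mu$, and taking $\nu=\mu$ gives $\Dp(\mu_n,\nu)=\Dp(\mu_n,\mu)\to 0=\Dp(\mu,\nu)$. So the content is in the converse: assume $\mu_n\cvvague\mu$ and that there is some $\nu\in\MM^p$ with $\Dp(\mu_n,\nu)\to\Dp(\mu,\nu)$; we must deduce $\Dp(\mu_n,\mu)\to 0$. By Theorem \ref{thm:conv_dp} again, since we already have vague convergence, it suffices to prove $\Pers_p(\mu_n)\to\Pers_p(\mu)$. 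The plan is to sandwich $\limsup_n\Pers_p(\mu_n)$ and $\liminf_n\Pers_p(\mu_n)$ between the same value $\Pers_p(\mu)$, using the convergence of $\Dp(\mu_n,\nu)$ together with lower semicontinuity of $\Dp$.

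First I would establish $\liminf_n \Pers_p(\mu_n)\ge \Pers_p(\mu)$: this is just lower semicontinuity of $\mu\mapsto\Pers_p(\mu)=\Dp(\mu,0)^p$ along vague convergence (Proposition \ref{prop:basic_prop}, applied with the constant sequence $0$ for the second argument). The harder direction is $\limsup_n\Pers_p(\mu_n)\le\Pers_p(\mu)$. Here I would use the triangle inequality $\Pers_p(\mu_n)^{1/p}=\Dp(\mu_n,0)\le \Dp(\mu_n,\nu)+\Dp(\nu,0)=\Dp(\mu_n,\nu)+\Pers_p(\nu)^{1/p}$, which only gives a bound, not the sharp one. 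Instead, the right tool is the lower semicontinuity of $\Dp$ applied to the pair of sequences $(\mu_n)_n$ and the constant $\nu$: $\Dp(\mu,\nu)\le\liminf_n\Dp(\mu_n,\nu)=\Dp(\mu,\nu)$ (using the hypothesis), so in fact $\Dp(\mu_n,\nu)\to\Dp(\mu,\nu)$ and no subsequence can do better. Now consider an optimal plan $\pi_n\in\Opt_p(\mu_n,\nu)$ and an optimal plan $\sigma_n\in\Opt_p(\mu,\nu)$ — the point is to reverse-engineer a competitor plan between $\mu_n$ and $0$ (i.e.\ transport all of $\mu_n$ to the diagonal) whose cost we can control: glue $\pi_n$ with $\sigma_n$ along $\nu$ to produce a plan in $\Adm(\mu_n,\mu)$, exactly as in the proof of the triangle inequality, and then use that $\Dp(\mu_n,\mu)^{1/p}$ (which we are trying to bound, hence circular) — so this needs care. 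The cleaner route: since $(\Dp(\mu_n,\mu))_n$ is bounded (by $\Pers_p(\mu_n)^{1/p}+\Pers_p(\mu)^{1/p}\le(\Dp(\mu_n,\nu)+\Pers_p(\nu)^{1/p})^{?}\dots$, which is bounded because $\Dp(\mu_n,\nu)\to\Dp(\mu,\nu)<\infty$), extract a subsequence along which $\Dp(\mu_{n_k},\mu)\to\ell$ for some $\ell\in[0,\infty)$; I want to show $\ell=0$.

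Along this subsequence, take $\pi_k\in\Opt_p(\mu_{n_k},\mu)$; by the relative-compactness argument used in the proof of Theorem \ref{thm:conv_dp} (bounding $\pi_k$ on compacts by $\mu_{n_k}(K)+\mu(K)$, which is bounded by vague convergence), pass to a further subsequence so that $\pi_k\cvvague\pi$ with $\pi\in\Adm(\mu,\mu)$, and by lower semicontinuity of $C_p$ we get $C_p(\pi)\le\liminf_k C_p(\pi_k)=\ell^p$. Separately, take $\rho_k\in\Opt_p(\mu_{n_k},\nu)$ and $\gamma_k\in\Opt_p(\mu,\nu)$; again by compactness extract $\rho_k\cvvague\rho\in\Opt_p(\mu,\nu)$ and $\gamma_k\cvvague\gamma\in\Opt_p(\mu,\nu)$, with $C_p(\rho)=C_p(\gamma)=\Dp(\mu,\nu)^p$ using that $\Dp(\mu_{n_k},\nu)\to\Dp(\mu,\nu)$ and lsc/optimality. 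The key algebraic fact to exploit is that gluing $\rho_k$ and the reversed $\gamma_k$ over their common $\nu$-marginal yields a competitor in $\Adm(\mu_{n_k},\mu)$ whose cost, by Minkowski's inequality in the gluing, is at most $(\Dp(\mu_{n_k},\nu)+\Dp(\mu,\nu))^p$; but this is a weak bound. The sharp statement we actually need is the reverse direction: because $\Dp(\cdot,\nu)$ attains the same value $\Dp(\mu,\nu)$ at $\mu$ and asymptotically at $\mu_{n_k}$, and because $\nu$ witnesses this "no improvement", the plans $\rho_k$ must converge to a plan $\rho$ whose $\mu$-side marginal-disintegration agrees with that of $\gamma$; a standard convexity/uniqueness-of-geodesic argument (valid for $p>1$, but we only need $p\ge1$ here and can avoid strict convexity) then forces $\mu=\mu$-a.e.\ coupling, i.e.\ $\pi$ is supported on the diagonal, hence $C_p(\pi)=0$, hence $\ell=0$. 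I expect the main obstacle to be making this last "no-improvement forces diagonal transport" step rigorous without assuming $p>1$; the fallback is to invoke the explicit construction from the proof of Theorem \ref{thm:conv_dp} — cutting $\pi_k$ over a large compact $K$ and estimating the leftover persistence via \eqref{eq:tightness} — which already shows that vague convergence plus a convergent "anchor distance" to a fixed $\nu$ pins down the persistence, and adapt it verbatim with $\nu$ in place of $0$. Since $\ell=0$ for every convergent subsequence and $(\Dp(\mu_n,\mu))_n$ is bounded, $\Dp(\mu_n,\mu)\to 0$, completing the proof.
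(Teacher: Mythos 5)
Your reduction is the right one: the forward direction is immediate, and for the converse it suffices (by Theorem \ref{thm:conv_dp}) to upgrade the vague convergence to $\Pers_p(\mu_n)\to\Pers_p(\mu)$, with the inequality $\liminf_n\Pers_p(\mu_n)\geq\Pers_p(\mu)$ coming for free from lower semi-continuity. But the half that carries all the content, $\limsup_n\Pers_p(\mu_n)\leq\Pers_p(\mu)$ (equivalently, tightness of the weighted measures $\mu_n^p$), is not actually established by any of the mechanisms you propose. Your final step is circular: equation \eqref{eq:tightness} in the proof of Theorem \ref{thm:conv_dp} is \emph{derived from} the hypothesis $\Pers_p(\mu_n)\to\Pers_p(\mu)$, so it cannot be ``adapted verbatim'' to prove that very convergence; the claim that the earlier proof ``already shows that vague convergence plus a convergent anchor distance pins down the persistence'' is not what that proof shows. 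Your other route also fails for a structural reason: even if you prove that the limit $\pi$ of optimal plans $\pi_k\in\Opt_p(\mu_{n_k},\mu)$ is supported on the diagonal (which is true, and is \cite[Prop.~2.3]{ot:figalli2010newTransportationDistance}), lower semi-continuity only gives $0=C_p(\pi)\leq\liminf_k C_p(\pi_k)=\ell^p$, which is vacuous. The whole difficulty is an \emph{upper} bound on $\limsup_k C_p(\pi_k)$, and semicontinuity points the wrong way; this is exactly why the proof of Theorem \ref{thm:conv_dp} needs the cut-and-reroute construction even after knowing $\pi$ is diagonal.

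What is actually needed, and what the paper does, is a pair of quantitative estimates using optimal plans $\pi_n\in\Opt_p(\mu_n,\nu)$ (note: against the anchor $\nu$, not against $\mu$) and their vague limit $\pi\in\Opt_p(\mu,\nu)$. First, to rule out persistence accumulating near the diagonal, one bounds $\int_{A_r}d(x,\thediag)^p\dd\mu_n$ by splitting according to whether the target lies in $A_{\eta r}^c$ (where $d(x,y)\geq(\eta-1)d(x,\thediag)$, so this piece is controlled by $(\eta-1)^{-p}\Dp^p(\mu_n,\nu)$) or in $\overline{A}_{\eta r}$ (where one uses $d(x,\thediag)^p\leq 2^{p-1}(d(x,y)^p+d(y,\thediag)^p)$ and the convergence $\Dp^p(\mu_n,\nu)\to\Dp^p(\mu,\nu)$ together with Portmanteau applied to $d(x,y)^p\dd\pi_n$). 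Second, to rule out persistence escaping to infinity, one performs an analogous splitting over the regions $L_{M,r}$ and $K_{M/2,r'}$, using the elementary bound $d(x,\thediag)\leq 2d(x,y)$ when $x$ is far out and $y$ is in a bounded region near the diagonal. Both estimates use the hypothesis $\Dp(\mu_n,\nu)\to\Dp(\mu,\nu)$ in an essential way and are new work, not a rerun of the earlier proof; without them your argument has a genuine gap at its central step.
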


Let $\mu'_k=\mu_{n_k}$ be any subsequence of $\mu_n$. We want to show that there exists a subsequence of $\mu'_k$ which converges with respect to the $\Dp$ metric towards some $p$-Fr\'echet mean of $P$. By Lemma \ref{lem:rewrite}  applied to the sequence $(\mu'_k)_k$, there exists a subsequence $\mu'_{k_l}$ which converges vaguely to some $p$-Fr\'echet mean $\mu$ of $P$, and some $\nu$ with $\Dp(\mu'_{k_l}, \nu) \rightarrow \Dp(\mu, \nu)$ as $l\to \infty$. By Lemma \ref{lemma:Skorokhod}, this implies that $\mu'_{k_l}$ converges to $\mu$ with respect to the $\Dp$ metric, showing the conclusion.

\end{proof}

As the finite case is solved, generalization follows easily using Proposition \ref{prop:consistencyBary}.
\begin{theorem}For any probability distribution $\P$ supported on $\MM^p$ with finite $p$-th moment, the set of $p$-Fr\'echet means of $\P$ is a non-empty compact convex set of $\MM^p$.
\label{thm:existenceBary}
\end{theorem}

\begin{proof}
We first prove the non-emptiness. Let $\P = \sum_{i=1}^N \lambda_i \mu_i$ be a probability measure on $\MM^p$ with finite support $\mu_1, \dots ,\mu_N$. According to Proposition \ref{prop:finite-to-global}, there exists sequences $(\mu_i^{(n)})_n$ in $\MM^p_f$ with $\Dp(\mu_i^{(n)},\mu_i) \to 0$. As a consequence of the result of Section \ref{subsec:bary_finite}, the probability measures $\P^{(n)} \defeq \sum_i \lambda_i \delta_{\mu_i^{(n)}}$ admit $p$-Fr\'echet means. Furthermore, $W_{p,\Dp}^p(\P^{(n)} , \P) \leq \sum_i \lambda_i \Dp^p(\mu_i^{(n)} ,\mu_i)$ so that this quantity converges to $0$ as $n \rightarrow \infty$. It follows from Proposition \ref{prop:consistencyBary} that $\P$ admits a $p$-Fr\'echet mean.

If $\P$ has infinite support, following \cite{ot:leGouic2016existenceConsistencyWB}, it can be approximated (in $W_{p,\Dp}$) by a empirical probability measure $\P_n = \frac{1}{n} \sum_{i=1}^n \delta_{\mu_i}$ where the $\mu_i$ are i.i.d.\ from $\P$. We know that $\P_n$ admits a $p$-Fr\'echet mean since its support is finite, and thus, applying Proposition \ref{prop:consistencyBary} once again, we obtain that $\P$ admits a $p$-Fr\'echet mean. 

Finally, the compactness of the set of $p$-Fr\'echet means follows from Proposition \ref{prop:consistencyBary} applied with $P_n=P$: if $(\mu_n)_n$ is a sequence of $p$-Fr\'echet means, then the sequence is relatively compact in $(\MM^p,\Dp)$, and any converging subsequence is also a $p$-Fr\'echet mean of $P$. Also, the convexity of the set of $p$-Fr\'echet means follows from the convexity of $\Dp^p$ (see Lemma \ref{lem:convex} in Appendix \ref{sec:details_expectations} below): if $\mu_1$, $\mu_2$ are two  $p$-Fr\'echet means with energy $\EE(\mu_1)=\EE(\mu_2)=E_0$ and $0\leq \lambda\leq 1$, then
\begin{align*}
\EE(\lambda\mu_1+(1-\lambda)\mu_2)&= \int_{\nu \in \MM^p} \Dp^p (\lambda\mu_1+(1-\lambda)\mu_2,\nu) \dd \P(\nu) \\
&\leq \int_{\nu \in \MM^p} (\lambda\Dp^p (\mu_1,\nu)+(1-\lambda)\Dp^p(\mu_2,\nu)) \dd \P(\nu) \\
&= \lambda \EE(\mu_1)+(1-\lambda)\EE(\mu_2)= E_0,
\end{align*} 
so that $\lambda\mu_1+(1-\lambda)\mu_2$ is also a $p$-Fr\'echet mean.

\end{proof}

\subsection{$p$-Fr\'echet means in $\PD^p$}
\label{subsec:bary_in_D}
We now prove the existence of $p$-Fr\'echet means for distributions of persistence diagram (i.e.~probability distributions supported on $\PD^p$), extending the results of \cite{tda:mukherjee2011probabilitymeasure}, in which authors prove their existence for specific probability distributions (namely distributions with compact support or specific rates of decay). Theorem \ref{thm:barycenterInD} below asserts two different things: that $\argmin\{\EE(a),\ a \in \PD^p\}$ is non empty, and that $\min\{\EE(a),\ a\in \PD^p\} = \min\{\EE(\mu),\ \mu \in \MM^p\}$, i.e~a persistence measure cannot perform strictly better than an optimal persistence diagram when averaging diagrams. As for $p$-Fr\'echet means in $\MM^p$, we start with the finite case. The following lemma actually gives a geometric description of the set of $p$-Fr\'echet means obtained when averaging a finite number of finite diagrams.
\begin{lemma}\label{lemma:tum}
Consider $a_1, \dots, a_N \in \PD_f$, weights $(\lambda_i)_i$ that sum to $1$, and let $\P \defeq \sum_{i=1}^N \lambda_i \delta_{a_i}$. Then, the set of minimizers of $\mu \mapsto \sum_{i=1}^N \lambda_i \Dp^p(\mu, a_i)$ is a non empty convex subset of $\MM_f^p$ whose extreme points belong to $\PD_f$. In particular, $\P$ admits a $p$-Fr\'echet mean in $\PD_f$.
\end{lemma}
The proof of this lemma is delayed to Appendix \ref{sec:proof_barycenters}. Note that, as a straightforward consequence, if $\P$ has a unique minimizer in $\PD_f$ (which is generically true \cite{tda:turner2013meansAndMedians}), then so it does in $\MM_f^p$.

\begin{theorem}
For any probability distribution $\P$ supported on $\PD^p$ with finite $p$-th moment, the set of $p$-Fr\'echet means of $\P$ contains an element of $\DD^p$. Furthermore, if $P$ is supported on a finite set of finite persistence diagrams, then the set of the $p$-Fr\'echet means of $P$ is a convex set whose extreme points are in $\PD^p$.\label{thm:barycenterInD}
\end{theorem}
\begin{proof} The second assertion of the theorem is stated in Lemma \ref{lemma:tum}. To prove the existence of a $p$-Fr\'echet mean which is a persistence diagram, we argue as in the proof of Theorem \ref{thm:existenceBary}, using additionally the fact that $\PD^p$ is closed in $\MM^p$ (Proposition \ref{prop:diagram_close}). 
\end{proof}

\section{Applications}\label{sec:applications}

\subsection{Characterization of continuous linear representations} \label{subsec:continuity_of_representations}
As mentioned in the introduction, a \emph{linear representation} of persistence measures (in particular persistence diagrams) is a mapping $\Phi : \MM^p \to \BB$ for some Banach space $\BB$ of the form $\mu \mapsto \mu(f)$, where $f : \upperdiag \to \BB$ is some chosen function. Doing so, one can turn a sample of diagrams (or measures) into a sample of vectors, making the use of machine learning tools easier. Of course, a minimal expectation is that $\Phi$ should be continuous. In practice, building a linear representations (see below for a list of examples) generally follows the same pattern: first consider a ``nice'' function $g$, e.g. a gaussian distribution, then introduce a weight with respect to the distance to the diagonal $d( \cdot , \thediag)^p$, and	 prove that $\mu \mapsto \mu(g(\cdot) d(\cdot ,\thediag)^p)$ has some regularity properties (continuity, stability, etc.). Applying Theorem \ref{thm:conv_dp}, we show that this approach always gives a continuous linear representation, and that it is the only way to do so.

For $\BB$ a Banach space (typically $\R^d$), define the class of functions:
\begin{equation}
\Cp = \left\{ f : \upperdiag \to \BB,\ f \text{ continuous and } x \mapsto \frac{f(x)}{d(x,\thediag)^p} \text{ bounded } \right\}
\end{equation}
\begin{proposition}\label{cor:feature_maps_continuous}
Let $\BB$ be a Banach space and $f: \upperdiag \to \BB$ a function. The linear representation $\Phi : \MM^p \to \BB$ defined by $\Phi : \mu \mapsto \mu(f) = \int_\upperdiag f(x) \dd \mu(x)$ is continuous with respect to $\Dp$ if and only if $f \in \Cp$.
\end{proposition}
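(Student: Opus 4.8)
The plan is to reduce both implications to the characterization of $\Dp$-convergence already established, namely Theorem~\ref{thm:conv_dp} and its reformulation Corollary~\ref{cor:Dp_to_weak_cv}. I would first record the elementary observation that the class $\Cp$ can equivalently be written as $\{\, g\cdot d(\cdot,\thediag)^p : g \colon \upperdiag \to \BB \text{ continuous and bounded} \,\}$, since $d(\cdot,\thediag)$ is continuous and strictly positive on $\upperdiag$, so that dividing by it preserves continuity; this matches the formulation in the boxed statement.

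For the \emph{``if''} direction, write $f = g\cdot d(\cdot,\thediag)^p$ with $\|g\|_\infty =: M < \infty$. First I would note that $\Phi$ is genuinely defined on all of $\MM^p$, because $\int_\upperdiag \|f\|_\BB \,\dd\mu \le M\,\Pers_p(\mu) < \infty$, and that $\Phi(\mu) = \int_\upperdiag g \,\dd\mu^p$, the Bochner integral of the bounded continuous map $g$ against the \emph{finite} measure $\mu^p$ of \eqref{eq:def_mu^p}. If $\Dp(\mu_n,\mu)\to 0$, Corollary~\ref{cor:Dp_to_weak_cv} gives $\mu_n^p \cvweak \mu^p$, and the goal becomes $\int g\,\dd\mu_n^p \to \int g\,\dd\mu^p$ in $\BB$. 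For scalar $g$ this is the very definition of weak convergence; for $\BB$-valued $g$ I would run a standard approximation argument: by Prokhorov's theorem (or directly Proposition~\ref{prop:prokhorov}, since a $\Dp$-convergent sequence is relatively compact) the family $\{\mu_n^p\}$ is tight, so given $\epsilon>0$ one picks a compact $K\subset\upperdiag$ with $\sup_n\mu_n^p(K^c)\le\epsilon$ and $\mu^p(K^c)\le\epsilon$, uses uniform continuity of $g$ on $K$ together with a finite partition of unity to build $g_\epsilon=\sum_{i=1}^m b_i\varphi_i$ (with $b_i\in\BB$, $\varphi_i$ bounded continuous scalar functions, $\|g_\epsilon\|_\infty\le M$) satisfying $\|g-g_\epsilon\|_\BB\le\epsilon$ on $K$, and then bounds $\|\int g\,\dd\mu_n^p-\int g\,\dd\mu^p\|_\BB$ by a term $O(\epsilon)$ uniform in $n$ (using that $\sup_k\Pers_p(\mu_k)<\infty$ and $\mu_n^p(K^c),\mu^p(K^c)\le\epsilon$) plus $\sum_{i=1}^m\|b_i\|_\BB\,|\mu_n^p(\varphi_i)-\mu^p(\varphi_i)|\to 0$. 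Letting $\epsilon\to 0$ yields continuity of $\Phi$.

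For the \emph{``only if''} direction, assume $\Phi$ is continuous. To get continuity of $f$: for $x_n\to x$ in $\upperdiag$ one has $\delta_{x_n}\cvvague\delta_x$ and $\Pers_p(\delta_{x_n})=d(x_n,\thediag)^p\to d(x,\thediag)^p=\Pers_p(\delta_x)$, so $\Dp(\delta_{x_n},\delta_x)\to 0$ by Theorem~\ref{thm:conv_dp}, hence $f(x_n)=\Phi(\delta_{x_n})\to\Phi(\delta_x)=f(x)$. To get boundedness of $x\mapsto f(x)/d(x,\thediag)^p$, argue by contradiction: if there is a sequence $(x_n)$ in $\upperdiag$ with $\|f(x_n)\|_\BB/d(x_n,\thediag)^p\to\infty$ (so $f(x_n)\neq 0$ eventually), set $\mu_n\defeq\|f(x_n)\|_\BB^{-1}\,\delta_{x_n}\in\MM^p$; then $\Dp(\mu_n,\emptydgm)^p=\Pers_p(\mu_n)=d(x_n,\thediag)^p/\|f(x_n)\|_\BB\to 0$, so $\mu_n\to\emptydgm$ in $(\MM^p,\Dp)$, whereas $\|\Phi(\mu_n)\|_\BB=\|f(x_n)\|_\BB^{-1}\|f(x_n)\|_\BB=1\not\to 0=\|\Phi(\emptydgm)\|_\BB$, contradicting continuity. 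Combining the two parts gives $f\in\Cp$.

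The only step that is not immediate is the vector-valued passage in the ``if'' direction — upgrading weak convergence of the finite measures $\mu_n^p$ to convergence of the Bochner integrals $\int g\,\dd\mu_n^p$ — and I expect this tightness-plus-uniform-approximation argument to be the main (mild) technical obstacle; when $\BB=\R^d$ it can be bypassed by working coordinate-wise. Everything else follows directly from Theorem~\ref{thm:conv_dp} applied to Dirac and scaled Dirac measures.
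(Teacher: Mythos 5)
Your proof is correct and follows essentially the same route as the paper's: Corollary~\ref{cor:Dp_to_weak_cv} for the ``if'' direction, and Dirac / rescaled Dirac test measures for the ``only if'' direction. The only substantive difference is that you spell out the passage from weak convergence of $\mu_n^p$ to convergence of the Bochner integrals $\int g\,\dd\mu_n^p$ for Banach-valued $g$ --- a step the paper simply asserts --- and your tightness-plus-uniform-approximation argument is a valid way to fill in that detail.
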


\begin{proof}
Consider first the case $\BB = \R$. Let $f \in \Cp$ and $\mu, \mu_1, \mu_2 \dots \in \MM^p$ be such that $\Dp(\mu_n, \mu) \to 0$. Recall the definition \eqref{eq:def_mu^p} of $\mu^{(p)}$. Using Corollary \ref{cor:Dp_to_weak_cv}, having $\Dp(\mu_n, \mu) \to 0$ means that $\mu^{(p)}_n \cvweak \mu^{(p)}$, and thus that
\[
\int_\upperdiag \frac{f(x)}{d(x,\thediag)^p} \dd \mu^{(p)}_n(x) \to  \int_\upperdiag \frac{f(x)}{d(x,\thediag)^p} \dd \mu^{(p)}(x),
\]
that is
\[ \Phi(\mu_n) = \int_\upperdiag f(x) \dd \mu_n(x) \to \int_\upperdiag f(x) \dd \mu(x) = \Phi(\mu), \]
i.e.~$\Phi$ is continuous with respect to $\Dp$.

Now, let $\BB$ be any Banach space. \cite[Theorem 2]{nielsen2011weak} states that if a sequence of measures $(\mu_n)_n$ weakly converges to $\mu$, then $\mu_n(f) \to \mu(f)$ for any continuous bounded function $g : \upperdiag \to \BB$. Applying this result to the sequence $(\mu^{(p)}_n$ with $g = f / d(\cdot, \thediag)^p$ yields the desired result.
\medbreak
Conversely, let $f: \upperdiag \to \BB$. Assume first that $f$ is not continuous in some $x \in \upperdiag$. There exist a sequence $(x_n)_n \in \upperdiag^\N$ such that $x_n \to x$ but $f(x_n) \nrightarrow f(x)$. Let $\mu_n = \delta_{x_n}$ and $\mu = \delta_{x}$. We have $\Dp(\mu_n,\mu) \to 0$, but $\mu_n(f) = f(x_n) \nrightarrow f(x_0) = \mu(f)$, so that the linear representation $\mu \mapsto \mu(f)$ cannot be continuous. \\
Then, assume that $f$ is continuous but that $x \mapsto \frac{f(x)}{d(x,\thediag)^p}$ is not bounded. Let thus $(x_n)_n \in \upperdiag^\N$ be a sequence such that $\left\| \frac{f(x_n)}{d(x_n,\thediag)^p} \right\| \to +\infty$. Define the measure $\mu_n \defeq \frac{1}{\|f(x_n)\|} \delta_{x_n}$. Observe that $\Dp(\mu_n, 0) = \frac{d(x_n,\thediag)^p}{\|f(x_n)\|} \to 0$ by hypothesis. However, $\| \mu_n(f) \| = 1$ for all $n$, allowing us to conclude once again that $\mu \mapsto \mu(f)$ cannot be continuous.

\end{proof}

Let us give some examples of such linear representations (which are thus continuous) commonly used in applications of TDA. Note that the following definitions do not rely on the fact that the input must be a persistence diagram and actually make sense for any persistence measure in $\MM^p$. See Figure \ref{fig:representations} for an illustration, in which computations are done with $p=1$ (and $p'=1$ for the weighted Betti curve).
\begin{itemize}
	\item \emph{Persistence surface and its variations.} Let $K:\R^2\to \R$  be a nonnegative Lipschitz continuous bounded function (e.g.~$K(x,y) = \exp\left(-\frac{\|x-y\|^2}{2}\right)$) and define $f:x  \in \upperdiag \mapsto d(x,\thediag)^p \times K(x,\cdot)$, so that $f(x) : \R^2 \to \R$ is a real-valued function. The corresponding representation $\Phi$ takes its values in $(C_b(\R^2),\|\cdot\|_\infty)$, the (Banach) space of continuous bounded functions. This representation is called the persistence surface and has been introduced with slight variations in different works \cite{tda:adams2017persistenceImages,tda:chen2015statistical,tda:kusano2016PWG,tda:reininghaus2015stable}.
	\item \emph{Persistence silhouettes.} Let $\Lambda(x,t) = \max\left(\frac{x_2-x_1}{2} - \left| t-\frac{x_2+x_1}{2}\right|,0\right)$ for $t\in \R$ and $x\in \upperdiag$. Then, defining $f:x\in \upperdiag \mapsto d(x,\thediag)^{p-1}\times \Lambda(x,\cdot)$, one has that $\|f(x)\|_\infty$ is proportional to $d(x,\thediag)^p$, so that the corresponding representation is continuous for $\Dp$. This representation is called the persistence silhouette, and was introduced in \cite{tda:chazal2014stochastic}. In particular, it consists in a weighted sum of the different functions of the persistence landscape \cite{tda:bubenik2017persistence}. The corresponding Banach space is $(C_b(\R),\|\cdot\|_\infty)$.
	\item \emph{Weighted Betti curves.} For $t\in\R$, define $B_t$ the rectangle $(-\infty ,t]\times [t,+\infty)$. Let $p, p'\geq 1$, and define $f:x\in \upperdiag \mapsto (t \mapsto d(x,\thediag)^{p-1/p'} \ones\{ x\in B_t \})$. Then $f(x)\in L_{p'}(\R)$ with $\|f(x)\|_{p'}$ proportional to $d(x,\thediag)^p$. The corresponding function $\Phi$ is the weighted Betti curve, which takes its values in the Banach space $(L_{p'}(\R),\|\cdot\|_{p'})$. In particular, one obtains the continuity of the classical Betti curves from $(\MM^1,\mathrm{OT}_1)$ to $L_1(\R)$.
\end{itemize}

\paragraph{Stability in the case $p=1$.} Continuity is a basic expectation when embedding a set of diagrams (or measures) in some Banach space $\BB$. One could however ask for more, e.g.~some Lipschitz regularity: given a representation $\Phi : \MM^p \to \BB$, one may want to have $\|\Phi(\mu) - \Phi(\nu)\| \leq C \cdot \Dp(\mu,\nu)$ for some constant $C$. This property is generally referred to as ``stability'' in the TDA community and is generally obtained with $p=1$, see for example \cite[Theorem 5]{tda:adams2017persistenceImages}, \cite[Theorem 3.3 \& 3.4]{tda:carriere2017sliced}, \cite[\S 4]{tda:som2018perturbation}, \cite[Theorem 2]{tda:reininghaus2015stable}, etc. 

Here, we still consider the case of linear representations, and show that stability always holds with respect to the distance $\mathrm{OT}_1$. Informally, this is explained by the fact that when $p=1$, the cost function $(x,y) \mapsto d(x,y)^p$ is actually a distance. 
\begin{proposition}
	Define  $\mathcal{L}$ the set of Lipschitz continuous functions $f : \groundspace \to \R$ with Lipschitz constant less than $1$ and that satisfy $f(\thediag) = 0$.	Let $T$ be any set, and consider a family $(f_t)_{t \in T}$ with $f_t \in \mathcal{L}$. Then the linear representation $\Phi : \mu \mapsto (\mu(f_t))_{t \in T}$ is $1$-Lipschitz continuous in the following sense:
	\begin{equation}
		\|\Phi(\mu) - \Phi(\nu)\|_\infty \defeq \sup_{t \in T} |(\mu - \nu)(f_t)| \leq \mathrm{OT}_1(\mu,\nu),
	\end{equation}
	for any measures $\mu, \nu \in \MM^1$.
\end{proposition}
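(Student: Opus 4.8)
The plan is to transfer the $1$-Lipschitz estimate through an optimal transport plan. First I would record two elementary facts. Since every $f_t \in \mathcal{L}$ vanishes on $\thediag$ and is $1$-Lipschitz, $|f_t(x)| = |f_t(x) - f_t(s(x))| \le d(x,\thediag)$ for all $x \in \upperdiag$, so $\int_\upperdiag |f_t|\,\dd\mu \le \Pers_1(\mu) < \infty$ and likewise $\int_\upperdiag |f_t|\,\dd\nu \le \Pers_1(\nu) < \infty$; in particular $\Phi(\mu)$ and $\Phi(\nu)$ are well-defined. Moreover $\mathrm{OT}_1(\mu,\nu) \le \mathrm{OT}_1(\mu,0) + \mathrm{OT}_1(0,\nu) = \Pers_1(\mu) + \Pers_1(\nu) < \infty$, and by Proposition \ref{prop:basic_prop} the set $\Opt_1(\mu,\nu)$ is non-empty, so I may fix $\pi \in \Opt_1(\mu,\nu)$ with $C_1(\pi) = \mathrm{OT}_1(\mu,\nu)$.

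Fix $t \in T$ and write $f = f_t$. Because $\mu$ and $\nu$ are supported on $\upperdiag$, the marginal conditions of Definition \ref{def:OT_for_Radon_measures} give, for any Borel $h : \groundspace \to [0,\infty)$ vanishing on $\thediag$, the identities $\int_{\groundspace \times \groundspace} h(x)\,\dd\pi(x,y) = \int_\upperdiag h\,\dd\mu$ and $\int_{\groundspace \times \groundspace} h(y)\,\dd\pi(x,y) = \int_\upperdiag h\,\dd\nu$ (the pieces carried by $\thediag \times \groundspace$, resp.\ $\groundspace \times \thediag$, contribute nothing since $h = 0$ there). Applying this to $|f|$ shows that $(x,y) \mapsto f(x)$ and $(x,y) \mapsto f(y)$ are both $\pi$-integrable, with integrals bounded by $\Pers_1(\mu)$, resp.\ $\Pers_1(\nu)$, so I may split
\[
\mu(f) - \nu(f) = \int_{\groundspace \times \groundspace} f(x)\,\dd\pi(x,y) - \int_{\groundspace \times \groundspace} f(y)\,\dd\pi(x,y) = \int_{\groundspace \times \groundspace} \bigl(f(x) - f(y)\bigr)\,\dd\pi(x,y).
\]
The $1$-Lipschitz property now gives $|f(x) - f(y)| \le d(x,y)$ pointwise, hence
\[
|(\mu - \nu)(f_t)| \le \int_{\groundspace \times \groundspace} d(x,y)\,\dd\pi(x,y) = C_1(\pi) = \mathrm{OT}_1(\mu,\nu).
\]

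Since this bound is uniform in $t$, taking the supremum over $t \in T$ yields $\|\Phi(\mu) - \Phi(\nu)\|_\infty \le \mathrm{OT}_1(\mu,\nu)$, and the equality $\|\Phi(\mu) - \Phi(\nu)\|_\infty = \sup_{t} |(\mu - \nu)(f_t)|$ is merely the definition of the supremum norm. The only point requiring care is the bookkeeping near the diagonal — ensuring that the mass $\pi$ sends to, or draws from, $\thediag$ does not enter the computation, which is exactly where $f_t(\thediag) = 0$ is used, together with the integrability checks that license splitting the integral — but this is routine rather than genuinely hard. Alternatively, one could invoke Kantorovich duality for $W_{1,\rho}$ on the quotient space $\tilde{\upperdiag}$ through Proposition \ref{prop:Dp_equals_Wp_finite} in the finite-mass case and then pass to the limit via Proposition \ref{prop:finite-to-global}, but the direct argument above imposes no restriction on the total masses.
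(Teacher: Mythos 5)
Your proof is correct and follows essentially the same route as the paper's: fix an optimal plan $\pi \in \Opt_1(\mu,\nu)$, use the marginal constraints together with $f_t(\thediag)=0$ to rewrite $(\mu-\nu)(f_t)$ as $\iint (f_t(x)-f_t(y))\,\dd\pi(x,y)$, and bound by the $1$-Lipschitz property. The extra integrability and diagonal-bookkeeping checks you supply are details the paper leaves implicit, but the argument is the same.
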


\begin{proof}
Consider $\mu, \nu \in \MM^1$, and $\pi \in \Opt(\mu,\nu)$ an optimal transport plan. Let $t\in T$. We have:
\begin{align*} (\mu - \nu)(f_t) &= \int_\upperdiag f_t(x) \dd \mu(x) - \int_\upperdiag f_t(y) \dd \nu(y) = \iint_{\groundspace \times \groundspace} (f_t(x) - f_t(y)) \dd \pi(x,y) \\
&\leq \iint_{\groundspace \times\groundspace} d(x,y) \dd \pi(x,y) = \mathrm{OT}_1(\mu,\nu),
\end{align*}
and thus, $\|\Phi(\mu) - \Phi(\nu)\|_\infty\leq \mathrm{OT}_1(\mu,\nu)$. 
\end{proof}

In particular, if $f:\groundspace \to \BB$, where $\BB$ is some Banach space, is $1$-Lipschitz with $f(\thediag)=0$, then one can let $T=\BB_1^*$ (the unit ball of the dual of $\BB$) and $f_t(x) \defeq t(f(x))$ for $t\in T$. If $\Phi(\mu)=\mu(f)$, we then obtain that $\|\Phi(\mu)-\Phi(\nu)\|\leq \mathrm{OT}_1(\mu,\nu)$, i.e. that $\Phi:(\MM^1,\mathrm{OT}_1)\to (\BB,\|\cdot\|)$ is $1$-Lipschitz.

\begin{remark}
One actually has a converse of such an inequality, i.e.~it can be shown that 
	\begin{equation}\label{eq:KantoRubin_duality}
		\mathrm{OT}_1(\mu,\nu) = \max_{f \in \mathcal{L}} (\mu-\nu)(f),
	\end{equation}
	This equation is an adapted version of the very well-known Kantorovich-Rubinstein formula, which is itself a particular version in the case $p=1$ of the duality formula in optimal transport, see for example \cite[Theorem 5.10]{ot:villani2008optimal} and \cite[Theorem 1.39]{otam}. A proof of Eq. \eqref{eq:KantoRubin_duality} would require to introduce several optimal transport notions. The interested reader can consult Proposition 2.3 in \cite{ot:figalli2010newTransportationDistance} for details.
\end{remark}

\subsection{Convergence of random persistence measures in the thermodynamic regime}\label{subsec:thermo}
Geometric probability is the study of geometric quantities arising naturally from point processes in $\R^d$. Recently, several works \cite{tda:bobrowski2017maximally,tda:divolpolonik,tda:hiraoka2016limitTheoremForPD,tda:goel2018asymptotic,tda:schweinhart2018weighted} used techniques originating from this field to understand the persistent homology of such point processes. Let $\X_n \defeq \{X_1,\dots,X_n\}$ be a $n$-sample of a distribution having some density on the cube $[0,1]^d$, bounded from below and above by positive constants. Extending the work of Hiraoka, Shirai and Trinh \cite{tda:hiraoka2016limitTheoremForPD}, Divol and Polonik \cite{tda:divolpolonik} show laws of large numbers for the persistence diagrams $\dgm(\X_n)$ of $\X_n$, built with either the \v Cech or Rips filtration. More precisely, \cite[Theorem 5]{tda:divolpolonik} states that there exists a Radon measure $\mu$ on $\upperdiag$ such that, almost surely, the sequence of measures $\mu_n \defeq n^{-1}\dgm(n^{1/d}\X_n)$ converges vaguely to $\mu$ and \cite[Theorem 6]{tda:divolpolonik} implies that $\Pers_p(\mu_n)$ converges to $\Pers_p(\mu)<\infty$ for all $p\geq 1$. Those two facts (vague convergence and convergence of the total persistence), along with Theorem \ref{thm:conv_dp}, gives the following result:
\begin{proposition}\label{prop:application_randomPD}
$\Dp(\mu_n,\mu) \xrightarrow[n\to \infty]{} 0 \text{ almost surely}.$
\end{proposition}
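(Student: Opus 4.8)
The plan is to derive this directly from Theorem~\ref{thm:conv_dp}: since $\mu_n,\mu\in\MM^p$, the statement $\Dp(\mu_n,\mu)\to 0$ is equivalent to the conjunction of $\mu_n\cvvague\mu$ and $\Pers_p(\mu_n)\to\Pers_p(\mu)$, and both of these are already available in the literature. So the proof reduces to checking the integrability hypotheses and then assembling the two known almost-sure statements on a common full-measure event.

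First I would verify membership in $\MM^p$. Each $\mu_n$ is the rescaling by $n^{-1}$ of the persistence diagram of a finite point cloud, hence a finite measure with finitely many atoms; in particular $\Pers_p(\mu_n)<\infty$, so $\mu_n\in\MM^p$ deterministically. For the limit measure, \cite[Theorem 1]{tda:divolpolonik} asserts that $\mu$ has finite moments of all orders, and in particular $\Pers_p(\mu)<\infty$, so $\mu\in\MM^p$. Next, I would recall that $\mu_n\cvvague\mu$ almost surely is exactly \cite[Theorem 1.5]{tda:hiraoka2016limitTheoremForPD} (as extended to densities bounded above and below, and to the Rips filtration, in \cite{tda:divolpolonik}), while $\Pers_p(\mu_n)\to\Pers_p(\mu)$ almost surely is \cite[Theorem 1]{tda:divolpolonik}. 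Letting $E_1$ denote the probability-one event carrying the first convergence and $E_2$ the probability-one event carrying the second, one has $\prob(E_1\cap E_2)=1$, and on $E_1\cap E_2$ both hypotheses of Theorem~\ref{thm:conv_dp} hold simultaneously, whence $\Dp(\mu_n,\mu)\to 0$ there. This establishes the proposition.

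I do not expect a genuine obstacle here: all the analytic content — the existence of the vague limit and, more delicately, the uniform control on the mass lying far from the diagonal that forces convergence of the total $p$-persistence — was already carried out in the cited works. The only role of the present formalism is that Theorem~\ref{thm:conv_dp} repackages ``vague convergence plus convergence of $p$-persistence'' as convergence for the transport metric $\Dp$. The sole point requiring a word of care is that the two almost-sure statements must be combined on a single event, which is automatic since a finite intersection of full-measure events is again of full measure.
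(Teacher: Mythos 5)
Your proof is correct and follows exactly the route the paper intends: the proposition is presented there as an immediate consequence of Theorem~\ref{thm:conv_dp} combined with the almost-sure vague convergence from \cite{tda:hiraoka2016limitTheoremForPD} and the almost-sure convergence of $\Pers_p(\mu_n)$ to $\Pers_p(\mu)$ from \cite{tda:divolpolonik}. Your additional care in checking that $\mu_n,\mu\in\MM^p$ and in intersecting the two full-measure events only makes explicit what the paper leaves implicit.
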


\paragraph{Numerical illustration of the convergence in the one-dimensional case.}
In dimension $d \geq 2$, there is no known closed-form expression for the limit $\mu$. However, in the case $d=1$, authors in \cite[Remark 2 (b)]{tda:divolpolonik} show that if the $X_i$ are $n$ i.i.d~realizations of a random variable $X$ admitting a density $\kappa$ supported on $[0,1]$, bounded from below and above by positive constants, then (the ordinate of) $\mu$---which is supported on $\{0\} \times (0, +\infty)$ as we consider the Rips filtration in homology dimension $0$---admits $\varphi(u) \defeq \E_{X \sim \kappa} [\exp(- u \kappa(X))]$ as density. In particular, if $X$ is uniform, then (the ordinate of) $\mu$ admits $u \mapsto e^{-u}$ as density.

It allows us to realize a simple numerical experiment: we sample $n$ points $(X_1 \dots X_n)$ uniformly on $[0,1]$ and then compute the corresponding persistence diagram using the Rips filtration, whose points are denoted by $(0, t_1), \dots,$ $(0, t_{n-1})$ (note that we removed the point $(0, +\infty)$). We can now introduce the one-dimensional measure $\mu_n = \frac{1}{n} \sum_{k=1}^{n-1} \delta_{t_k}$ and compute $\Dp(\mu_n, \mu)$ in closed form using Proposition \ref{prop:Dp_equals_Wp_finite} and the fact that computing the distance $W_p$ in dimension $d=1$ is particularly easy (see \cite[Chapter 2]{otam}). See Figure \ref{fig:illu_cv} for an illustration.

\begin{figure}
	\center
	\includegraphics[width=0.67\textwidth]{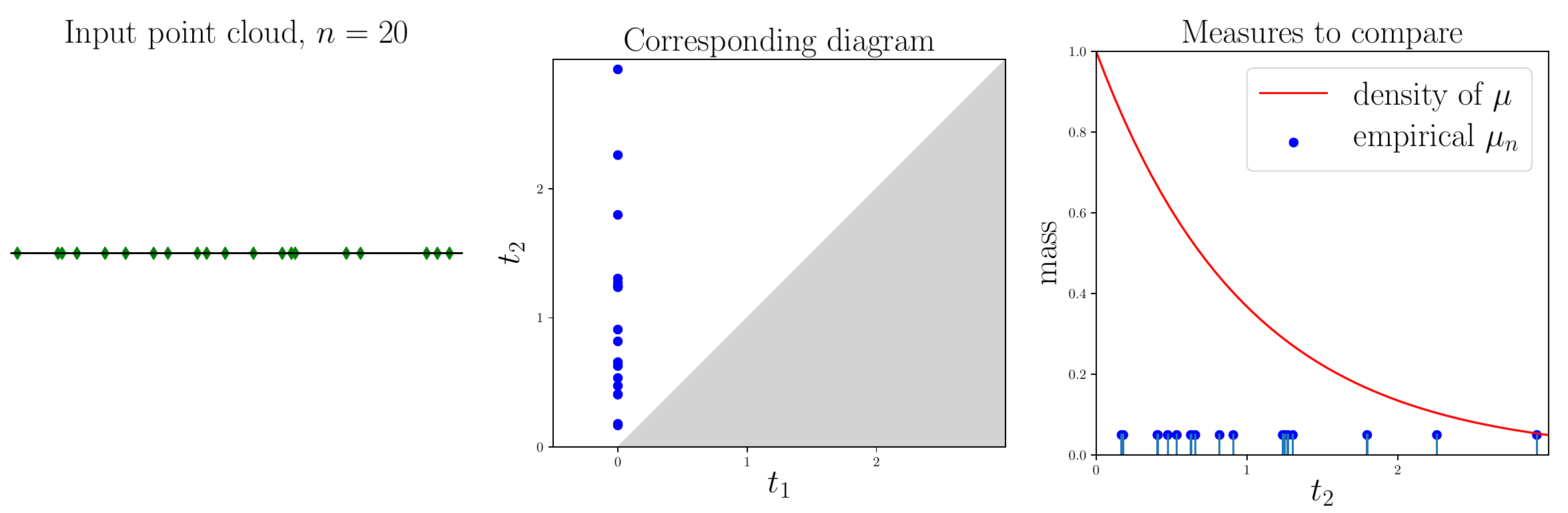}
	\includegraphics[width=0.32\textwidth]{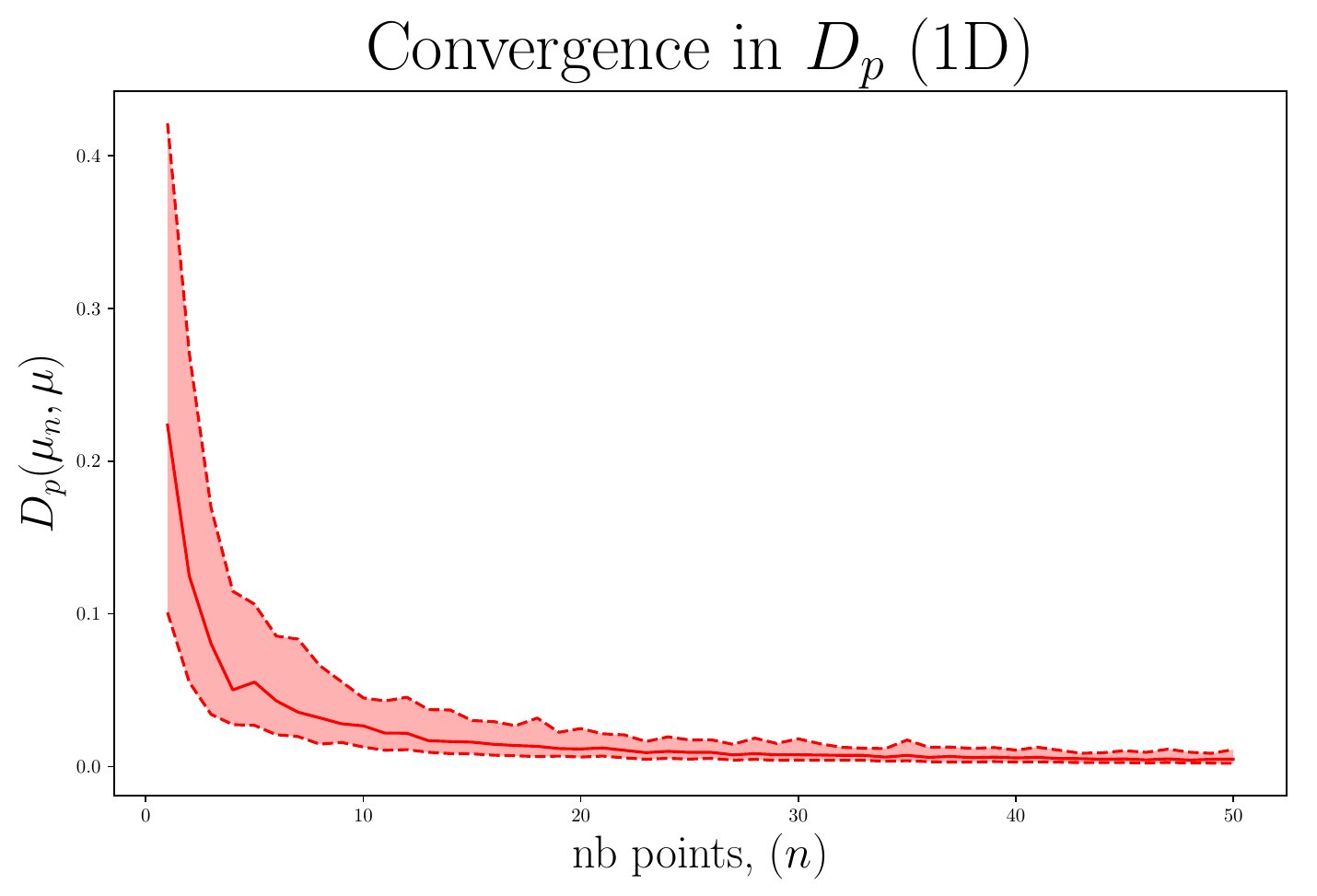}
	\caption{Numerical illustration of $D_p$ convergence in the one dimensional case. From left to right: an example of input point cloud with $n=20$ points ; the corresponding persistence diagram ; the (ordinate of) the two measures we compare: $\mu_n$ is the empirical one. Most right: median $\Dp(\mu_n, \mu)$ for $n = 2 \dots 50$, over $100$ runs along with 90\% and 10\% percentiles. Computations made with $p=2$.}
	\label{fig:illu_cv}
\end{figure}

\subsection{Stability of the expected persistence diagrams}\label{subsec:stability_expectedPD}
Given an i.i.d.~sample of $N$ persistence diagrams $a_1 \dots a_N$, a natural persistence measure to consider is their linear sample mean $\overline{a} \defeq \frac{1}{N} \sum_{1 \leq i \leq N} a_i$. More generally, given $\P \in \WW^p(\MM^p)$, and $\bm{\mu} \sim \P$, one may want to define $\E[\bm{\mu}]$ the linear expectation of $\P$ in the same vein. A well-suited definition of the linear expectation requires technical care (basically, turning the finite sum into a Bochner integral) and is detailed in Appendix \ref{sec:details_expectations}. It however satisfies the natural following characterization---that is sufficient to understand this section: 
\begin{equation}
\forall K \subset \upperdiag\ \text{compact},\ \E[\bm{\mu}](K) = \E[\bm{\mu}(K)].
\end{equation}
The behavior of such measures is studied in \cite{tda:divol2018density}, which shows that they have densities with respect to the Lebesgue measure in a wide variety of settings. A natural question is the stability of the linear expectations of random diagrams with respect to the underlying phenomenon generating them. The following proposition gives a positive answer to this problem, showing that given two close probability distributions $\P$ and $\P'$ supported on $\MM^p$, their linear expectations are close for the metric $\Dp$. 
\begin{proposition}\label{prop:Dp_jensen}
Let $\P,\P' \in \WW^p(\MM^p)$. Let $\bm \mu\sim P$ and $\bm\mu'\sim P'$. Then, 
 we have $\Dp(\E[\bm{\mu}],\E[\bm{\mu'}]) \leq W_{p,\Dp}(P,P')$.
\end{proposition}
The proof is postponed to Appendix \ref{sec:details_expectations}.

Using stability results on the distances $d_p$ ($= \Dp$) between persistence diagrams \cite{tda:cohen2010lipschitzStablePersistence}, one is able to obtain a more precise control between the expectations in some situations. For $\mathbb{Y}$ a sample in some metric space, denote by $\dgm(\mathbb{Y})$ the persistence diagram of $\mathbb{Y}$ built with the \v Cech filtration.
\begin{proposition}\label{prop:stability_expectedPD}
Let $\xi,\xi'$ be two probability measures on $\R^d$. Let $\X_n$ (resp. $\X'_n$) be a $n$-sample of law $\xi$ (resp. $\xi'$). Then, for any $k > d$, and any $p \geq k+1$,
\begin{equation}
\Dp^p(\E[\dgm(\X_n)],\E[\dgm(\X'_n)]) \leq C_{k,d} \cdot n \cdot  W_{p-k}^{p-k}(\xi,\xi')
\end{equation}
where $C_{k,d} \defeq C \diam(\X)^{k-d}\frac{k}{k-d}$ for some constant $C$ depending only on $\X$.

In particular, letting $p \rightarrow \infty$, we obtain a bottleneck stability result:
\begin{equation}
	\Dinf(\E[\dgm(\X_n)],\E[\dgm(\X'_n)]) \leq W_{\infty}(\xi,\xi').
\end{equation}
\end{proposition}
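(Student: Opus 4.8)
The plan is to combine the Jensen-type inequality of Proposition~\ref{prop:Dp_jensen} with the $L_p$-stability theorem for persistence of Lipschitz functions of \cite{tda:cohen2010lipschitzStablePersistence}. Fix $k>d$ and $p\geq k+1$ (so $p-k\geq 1$), and let $\gamma\in\Pi(\xi,\xi')$ be an optimal coupling for the Wasserstein distance $W_{p-k}$ on $(\X,\rho)$. Drawing $n$ i.i.d.\ pairs $(X_i,X_i')_{i=1}^n$ from $\gamma$ realizes $\X_n\sim\xi^{\otimes n}$ and $\X_n'\sim(\xi')^{\otimes n}$ on the same probability space; pushing this joint law forward through $\mathbb{Y}\mapsto\dgm(\mathbb{Y})$ gives a coupling $\pi\in\Pi(\P,\P')$ of $\P\defeq\mathrm{law}(\dgm(\X_n))$ and $\P'\defeq\mathrm{law}(\dgm(\X_n'))$. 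These laws lie in $\WW^p(\MM^p)$: since $\X$ is compact, every such \v Cech diagram (with essential/infinite-persistence classes discarded, as in \cite{tda:divol2018density}) is supported in the bounded region $\{0\le t_1<t_2\le\diam(\X)\}$ and, by the degree-$p$ total persistence bound on a $d$-manifold with $p>d$, has finite $p$-persistence deterministically; in particular $\E[\dgm(\X_n)]$ is a finite measure with bounded support. Proposition~\ref{prop:Dp_jensen} together with $\Dp=d_p$ on diagrams (Proposition~\ref{prop:Dp_equals_dp}) then yields
\[
\Dp^p(\E[\dgm(\X_n)],\E[\dgm(\X_n')])\ \leq\ \E_\pi\!\left[\Dp^p(\dgm(\X_n),\dgm(\X_n'))\right]\ =\ \E\!\left[d_p^p(\dgm(\X_n),\dgm(\X_n'))\right].
\]

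Next I would bound the integrand pointwise on the coupled space. The \v Cech diagram of a finite set $\mathbb{Y}\subset\X$ is the sublevel-set persistence diagram of the $1$-Lipschitz distance function $d_\mathbb{Y}\colon x\mapsto\min_{y\in\mathbb{Y}}\rho(x,y)$, and for every $x\in\X$ one has $|d_{\X_n}(x)-d_{\X_n'}(x)|\leq\max_{1\le i\le n}\rho(X_i,X_i')$, hence $\|d_{\X_n}-d_{\X_n'}\|_\infty\leq\max_i\rho(X_i,X_i')$. Applying the $L_p$-stability theorem of \cite{tda:cohen2010lipschitzStablePersistence} with $f=d_{\X_n}$, $g=d_{\X_n'}$ (both $1$-Lipschitz and, for a.e.\ realization, tame) on the triangulable compact space $\X$ gives, for $p\geq k+1$,
\[
d_p^p(\dgm(\X_n),\dgm(\X_n'))\ \leq\ C_\X^{(k)}\,\|d_{\X_n}-d_{\X_n'}\|_\infty^{\,p-k}\ \leq\ C_\X^{(k)}\Big(\max_{1\le i\le n}\rho(X_i,X_i')\Big)^{p-k},
\]
where $C_\X^{(k)}$ is the degree-$k$ total persistence constant of $\X$; for a compact $d$-dimensional Riemannian manifold and $k>d$ this constant is at most $C\diam(\X)^{k-d}\tfrac{k}{k-d}=C_{k,d}$, again by \cite{tda:cohen2010lipschitzStablePersistence}.

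The last step is the elementary bound $\big(\max_i a_i\big)^{p-k}\leq\sum_{i=1}^n a_i^{\,p-k}$ for nonnegative reals, which replaces the maximum by a sum and trivialises the expectation: since the pairs $(X_i,X_i')$ are i.i.d.\ from the $W_{p-k}$-optimal $\gamma$,
\[
\E\Big[\sum_{i=1}^n\rho(X_i,X_i')^{p-k}\Big]\ =\ n\,\E_\gamma\!\left[\rho(X,X')^{p-k}\right]\ =\ n\,W_{p-k}^{p-k}(\xi,\xi').
\]
Chaining the three displays gives $\Dp^p(\E[\dgm(\X_n)],\E[\dgm(\X_n')])\leq C_{k,d}\,n\,W_{p-k}^{p-k}(\xi,\xi')$. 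For the bottleneck statement, take $p$-th roots: $\Dp\leq(C_{k,d}\,n)^{1/p}\,W_{p-k}(\xi,\xi')^{(p-k)/p}$; fixing $k$ and letting $p\to\infty$, the prefactor tends to $1$, $(p-k)/p\to1$, $W_{p-k}(\xi,\xi')\to W_\infty(\xi,\xi')$, and $\Dp\to\Dinf$ on the finite, boundedly supported measures $\E[\dgm(\X_n)],\E[\dgm(\X_n')]$ (via Proposition~\ref{prop:Dp_equals_Wp_finite} and the classical $W_q\nearrow W_\infty$), which yields $\Dinf(\E[\dgm(\X_n)],\E[\dgm(\X_n')])\leq W_\infty(\xi,\xi')$.

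The main obstacle I anticipate is invoking \cite{tda:cohen2010lipschitzStablePersistence} in precisely the right form---the $L_p$-stability inequality with the correct dependence on the Lipschitz constants, together with the bound on the degree-$k$ total persistence of Lipschitz functions on a Riemannian $d$-manifold---and the attendant bookkeeping: checking tameness/genericity of the distance functions $d_{\X_n}$, disposing of essential (infinite-persistence) homology classes so that all diagrams genuinely sit in $\MM^p$, and confirming $\P,\P'\in\WW^p(\MM^p)$ so that Proposition~\ref{prop:Dp_jensen} applies. By contrast the probabilistic content is light: once the coupling induced by the $W_{p-k}$-optimal $\gamma$ is in place, the inequality $\max\leq\sum$ does essentially all the remaining work.
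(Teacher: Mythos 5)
Your proof is correct and follows essentially the same route as the paper: the Jensen-type inequality of Proposition~\ref{prop:Dp_jensen} combined with the Wasserstein stability theorem of \cite{tda:cohen2010lipschitzStablePersistence} applied under a coupling of the two samples. The only real difference is that where the paper cites Lemma 15 of \cite{tda:chazal2015subsampling} for the bound $W_{H,p-k}^{p-k}(\xi^{\otimes n},(\xi')^{\otimes n})\leq n\, W_{p-k}^{p-k}(\xi,\xi')$, you re-derive it directly via the product coupling of a $W_{p-k}$-optimal plan and the inequality $\bigl(\max_i a_i\bigr)^{p-k}\leq\sum_i a_i^{p-k}$; your additional checks of the hypotheses of Proposition~\ref{prop:Dp_jensen} and of the $p\to\infty$ limit supply detail the paper omits.
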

\begin{proof}
Let $P$ be the law of $\dgm(\X_n)$, $P'$ be the law of $\dgm(\X'_n)$ and let $\gamma$ be any coupling between $\X_n$ a $n$-sample of law $\xi$, and $\X'_n$ a $n$-sample  of law  $\xi'$. Then, the law of $(\dgm(\X_n),\dgm(\X'_n))$ is a coupling between $P$ and $P'$. Thus, Proposition \ref{prop:Dp_jensen} yields
\[ \Dp^p(\E[\dgm(\X_n)],\E[\dgm(\X'_n)]) \leq \E_{\gamma}[\Dp^p(\dgm(\X_n),\dgm(\X'_n))]. \]
 It is stated in \cite[Wasserstein Stability Theorem]{tda:cohen2010lipschitzStablePersistence} that \[\Dp^p(\dgm(\X_n),\dgm(\X'_n))  \leq  C_{k,d} H(\X_n,\X'_n)^{p-k},\] where $C_{k,d} \defeq C \diam(\X)^{k-d}\frac{k}{k-d}$ for some constant $C$ depending only on $\X$, and $H$ is the Hausdorff distance between sets. By taking the infimum on transport plans $\gamma$, we obtain 
 \[ \Dp^p(\E[\dgm(\X_n)],\E[\dgm(\X'_n)]) \leq C_{k,d} W_{p-k,H}^{p-k}(\xi^{\otimes n},(\xi')^{\otimes n}), \]
where $W_{H,p}$ is the $p$-Wasserstein distance between probability distributions on compact sets of the manifold $\X$, endowed with the Hausdorff distance. Lemma 15 of \cite{tda:chazal2015subsampling} states that \[W_{p-k,H}^{p-k}(\xi^{\otimes n},(\xi')^{\otimes n}) \leq n \cdot W_{p-k}^{p-k}(\xi,\xi'),\]
 concluding the proof. 
\end{proof}

Note that this proposition illustrates the usefulness of introducing new distances $\Dp$: considering the proximity between linear expectations requires to extend the metrics $d_p$ to Radon measures.

\section{Conclusion and further work.}
In this article, we introduce the space of persistence measures, a generalization of persistence diagrams which naturally appears in different applications (e.g.~when studying persistence diagrams coming from a random process). We provide an analysis of this space that also holds for the subspace of persistence diagrams. In particular, we observe that many notions used for the statistical analysis of persistence diagrams can be expressed naturally using this formalism based on optimal partial transport. We give characterizations of convergence of persistence diagrams and measures with respect to optimal transport metrics in terms of convergence for measures. We then prove existence and consistency of $p$-Fr\'echet means for any probability distribution of persistence diagrams and measures, extending previous work in the TDA community. We illustrate the interest of introducing the persistence measures space and its metrics in statistical applications of TDA: continuity of diagram representations, law of large number for persistence diagrams, stability of diagrams in a random settings.

We believe that closing the gap between optimal transport metrics and those of topological data analysis will help to develop new approaches in TDA and give a better understanding of the statistical behavior of topological descriptors. It allows one to address new problems, in particular those involving continuous counterpart of persistence diagrams, and invites one to consider various theoretical and computational tools developed in optimal transport theory that could be of major interest in topological data analysis: gradient flows \cite{ot:santambrogio2017gradientflow} in the persistence diagrams space, entropic regularization \cite{ot:cuturi2013sinkhorn} of metrics, use of diagram metrics in (deep) learning pipelines \cite{ot:genevay2018learning}, along with the use of well developed libraries \cite{pot}.

\clearpage

\appendix

\clearpage
\section{Elements of measure theory}
\label{subsec:background_measures}
In the following, $\Omega$ denotes a locally compact Polish metric space (i.e.~a Polish space equipped with a distinguished Polish metric).
\begin{definition} 
The space $\MM(\Omega)$ of Radon measures supported on $\Omega$ is the space of Borel measures which give finite mass to every compact set of $\Omega$. The vague topology on $\MM(\Omega)$ is the coarsest topology such that the maps $\mu \mapsto \mu(f)\defeq \int f\dd \mu$ are continuous for every $f \in C_c(\Omega)$, the space of continuous functions with compact support in $\Omega$. 
\end{definition}

Radon measures on a general space are also required to be regular (i.e.~well approximated by above by open sets and by below by compact sets). However, on a locally compact Polish metric space (such as $\Omega$), regularity is implied by the above definition (see \cite[Section 7.1 and Theorem 7.8]{folland2013real} for details).
\begin{definition}
	Denote by $\MM_f(\Omega)$ the space of finite Borel measures on $\Omega$. The weak topology on $\MM_f(\Omega)$ is the coarsest topology such that the maps $\mu \mapsto \mu(f)$ are continuous for every $f \in C_b(\Omega)$, the space of continuous bounded functions in $\Omega$. 
\end{definition}

See \cite[Chapter 8]{bogachev2007measure} for more details on the weak topology on the set of finite Borel measures (which coincide with the set of Baire measures for $\Omega$ a metrizable space).
We denote by $\cvvague$ the vague convergence and $\cvweak$ the weak convergence.

\begin{definition} A set $F \subset \MM(\Omega)$ is said to be tight if, for every $\epsilon >0$, there exists a compact set $K$ with $\mu(\Omega\backslash K)\leq \epsilon$ for every $\mu\in F$.
\end{definition}

The following propositions are standard results. Corresponding proofs can be found for instance in \cite[Section 15.7]{kallenberg2017random}.

\begin{proposition}\label{prop:rel_cpt_vague}
 A set $F \subset \MM(\Omega)$ is relatively compact for the vague topology if and only if for every compact set $K$ included in $\Omega$, 
\[
\sup\{\mu(K),\ \mu \in F\} < \infty.
\]
\end{proposition}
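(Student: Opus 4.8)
The plan is to prove the two implications separately; the forward one is soft, while the converse requires a diagonal extraction combined with the Riesz representation theorem. For the direct implication, suppose $F$ is relatively compact and fix a compact set $K\subset\Omega$. Since $\Omega$ is locally compact Hausdorff, Urysohn's lemma produces $f\in C_c(\Omega)$ with $f\ge 0$ and $f\equiv 1$ on $K$. By definition of the vague topology the evaluation $\mu\mapsto\mu(f)$ is continuous, hence bounded on the compact set $\overline F$, so $\sup_{\mu\in F}\mu(K)\le\sup_{\mu\in F}\mu(f)<\infty$; no further work is needed here.

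The converse is the substantial direction. I would first record that $\Omega$, being locally compact and Polish, is $\sigma$-compact: write $\Omega=\bigcup_m K_m$ with each $K_m$ compact and $K_m\subset\mathring K_{m+1}$; I would also use the standard fact that the vague topology on $\MM(\Omega)$ is metrizable (in fact Polish), so that relative compactness is equivalent to sequential relative compactness. Given a sequence $(\mu_n)_n$ in $F$, the goal is to produce a vaguely convergent subsequence. Using separability of $C_0(\Omega)$, pick a countable family $(f_k)_k\subset C_c(\Omega)$ such that each $f_k$ is supported in some $K_{m(k)}$ and such that $\{f_k : \operatorname{supp} f_k\subset K_m\}$ is $\|\cdot\|_\infty$-dense in $C_c(\mathring K_m)$ for every $m$. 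By hypothesis $|\mu_n(f_k)|\le\|f_k\|_\infty\,\sup_{\mu\in F}\mu(K_{m(k)})<\infty$, so each real sequence $(\mu_n(f_k))_n$ is bounded, and a diagonal extraction yields a subsequence $(\mu_{n_j})_j$ along which $\mu_{n_j}(f_k)$ converges for every $k$.

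It then remains to pass to arbitrary $f\in C_c(\Omega)$: if $\operatorname{supp} f\subset K_m$, approximating $f$ uniformly by the $f_k$'s with supports kept inside $K_{m+1}$ and invoking once more the uniform bound $\sup_{\mu\in F}\mu(K_{m+1})<\infty$ shows that $(\mu_{n_j}(f))_j$ is Cauchy, hence converges to a limit $L(f)$. The functional $L$ is linear and nonnegative on $C_c(\Omega)$, so the Riesz–Markov–Kakutani theorem gives a measure $\mu$ (automatically Radon, since $\Omega$ is second countable) with $L(f)=\mu(f)$ for all $f\in C_c(\Omega)$, i.e.\ $\mu_{n_j}\cvvague\mu$; this produces the desired limit point. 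The main obstacle is this converse implication — specifically the bookkeeping of supports through the diagonal extraction and the uniform approximation step — together with the (standard but not entirely trivial) ingredients that the vague topology on $\MM(\Omega)$ is metrizable and that nonnegative linear functionals on $C_c(\Omega)$ are represented by Radon measures; a reader willing to cite \cite{kallenberg2017random} for these can keep the argument short.
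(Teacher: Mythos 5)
Your proof is correct: the forward direction via Urysohn and continuity of $\mu\mapsto\mu(f)$ on the compact closure, and the converse via an exhaustion by compacts, a diagonal extraction over a countable locally dense family in $C_c(\Omega)$, the uniform local mass bound to pass to arbitrary test functions, and Riesz--Markov to identify the limit functional, is exactly the standard argument. The paper does not prove this proposition itself but cites it as a standard result from \cite{kallenberg2017random}, and your write-up is a faithful rendition of that standard proof, with the one genuinely delicate point (keeping the supports of the approximating functions inside a fixed compact so the mass bound applies uniformly) correctly handled.
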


\begin{proposition}[Prokhorov's theorem]\label{prop:rel_cpt_weak} A set $F \subset \MM_f(\Omega)$ is relatively compact for the weak topology if and only if $F$ is tight and $\sup_{\mu \in F} \mu(\Omega) < \infty$.
\end{proposition}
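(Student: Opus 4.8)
The plan is to prove the two implications separately. Throughout, we use that, $\Omega$ being separable and metrizable, both the vague topology on $\MM(\Omega)$ and the weak topology on $\MM_f(\Omega)$ are metrizable (for the latter, e.g.\ by the bounded-Lipschitz distance), so that relative compactness may be checked along sequences; and that, $\Omega$ being locally compact and second countable, it is $\sigma$-compact, so we fix once and for all an exhaustion $\Omega=\bigcup_k K_k$ by compact sets with $K_k\subset\mathring{K}_{k+1}$, together with cut-offs $g_k\in C_c(\Omega)$ satisfying $\ones_{K_k}\le g_k\le\ones$ (available by Urysohn's lemma in the locally compact setting). Note that $g_k\to\ones$ pointwise on $\Omega$ and that any compact subset of $\Omega$ is contained in $K_k$ for $k$ large.

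\emph{Relative compactness $\Rightarrow$ tightness and uniform mass bound.} The mass bound is soft: $\mu\mapsto\mu(\Omega)$ is continuous for the weak topology (test function $\ones\in C_b(\Omega)$), hence bounded on the compact set $\overline{F}$. For tightness we argue by contradiction: if $F$ is not tight there are $\epsilon_0>0$ and, for every $k$, a measure $\mu_k\in F$ with $\mu_k(\Omega\setminus K_k)>\epsilon_0$. Extract a weakly convergent subsequence $\mu_{k_j}\cvweak\nu\in\MM_f(\Omega)$. Fixing $m$ and then taking $j$ large enough that $\supp(g_m)\subset K_{k_j}$, we get $\mu_{k_j}(g_m)\le\mu_{k_j}(K_{k_j})=\mu_{k_j}(\Omega)-\mu_{k_j}(\Omega\setminus K_{k_j})<\mu_{k_j}(\Omega)-\epsilon_0$, so letting $j\to\infty$ (and using $\mu_{k_j}(\Omega)\to\nu(\Omega)$) yields $\nu(g_m)\le\nu(\Omega)-\epsilon_0$ for every $m$. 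But $g_m\to\ones$ pointwise and $\nu$ is finite, so dominated convergence gives $\nu(g_m)\to\nu(\Omega)$, a contradiction.

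\emph{Tightness and uniform mass bound $\Rightarrow$ relative compactness.} Set $M\defeq\sup_{\mu\in F}\mu(\Omega)<\infty$ and let $(\mu_n)_n\subset F$. Since $\mu_n(K)\le M$ for every compact $K$, Proposition~\ref{prop:rel_cpt_vague} together with metrizability of the vague topology provides a subsequence, still written $(\mu_n)_n$, with $\mu_n\cvvague\mu$ for some $\mu\in\MM(\Omega)$. This $\mu$ is finite: for a compact $K$, choosing $f\in C_c(\Omega)$ with $\ones_K\le f\le\ones$ gives $\mu(K)\le\mu(f)=\lim_n\mu_n(f)\le M$, and $\mu(\Omega)=\lim_k\mu(K_k)\le M$ by $\sigma$-compactness, so $\mu\in\MM_f(\Omega)$. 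It remains to upgrade $\mu_n\cvvague\mu$ to $\mu_n\cvweak\mu$. Fix $g\in C_b(\Omega)$, which we may assume satisfies $\|g\|_\infty\le 1$, and $\epsilon>0$. By tightness of $F$ and since $\mu(\Omega\setminus K_k)\to 0$ (as $\mu(K_k)\uparrow\mu(\Omega)<\infty$), pick $k$ with $\mu_n(\Omega\setminus K_k)\le\epsilon$ for all $n$ and $\mu(\Omega\setminus K_k)\le\epsilon$. Then $g\,g_k\in C_c(\Omega)$, so $\mu_n(g\,g_k)\to\mu(g\,g_k)$, while $|\mu_n(g(\ones-g_k))|\le\mu_n(\Omega\setminus K_k)\le\epsilon$ and likewise $|\mu(g(\ones-g_k))|\le\epsilon$; hence $\limsup_n|\mu_n(g)-\mu(g)|\le 2\epsilon$. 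As $\epsilon$ and $g$ were arbitrary, $\mu_n\cvweak\mu$, so every sequence in $F$ has a subsequence converging weakly to a point of $\MM_f(\Omega)$, i.e.\ $F$ is relatively compact.

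The main obstacle is the passage from vague to weak convergence in the second implication --- equivalently, ruling out mass ``escaping to infinity'' --- which is precisely where tightness enters and is the whole content of Prokhorov's theorem; the accompanying delicate point is the contradiction argument for tightness in the first implication, which relies on the $\sigma$-compactness of locally compact Polish spaces and on a careful choice of the cut-offs $g_k$.
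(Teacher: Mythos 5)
The paper offers no proof of this proposition: it is labelled a standard result and delegated to Kallenberg (Section 15.7), so there is nothing internal to compare against. Your argument is correct and self-contained modulo two external facts that you rightly flag: the metrizability of the weak topology on $\MM_f(\Omega)$ (and of the vague topology on $\MM(\Omega)$), which is what licenses checking compactness along sequences, and Proposition~\ref{prop:rel_cpt_vague}, which the paper also states without proof. Your route is the classical one for \emph{locally compact} second countable spaces: extract a vaguely convergent subsequence from the uniform mass bound, then use tightness to prevent mass from escaping to infinity and upgrade vague to weak convergence via the cut-offs $g_k$; the converse direction's contradiction argument, passing to a weak limit $\nu$ and playing $\nu(g_m)\le\nu(\Omega)-\epsilon_0$ against dominated convergence, is also sound (the inequality $\mu_{k_j}(g_m)\le\mu_{k_j}(K_{k_j})$ holds because $0\le g_m\le\ones$ with $\supp(g_m)\subset K_{k_j}$ for $j$ large). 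What this approach buys is elementarity: it avoids the machinery needed for Prokhorov's theorem on a general Polish space (Riesz representation on a compactification, or a diagonal argument over a countable family of sets), at the price of using local compactness and $\sigma$-compactness in an essential way --- which is legitimate here since $\Omega$ is assumed locally compact, but would not generalize.
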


\begin{proposition}\label{prop:weak_vague} Let $\mu,\mu_1,\mu_2,\dots$ be measures in $\MM_f(\Omega)$. Then, $\mu_n \cvweak \mu$ if and only if $\mu_n(\Omega) \to \mu(\Omega)$ and $\mu_n \cvvague \mu$.
\end{proposition}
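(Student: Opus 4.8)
The plan is to treat the two implications separately, the forward one being essentially a tautology and the converse requiring a tightness argument that leans on local compactness of $\Omega$. For the forward direction: if $\mu_n \cvweak \mu$, then since every compactly supported continuous function is bounded we have $C_c(\Omega) \subset C_b(\Omega)$, so testing only against those functions already yields $\mu_n \cvvague \mu$; moreover the constant function $\mathbf{1}$ lies in $C_b(\Omega)$, whence $\mu_n(\Omega) = \mu_n(\mathbf{1}) \to \mu(\mathbf{1}) = \mu(\Omega)$.

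For the converse I would start from $\mu_n \cvvague \mu$ and $\mu_n(\Omega)\to\mu(\Omega)$ and aim to show $\mu_n(f)\to\mu(f)$ for an arbitrary $f \in C_b(\Omega)$. The first step is to exploit that $\Omega$, being locally compact and Polish, is $\sigma$-compact: fix an increasing exhaustion $(K_m)_m$ of $\Omega$ by compact sets with $K_m$ contained in the interior of $K_{m+1}$, and, by the locally compact version of Urysohn's lemma, pick functions $\phi_m \in C_c(\Omega)$ with $\mathbf{1}_{K_m} \le \phi_m \le \mathbf{1}_{K_{m+1}}$.

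The key step, and the one I expect to be the main obstacle, is deriving tightness of $(\mu_n)_n$ from the two hypotheses. The idea is that, by monotone convergence, $\mu(\phi_m) \uparrow \mu(\Omega)$, so given $\epsilon > 0$ one fixes $m$ with $\mu(\Omega) - \mu(\phi_m) < \epsilon/2$; then the vague convergence $\mu_n(\phi_m) \to \mu(\phi_m)$ combined with $\mu_n(\Omega) \to \mu(\Omega)$ gives, for $n$ large enough, $\mu_n(\Omega \setminus K_{m+1}) \le \mu_n(\mathbf{1} - \phi_m) = \mu_n(\Omega) - \mu_n(\phi_m) < \epsilon$; and since the finitely many remaining $\mu_n$, together with $\mu$ itself, are finite Radon measures on a $\sigma$-compact space and hence individually tight, enlarging $K_{m+1}$ by a finite union of compact sets produces a single compact $K$ with $\sup_n \mu_n(\Omega \setminus K) \le \epsilon$ and $\mu(\Omega \setminus K) \le \epsilon$.

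With tightness in hand I would finish by a routine truncation: for $f \in C_b(\Omega)$ with $M \defeq \sup|f|$ and $\epsilon>0$, take $K$ as above and $\chi \in C_c(\Omega)$ with $\mathbf{1}_K \le \chi \le \mathbf{1}$; then $f\chi \in C_c(\Omega)$, so $\mu_n(f\chi) \to \mu(f\chi)$ by vague convergence, while $|\mu_n(f) - \mu_n(f\chi)| \le M\,\mu_n(\Omega\setminus K) \le M\epsilon$ and the analogous bound holds for $\mu$; combining the three estimates yields $\limsup_n |\mu_n(f) - \mu(f)| \le 2M\epsilon$, and letting $\epsilon \to 0$ concludes. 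Every step apart from the tightness argument is a formal manipulation of the definitions; the only place where the structure of $\Omega$ genuinely enters — through $\sigma$-compactness and the existence of compactly supported bump functions — is precisely that step.
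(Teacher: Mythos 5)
The paper does not prove this proposition itself; it is listed among the "standard results" of Appendix~\ref{subsec:background_measures} with a pointer to Kallenberg, and your argument is precisely the standard one: the forward direction is immediate from $C_c(\Omega)\subset C_b(\Omega)$ and testing against $\mathbf{1}$, and the converse hinges on extracting tightness from mass convergence plus vague convergence before truncating a bounded test function. Your tightness step is correctly executed (the exhaustion and the Urysohn functions exist because a locally compact Polish space is $\sigma$-compact, and the finitely many leftover measures are individually tight by Ulam's theorem), so the proof is complete and matches what the cited reference would give.
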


\begin{proposition}[The Portmanteau theorem]\label{prop:portemanteau} Let $\mu,\mu_1,\mu_2,\dots$ be measures in $\MM(\Omega)$. Then, $\mu_n \cvvague \mu$ if and only if one of the following propositions holds:
\begin{itemize}
\item for all open sets $U\subset \Omega$ and all \emph{bounded} closed sets $F\subset \Omega$ ,
\[\limsup_{n \to \infty} \mu_n(F) \leq \mu(F) \text{ and }\liminf_{n\to \infty} \mu_n(U) \geq \mu(U).\]
\item for all \emph{bounded} Borel sets $A$ with $\mu(\partial A)= 0$, $\displaystyle \lim_{n\to \infty} \mu_n(A) = \mu(A)$.
\end{itemize}
\end{proposition}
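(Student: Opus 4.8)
The statement is the classical Portmanteau theorem characterising vague convergence of Radon measures, so my plan is to reduce it to two standard facts about the locally compact, $\sigma$-compact Polish space $\Omega$: Urysohn's lemma, which for any compact $K$ and open $U\supset K$ produces $f\in C_c(\Omega)$ with $\ones_K\le f\le\ones_U$; and the inner/outer regularity of Radon measures, i.e.\ $\mu(B)=\sup\{\mu(K):K\subset B\ \text{compact}\}$ for every Borel $B$ and $\mu(B)=\inf\{\mu(U):U\supset B\ \text{open}\}$ whenever $\mu(B)<\infty$. With these in hand I would prove the three implications that close the loop: vague convergence $\Rightarrow$ first item $\Rightarrow$ second item $\Rightarrow$ vague convergence (a concise reference for the whole statement is \cite[Section~15.7]{kallenberg2017random}).

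For vague convergence $\Rightarrow$ first item: given an open $U$ and a compact $K\subset U$, I would pick $f\in C_c(\Omega)$ with $\ones_K\le f\le\ones_U$, so that $\liminf_n\mu_n(U)\ge\lim_n\mu_n(f)=\mu(f)\ge\mu(K)$, and then take the supremum over $K$ and invoke inner regularity to get $\liminf_n\mu_n(U)\ge\mu(U)$. Dually, for a bounded closed $F$ (for which $\mu(F)<\infty$) and $\epsilon>0$, outer regularity gives an open $V\supset F$ with $\mu(V)\le\mu(F)+\epsilon$, and a function $f\in C_c(\Omega)$ with $\ones_F\le f\le\ones_V$ yields $\limsup_n\mu_n(F)\le\lim_n\mu_n(f)=\mu(f)\le\mu(F)+\epsilon$; letting $\epsilon\to0$ closes this case. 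For first item $\Rightarrow$ second item, if $A$ is bounded Borel with $\mu(\partial A)=0$ I would sandwich it between its interior $\mathring A$ (open) and its closure $\overline A$ (closed and bounded):
\[
\mu(\mathring A)\le\liminf_n\mu_n(A)\le\limsup_n\mu_n(A)\le\mu(\overline A)=\mu(A)=\mu(\mathring A),
\]
which forces $\mu_n(A)\to\mu(A)$.

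The substantive implication is second item $\Rightarrow$ vague convergence. Fixing $f\in C_c(\Omega)$ and splitting $f=f^+-f^-$, it is enough to treat $f\ge0$ with $K\defeq\supp f$ compact, and to use the layer-cake identity $\mu_n(f)=\int_0^{\|f\|_\infty}\mu_n(\{f>t\})\,\dd t$ (and likewise for $\mu$). Each $\{f>t\}$ is bounded (it lies in $K$) with $\partial\{f>t\}\subset\{f=t\}$, and since the level sets $\{f=t\}$ are pairwise disjoint subsets of the finite-measure set $K$, only countably many of them fail to be $\mu$-null; hence the second item gives $\mu_n(\{f>t\})\to\mu(\{f>t\})$ for almost every $t$. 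Bounding $\mu_n(\{f>t\})\le\mu_n(K)$ and taking a closed $\delta$-neighbourhood $A\supset K$ chosen inside a compact neighbourhood of $K$ (which exists since $\Omega$ is locally compact) and with $\delta$ such that $\mu(\partial A)=0$, the second item gives $\mu_n(A)\to\mu(A)<\infty$, so $\sup_n\mu_n(K)<\infty$; dominated convergence in the $t$-variable then yields $\mu_n(f)\to\mu(f)$, i.e.\ vague convergence.

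I expect the only genuinely delicate point to be this last implication: one has to check that the level sets $\{f=t\}$ are $\mu$-null for almost every threshold $t$ and that the masses $\mu_n(K)$ stay uniformly bounded, and these are exactly the places where regularity, local compactness, and the hypothesis are consumed. Everything else is routine bookkeeping with Urysohn functions and regularity.
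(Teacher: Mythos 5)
The paper does not actually prove this proposition: it is listed among the ``standard results'' of Appendix A and delegated to \cite[Section 15.7]{kallenberg2017random}, so there is no in-paper argument to compare against. Your proof is a correct, self-contained rendering of the classical argument and is essentially the one in the cited reference: the cycle (vague convergence $\Rightarrow$ open/closed inequalities $\Rightarrow$ continuity sets $\Rightarrow$ vague convergence), with Urysohn functions plus inner/outer regularity of Radon measures for the first implication, the sandwich $\mathring A\subset A\subset\overline A$ for the second, and the layer-cake representation together with the observation that at most countably many level sets $\{f=t\}$ can fail to be $\mu$-null for the third. The uniform bound $\sup_n\mu_n(K)<\infty$, obtained from a compact $\delta$-neighbourhood of $K$ with $\mu$-null boundary, is precisely the step that is usually glossed over, and you handle it correctly.

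One point should be made explicit, because it is where the statement itself (rather than your argument) is fragile: everything you do with a ``bounded'' set --- asserting $\mu(F)<\infty$, dominating $\ones_F$ by a function in $C_c(\Omega)$, enclosing $K$ in a compact neighbourhood --- uses that bounded sets are relatively compact. In the paper's ambient space $\Omega=\{t_2>t_1\}$ with the Euclidean metric this is false: a metrically bounded closed subset of $\Omega$ may accumulate on the diagonal, e.g.\ $F=\{(0,t):\,0<t\le 1\}$, and for $\mu_n=\delta_{(0,1/n)}\cvvague 0$ one has $\mu_n(F)=1\not\to 0=\mu(F)$ even though $F$ is bounded, closed, and $\mu(\partial F)=0$; so the proposition read literally is false. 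The statement is correct, and your proof goes through verbatim, once ``bounded'' is read as ``relatively compact'' --- which is Kallenberg's convention via the localizing ring, and is the only way the paper ever invokes the proposition (always on compact sets or subsets of compact sets). You should state that hypothesis at the outset rather than leave it implicit in the parenthetical ``for which $\mu(F)<\infty$''.
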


\begin{definition}\label{def:point_measure}
	The set of \emph{point measures} on $\Omega$ is the subset $\PD(\Omega) \subset \MM(\Omega)$ of Radon measures with discrete support and integer mass on each point, that is of the form
	\[
		\sum_{x \in X} n_x \delta_{x}
	\]
	where $n_x \in \N$ and $X \subset \Omega$ is some locally finite set.
\end{definition}

\begin{proposition}The set $\PD(\Omega)$ is closed in $\MM(\Omega)$ for the vague topology.
\label{prop:diagram_close} 
\end{proposition}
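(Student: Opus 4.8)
The plan is to show that whenever $\mu_n \in \PD(\Omega)$ and $\mu_n \cvvague \mu$, the limit $\mu$ is again a point measure; the only substantial inputs will be the Portmanteau theorem (Proposition~\ref{prop:portemanteau}) and the elementary fact that a convergent sequence of non-negative integers is eventually constant.

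The first step is a \emph{local integrality} statement: for every $x \in \Omega$ there is $\rho(x) > 0$ with $\mu\bigl(B(x,\rho(x))\bigr) = \mu(\{x\}) \in \N$. To get this, use local compactness to pick $R > 0$ with $\overline{B}(x,R)$ compact, so $\mu(\overline{B}(x,R)) < \infty$; since the spheres $\{y : d(x,y) = r\}$ for $0 < r < R$ are pairwise disjoint, at most countably many of them carry positive $\mu$-mass, so we may choose $r_k \downarrow 0$ avoiding them, whence $\mu(\partial B(x,r_k)) = 0$. Each $\mu_n$ is a point measure and $\overline{B}(x,r_k)$ is compact, so $\mu_n(B(x,r_k))$ is a non-negative integer; Portmanteau then gives $\mu_n(B(x,r_k)) \to \mu(B(x,r_k))$, forcing $\mu(B(x,r_k)) \in \N$ for each $k$. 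Letting $k \to \infty$, the integers $\mu(B(x,r_k))$ form a non-increasing sequence converging to $\mu(\{x\})$, hence are eventually constant and equal to $\mu(\{x\})$; taking $\rho(x) = r_k$ for large $k$ proves the claim, and in particular $\mu\bigl(B(x,\rho(x)) \setminus \{x\}\bigr) = 0$.

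For the second step, set $X := \{x \in \Omega : \mu(\{x\}) \ge 1\}$ and $n_x := \mu(\{x\}) \in \N$. Each ball $B(x,\rho(x))$ meets $X$ in at most one point (namely $x$, if any), so any compact $K \subset \Omega$, being covered by finitely many such balls, satisfies $|X \cap K| < \infty$; thus $X$ is locally finite and $\nu := \sum_{x \in X} n_x \delta_x$ is a well-defined element of $\PD(\Omega)$. Since $X$ is locally finite, hence closed, one may shrink each $\rho(x)$ so that $B(x,\rho(x))$ contains no point of $X$ other than possibly $x$; then $\mu$ and $\nu$ both restrict to $n_x \delta_x$ on $B(x,\rho(x))$. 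Consequently $\mu$ and $\nu$ agree on a neighbourhood of every point, hence on every relatively compact open set (finite subcover plus inclusion--exclusion), and therefore $\mu = \nu$ by uniqueness of measures on the generating, $\sigma$-finite $\pi$-system of relatively compact open sets ($\Omega$ being $\sigma$-compact). This yields $\mu = \nu \in \PD(\Omega)$, so $\PD(\Omega)$ is vaguely closed.

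I expect the first step to be the main obstacle: Portmanteau only provides $\mu_n(A) \to \mu(A)$ for Borel sets $A$ whose topological boundary is $\mu$-null, so the delicate point is selecting radii $r_k$ avoiding the at most countably many spheres carrying $\mu$-mass before invoking integrality. Once integrality of $\mu$ on a neighbourhood basis of each point is secured, the ``eventually constant'' principle and the covering arguments in the second step are routine.
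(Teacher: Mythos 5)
The paper states this proposition without proof --- it is listed among the standard measure-theoretic facts of Appendix~A (with a pointer to Kallenberg) --- so there is no in-paper argument to compare against; your proof is correct and follows the standard route: integrality of $\mu$ on a neighbourhood basis of $\mu$-continuity balls via Portmanteau, local finiteness of the resulting atom set $X$, and a uniqueness-of-measures argument to identify $\mu$ with $\sum_{x\in X} n_x\delta_x$. The only point left implicit is that sequential closedness suffices for closedness, which holds because the vague topology on $\MM(\Omega)$ is metrizable when $\Omega$ is locally compact Polish --- a fact the paper itself relies on throughout.
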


We now discuss properties of the VM topology. Recall that $E_\Omega = (\overline \Omega\times \overline \Omega)\backslash (\thediag\times \thediag)$, where $\Omega = \{(t_1,t_2)\in \R^2,\ t_2>t_1\}$.  
\begin{proposition}\label{prop:compact_VM}
    A subset $F\subset \MM(E_\Omega)$ is relatively compact for the VM topology if and only if the sets $\{\pi_1,\ \pi \in F\}$ and $\{\pi_2,\ \pi\in F\}$ are relatively compact for the vague topology, where $\pi_1$ (resp. $\pi_2$) is the first (resp. second) marginal of a measure $\pi$.
\end{proposition}

\begin{proof}
By definition of the initial topology, a subset $F\subset \MM(E_\Omega)$ is relatively compact for the VM topology if and only if $I(F)$ is relatively compact: according to Proposition \ref{prop:rel_cpt_vague}, this is equivalent to having for all compact sets $K\subset \Omega$ and $L\subset E_\Omega$
\[ \max(\sup\{\pi_1(K),\ \pi \in F\},\sup\{\pi_2(K),\ \pi \in F\}, \sup\{\pi(L),\ \pi \in F\} )< \infty. \]
Any compact set $L$ in $E_\Omega$ is included in a set of the form $(K\times \Omega)\cup (\Omega\times K)$ for some compact set $K\subset \Omega$. 
Hence, 
\begin{align*}
   \sup_{\pi\in F} \pi(L)\leq   \sup_{\pi\in F} \pi((K\times \Omega)\cup (\Omega\times K))&\leq \sup_{\pi\in F} (\pi_1(K) + \pi_2(K))\\
    &\leq  \sup_{\pi\in F} \pi_1(K) + \sup_{\pi\in F}\pi_2(K).
\end{align*}
Hence, the set $I(F)$ is relatively compact  if and only if for all compact sets $K\subset \Omega$, $ \sup_{\pi\in F} \pi_1(K) <+\infty$ and $\sup_{\pi\in F}\pi_2(K)<+\infty$, which is equivalent to the two sets of first and second marginals being relatively compact for the vague topology.
\end{proof}

\section{Delayed proofs of Section \ref{sec:structure_dgm_space}}\label{sec:proofs} 

For the sake of completeness, we present in this section proofs which either require very few adaptations from corresponding proofs in \cite{ot:figalli2010newTransportationDistance} or which are close to standard proofs in optimal transport theory.

\begin{proof}[Proofs of Proposition \ref{prop:basic_prop} and Proposition \ref{prop:basic_infty}]~
\begin{itemize}
\item For any $\pi\in \Adm(\mu,\nu)$, $\pi_1=\mu$ and $\pi_2=\nu$. Hence, Proposition \ref{prop:compact_VM} directly implies that the set $\Adm(\mu,\nu)$ is relatively compact for the VM topology. Moreover, if a sequence $(\pi_n)$ in $\Adm(\mu,\nu)$ converges for the VM topology to some $\pi\in \MM(E_\Omega)$, then, by definition, the sequence of the first (resp. second) marginals of $(\pi_n)_n$ converges vaguely to the first (resp. second) marginal of $\pi$. But this sequence is constant equal to $\mu$ (resp. $\nu$). Hence, the first marginal of $\pi$ is $\mu$ (resp. the second marginal of $\pi$ is $\nu$). We have proved that $\Adm(\mu,\nu)$ is also closed: hence, it is a compact set for the VM topology.
	
\item To prove the second point of Proposition \ref{prop:basic_prop}, consider $\pi, \pi_1, \pi_2, \dots$ such that $\pi_n \to\pi$ for the VM topology, and introduce $\pi'_n : A \mapsto \iint_A d(x,y)^p \dd \pi_n$. The sequence $(\pi'_n)_n$  converges vaguely to $\pi ' : A \mapsto \iint_{A} d(x,y)^p \dd \pi$. The Portmanteau theorem (Proposition \ref{prop:portemanteau}) applied with the open set $E_\upperdiag$ to the measures $\pi'_n \cvvague \pi'$ implies that 
\[C_p(\pi) =\pi'(E_\upperdiag)  \leq \liminf_n \pi'_n (E_\upperdiag) = \liminf_n C_p(\pi_n),\] i.e.~$C_p$ is lower semi-continuous.

\item We now prove the lower semi-continuity of $C_\infty$. Let $(\pi_n)_n$ be a sequence converging to $\pi$ for the VM topology and let $\displaystyle r >\liminf_{n\to \infty} C_\infty(\pi_n)$. The set $U_r = \{(x,y) \in E_\upperdiag, \ d(x,y) > r \}$ is open. By the Portmanteau theorem (Proposition \ref{prop:portemanteau}), we have \[0=\liminf_{n\to \infty}  \pi_n(U_r) \geq \pi(U_r).\] Therefore, $\supp(\pi) \subset U_r^c$ and $C_\infty(\pi) \leq r$. As this holds for any $\displaystyle r>\liminf_{n\to \infty}  C_\infty(\pi_n)$, we have $\displaystyle \liminf_{n\to \infty}  C_\infty(\pi_n) \geq C_\infty(\pi)$. 

\item We show that for any $1\leq p \leq \infty$, the lower semi-continuity of $C_p$ and the sequential compactness of $\Adm(\mu,\nu)$ with respect to the VM topology imply that \ref{it:opt_compact_proof}. $\Opt_p(\mu,\nu)$ is a non-empty compact set for the vague topology on $E_\upperdiag$ and that \ref{it:Dp_lsc_proof}. $\Dp$ is lower semi-continuous.
\begin{enumerate}
\item Let $(\pi_n)_n$ be a minimizing sequence of \eqref{eq:optimalCostPbm} or \eqref{eq:def_dinfty} in $\Adm(\mu,\nu)$. As $ \Adm(\mu,\nu)$ is sequentially compact, it has a subsequential limit $\pi$, and the lower semi-continuity implies that $C_p(\pi) \leq \liminf_{n\to \infty} C_p(\pi_n) = \Dp^p(\mu,\nu)$, so that $\Opt_p(\mu,\nu)$ is non-empty. Using once again the lower semi-continuity of $C_p$, if a sequence in $\Opt_p(\mu,\nu)$ converges to some limit, then the cost of the limit is less than or equal to (and thus equal to) $\Dp^p(\mu,\nu)$, i.e.\ the limit is in $\Opt_p(\mu,\nu)$. The set $\Opt_p(\mu,\nu)$ being closed in the sequentially compact set $\Adm(\mu,\nu)$, it is also sequentially compact. \label{it:opt_compact_proof}
\item \label{it:Dp_lsc_proof} Let $\mu_n \cvvague \mu$  and $\nu_n \cvvague \nu$. One has $\liminf_n \Dp(\mu_n,\nu_n)=\lim_k \Dp(\mu_{n_k},\nu_{n_k})$ for some subsequence $(n_k)_k$. For ease of notation, we will still use the index $n$ to denote this subsequence. If the limit is infinite, there is nothing to prove. Otherwise, consider $\pi_n \in \Opt_p(\mu_n,\nu_n)$, and note that  for any compact sets $K \subset \upperdiag$, one has $ \sup_n \mu_n(K) + \sup_n \nu_n(K') < \infty$. Therefore, by Proposition \ref{prop:compact_VM}, there exists a subsequence $(\pi_{n_k})_k$ which converges  to some measure $\pi \in \Adm(\mu,\nu)$ for the VM topology. Note that the first (resp.~second) marginal of $\pi$ is equal to the limit $\mu$ (resp.~$\nu$) of the first (resp.~second) marginal of $(\pi_{n_k})$, so that $\pi$ is in $\Adm(\mu,\nu)$. Therefore,
\[
\Dp^p(\mu,\nu) \leq C_p(\pi) \leq \liminf_{n\to \infty} C_p(\pi_n) =\liminf_{n\to \infty} \Dp^p(\mu_n,\nu_n). 
\]
\end{enumerate}
\item Finally, we prove that $\Dp$ is a metric on $\MM^p$. Let $\mu,\nu,\lambda\in \MM^p$. The symmetry of $\Dp$ is clear. If $\Dp(\mu,\nu) = 0$, then there exists $\pi \in \Adm(\mu,\nu)$ supported on $\{(x,x),\ x\in\Omega\}$. Therefore, for a Borel set $A \subset \Omega$, $\mu(A) = \pi(A \times \groundspace) = \pi(A \times A)=\pi(\groundspace \times A)=\nu(A)$, and $\mu = \nu$. To prove the triangle inequality, we need a variant on the gluing lemma, stated in \cite[Lemma 2.1]{ot:figalli2010newTransportationDistance}: for $\pi_{12} \in \Opt(\mu,\nu)$ and $\pi_{23} \in \Opt(\nu,\lambda)$ there exists a measure $\gamma \in \MM(\groundspace^3)$ such that the marginal corresponding to the first two entries (resp.~two last entries), when restricted to $E_\Omega$, is equal to $\pi_{12}$ (resp.~$\pi_{23}$), and induces a zero cost on $\thediag \times \thediag$. Therefore, by the triangle inequality and the Minkowski inequality,
\begin{align*}
\Dp(\mu,\lambda) &\leq \left( \int_{\groundspace^2} d(x,z)^p\dd \gamma(x,y,z) \right)^{1/p} \\
&\leq \left( \int_{\groundspace^2} d(x,y)^p\dd \gamma(x,y,z) \right)^{1/p} + \left( \int_{\groundspace^2} d(y,z)^p\dd \gamma(x,y,z) \right)^{1/p} \\
&= \left( \int_{\groundspace^2} d(x,y)^p\dd \pi_{12}(x,y) \right)^{1/p} + \left( \int_{\groundspace^2} d(y,z)^p\dd \pi_{23}(y,z) \right)^{1/p} \\
&= \Dp(\mu,\nu) + \Dp(\nu,\lambda).
\end{align*}
The proof is similar for $p= \infty$.
\end{itemize}
\end{proof}

\begin{proof}[Proof of Proposition \ref{prop:MM_p_Polish}]
We first show the separability. Consider for $k>0$ a partition of $\Omega$ into squares $(C_i^k)$ of side length $2^{-k}$, centered at points $x_i^k$. Let $F$ be the set of all measures of the form $\sum_{i\in I} q_i \delta_{x_i^k}$ for $q_i$ positive rationals, $k>0$ and $I$ a finite subset of $\N$. Our goal is to show that the countable set $F$ is dense in $\MM^p$. Fix $\epsilon > 0$, and $\mu \in \MM^p$. The proof is in three steps.
\begin{enumerate}
\item Since $\Pers_p(\mu) < \infty$, there exists a compact $K \subset \upperdiag$ such that $\Pers_p(\mu) - \Pers_p(\mu_0) < \epsilon^p$, where $\mu_0$ is the restriction of $\mu$ to $K$. By considering the transport plan between $\mu$ and $\mu_0$ induced by the identity map on $K$ and the projection onto the diagonal on $\groundspace \backslash K$, it follows that $\Dp^p(\mu,\mu_0) \leq \Pers_p(\mu) - \Pers_p(\mu_0) \leq \epsilon^p$.
\item Consider $k$ such that $2^{-k} \leq \epsilon / (\sqrt{2}\mu(K)^{1/p})$ and denote by $I$ the indices corresponding to squares $C_i^k$ intersecting $K$. Let $\mu_1 = \sum_{i\in I}^\infty \mu_0(C_i^k) \delta_{x_i^k}$. One can create a transport map between $\mu_0$ and $\mu_1$ by mapping each square $C_i^k$ to its center $x_i^k$, so that
\[ \Dp(\mu_0,\mu_1) \leq \left(\sum_{i} \mu_0(C_i^k) (\sqrt{2}\cdot 2^{-k})^p \right)^{1/p} \leq \mu(K)^{1/p} \sqrt{2}\cdot 2^{-k} \leq \epsilon.\]
\item Consider, for $i \in I$, $q_i$ a rational number satisfying $q_i \leq \mu_0(C_i^k)$ and $|\mu_0(C_i^k) - q_i| \leq \epsilon^p/\left(\sum_{i\in I} d(x_i^k,\thediag)^p \right)$. Let $\mu_2 = \sum_{i\in I} q_i\delta_{x_i^k}$. Consider the transport plan between $\mu_2$ and $\mu_1$ that fully transports $\mu_2$ onto $\mu_1$, and transport the remaining mass in $\mu_1$ onto the diagonal. Then, \[\Dp(\mu_1,\mu_2) \leq  \left(\sum_{i\in I} |\mu_0(C_i^k) - q_i|  d(x_i^k,\thediag)^p \right)^{1/p} \leq  \epsilon.\]
\end{enumerate}
 As $\mu_2 \in F$ and $\Dp(\mu,\mu_2) \leq 3 \epsilon$, the separability is proven.

To prove that the space is complete, consider a Cauchy sequence $(\mu_n)_n$. As the sequence $(\Pers_p(\mu_n))_n = (\Dp^p(\mu_n,0))_n$ is a Cauchy sequence, it is bounded. Therefore, for $K\subset \upperdiag$ a compact set, \eqref{eq:pers_compact} implies that $\sup_n\mu_n(K)<\infty$. Proposition \ref{prop:rel_cpt_vague} implies that $(\mu_n)_n$ is relatively compact for the vague topology on $\upperdiag$. Consider $(\mu_{n_k})_k$ a subsequence converging vaguely on $\upperdiag$ to some measure $\mu$. By the lower semi-continuity of $\Dp$, 
\[ \Pers_p(\mu) = \Dp^p(\mu,0) \leq \liminf_{k \to \infty} \Dp^p(\mu_{n_k},0) < \infty,\]
so that $\mu \in \MM^p$. Using once again the lower semi-continuity of $\Dp$,
\begin{align*}
\Dp(\mu_n,\mu) &\leq  \liminf_{k \to \infty} \Dp(\mu_n,\mu_{n_k}) \\
\lim_{n\to\infty} \Dp(\mu_n,\mu) &\leq \lim_{n\to\infty} \liminf_{k \to \infty} \Dp(\mu_n,\mu_{n_k})=0,
\end{align*} 
ensuring that $\Dp(\mu_n, \mu) \rightarrow 0$, that is the space is complete. 
\end{proof}

\begin{proof}[Proof of the direct implication of Theorem \ref{thm:conv_dp}]
Let $\mu,\mu_1,\mu_2,\dots$ be elements of $\MM^p$ and assume that the sequence $(\Dp(\mu_n,\mu))_n$ converges to 0. The triangle inequality implies that $\Pers_p(\mu_n)=\Dp^p(\mu_n,0)$ converges to $\Pers_p(\mu)=\Dp^p(\mu,0)$. Let $f\in C_c(\upperdiag)$, whose support is included in some compact set $K$. For any $\epsilon >0$, there exists a Lipschitz function $f_\epsilon$, with Lipschitz constant $L$ and whose support is included in $K$, with the $\infty$-norm $\|f-f_\epsilon\|_\infty$ less than or equal to $\epsilon$. The convergence of $\Pers_p(\mu_n)$ and \eqref{eq:pers_compact} imply that $\sup_k \mu_k(K) < \infty$. Let $\pi_n \in \Opt_p(\mu_n, \mu)$, we have 
\begin{align*}
|\mu_n(f)-\mu(f)| &\leq |\mu_n(f-f_\epsilon)| + |\mu(f-f_\epsilon)|  + |\mu_n(f_\epsilon)-\mu(f_\epsilon)| \\
&\leq (\mu_n(K) + \mu(K))\epsilon + |\mu_n(f_\epsilon)-\mu(f_\epsilon)| \\
&\leq (\sup_k \mu_k(K) + \mu(K))\epsilon + |\mu_n(f_\epsilon)-\mu(f_\epsilon)|.
\end{align*}
Also, 
\begin{align*}
|\mu_n(f_\epsilon)-\mu(f_\epsilon)| &\leq \iint_{\groundspace^2} |f_\epsilon(x)-f_\epsilon(y)| \dd\pi_n(x,y) \quad \text{ where } \pi_n \in \Opt(\mu_n, \mu) \\
&\leq  L \iint\limits_{(K \times \groundspace) \cup (\groundspace \times K)} d(x,y)\dd\pi_n(x,y) \\
&\leq L \pi_n((K \times \groundspace) \cup (\groundspace \times K))^{1- \frac{1}{p}}\left( \iint\limits_{(K \times \groundspace) \cup (\groundspace \times K)} d(x,y)^p \dd\pi_n(x,y)\right)^{\frac{1}{p}} \\
& \qquad \text{ by H\" older's inequality.}\\
&\leq  L \left(\sup_k \mu_k(K) + \mu(K)\right)^{1- \frac{1}{p}} \Dp(\mu_n,\mu)\xrightarrow[n\to\infty]{} 0.
\end{align*}
Therefore, taking the limsup in $n$ and then letting $\epsilon$ goes to $0$, we obtain that $\mu_n(f) \to \mu(f)$. 
\end{proof}

\section{Proofs of the technical lemmas of Section \ref{sec:barycenter}}
\label{sec:proof_barycenters}

The following proof is already found in  \cite{ot:leGouic2016existenceConsistencyWB}. We reproduce it here for the sake of completeness.
\begin{proof}[Proof of Lemma  \ref{lem:rewrite}]
Recall that $\P_n$ is a sequence in $\WW^p(\MM^p)$ such that each $\P_n$ has a $p$-Fr\'echet mean $\mu_n$ and that $W_{p,\Dp}(\P_n,\P) \to 0$  for some $\P\in  \WW^p(\MM^p)$. According to the beginning of the proof of Proposition \ref{prop:consistencyBary}, the sequence $(\mu_n)_n$ is relatively compact for the vague convergence. Let $\nu\in \MM^p$ and let $\mu$ be the vague limit of some subsequence, which, for ease of notations, will be denoted as the initial sequence. By Skorokhod's representation theorem  \cite[Theorem 6.7]{billingsley2013convergence}, as $\P_n$ converges weakly to $\P$, there exists a probabilistic space on which are defined random variables $\bm{\mu}\sim \P$ and  $\bm{\mu_n}\sim \P_n$ for $n\geq 0$, such that $\bm{\mu_n}$ converges almost surely with respect to the $\Dp$ metric towards $\bm \mu$. Using those random variables, we have
\begin{equation}\label{eq:rewrite1}
\begin{split}
\EE(\nu) &= \E \Dp^p(\nu,\bm \mu) = W_{p,\Dp}^p(\delta_{\nu},\P) \\
&= \lim_n W_{p,\Dp}^p(\delta_{\nu},\P_n) \text{ since $ W_{p,\Dp}(\P_n,\P)\to 0$}\\
&=\lim_n \E  \Dp^p(\nu,\bm{\mu_n}) \\
&\geq \lim_n \E  \Dp^p(\mu_n,\bm{\mu_n}) \text{ since $\mu_n$ is a barycenter of $\P_n$}\\
&\geq \E  \liminf_n \Dp^p(\mu_n,\bm{\mu_n}) \text{ by Fatou's lemma}\\
&\geq \E  \Dp^p(\mu,\bm{\mu}) = \EE(\mu) \text{ by lower semi-continuity of $\Dp$ (Prop. \ref{prop:basic_prop}).}
\end{split}
\end{equation}
This implies that $\mu$ is a barycenter of $\P$. We are now going to show that, almost surely, $\liminf_n \Dp(\mu_n,\bm\mu)=\Dp(\mu,\bm\mu)$. This concludes the proof by letting $n_k$ be the subsequence attaining the liminf for some fixed realization of $\bm\mu$.  By plugging in $\nu=\mu$ in \eqref{eq:rewrite1}, all the inequalities become equalities, and in particular,
\[ \lim_n W_{p,\Dp}^p(\delta_{\mu_n},\P_n)= \lim_n \E  \Dp^p(\mu_n,\bm{\mu_n})=\E  \Dp^p(\mu,\bm{\mu})=W_{p,\Dp}^p(\delta_{\mu},\P).\]
This yields
\begin{align*}
&0\leq  W_{p,\Dp}(\delta_{\mu_n},\P)-W_{p,\Dp}(\delta_{\mu},\P)\\
&\qquad\qquad \leq  W_{p,\Dp}(\delta_{\mu_n},\P_n) + W_{p,\Dp}(\P_n,\P) -W_{p,\Dp}(\delta_{\mu},\P)\to 0
\end{align*} 
as $n$ goes to $+\infty$, i.e.~$\lim_n W_{p,\Dp}(\delta_{\mu_n},\P)=W_{p,\Dp}(\delta_{\mu},\P)$. Therefore,
\begin{align*}
\E \Dp^p(\mu,\bm \mu) &= W_{p,\Dp}^p(\delta_{\mu},\P)=\lim_n W_{p,\Dp}^p(\delta_{\mu_n},\P)=\lim_n \E \Dp^p(\mu_n,\bm \mu)\\
&\geq \E  \liminf_n \Dp^p(\mu_n,\bm{\mu}) \text{ by Fatou's lemma}\\
&\geq \E  \Dp^p(\mu,\bm{\mu}) \text{ by lower semi-continuity of $\Dp$.}
\end{align*}
As $\liminf_n \Dp^p(\mu_n,\bm{\mu})\geq \Dp^p(\mu,\bm{\mu})$ and $\E  \liminf_n \Dp^p(\mu_n,\bm{\mu})=\E \Dp^p(\mu,\bm \mu)$, we actually have $\liminf_n \Dp^p(\mu_n,\bm{\mu})= \Dp^p(\mu,\bm{\mu})$, concluding the proof. 
\end{proof}

\begin{figure}[h]
\centering
  \includegraphics[width=5cm]{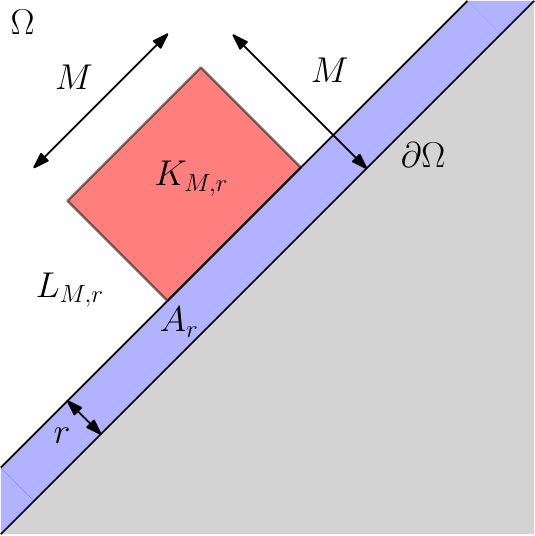}
  \caption{Partition of $\upperdiag$ used in the proof of Lemma \ref{lemma:Skorokhod}.}\label{fig:partition_groundspace}
\end{figure}
\begin{proof}[Proof of Lemma \ref{lemma:Skorokhod}]
For the direct implication, take $\nu =0$ and apply Theorem \ref{thm:conv_dp}. 

Let us prove the converse implication. Assume that $\mu_n \cvvague \mu$ and $\Dp(\mu_n,\nu) \to \Dp(\mu,\nu)$ for some $\nu \in \PD^p$. The vague convergence of $(\mu_n)_n$ implies that $\mu^{(p)}$ is the only possible accumulation point for weak convergence of the sequence $(\mu^{(p)}_n)_n$. Therefore, it is sufficient to show that the sequence $(\mu^{(p)}_n)_n$ is relatively compact for weak convergence (i.e.\ tight and bounded in total variation, see Proposition \ref{prop:rel_cpt_weak}). Indeed, this would mean that $(\mu^{(p)}_n)$ converges weakly to $\mu^{(p)}$, or equivalently by Proposition \ref{prop:weak_vague} that $\mu_n \cvvague \mu$ and $\Pers_p(\mu_n) \to \Pers_p(\mu)$. The conclusion is then obtained thanks to Theorem \ref{thm:conv_dp}.

Thus, let $(\mu_n)_n$ be any subsequence and $(\pi_n)_n$ be corresponding optimal transport plans between $\mu_n$ and $\nu$. The vague convergence of $(\mu_n)_n$ implies that $(\pi_n)_n$ is relatively compact with respect to the VM topology. Let $\pi$ be a limit of any converging subsequence of $(\pi_n)_n$, which indexes are still denoted by $n$. One can prove that $\pi \in \Opt(\mu,\nu)$ (see \cite[Prop.~2.3]{ot:figalli2010newTransportationDistance}). For $r>0$, define $A_r \defeq \{x \in \upperdiag,\ d(x,\thediag)\leq r\}$ and write $\overline{A}_r$ for $A_r \cup \thediag$.
Consider $\eta >1$. We can write
\begin{align*}
&\int_{A_r} d(x, \thediag)^p \dd\mu_n(x) = \iint\limits_{A_r \times \groundspace} d(x, \thediag)^p \dd\pi_n(x,y) \\
&= \iint\limits_{A_r \times (\upperdiag\backslash A_{\eta r})} d(x, \thediag)^p \dd\pi_n(x,y) + \iint\limits_{\overline A_r\times \overline{A}_{\eta r}} d(x,\thediag)^p \dd\pi_n(x,y) \\
&\stackrel{(*)}{\leq} \frac{1}{(\eta-1)^p} \iint\limits_{A_r \times(\upperdiag\backslash A_{\eta r})} d(x,y)^p \dd\pi_n(x,y) + \iint\limits_{\overline A_r\times \overline{A}_{\eta r}} d(x,\thediag)^p \dd\pi_n(x,y) \\
&\leq \frac{1}{(\eta-1)^p} \Dp^p(\mu_n,\nu) + 2^{p-1}\Bigg(\hspace{0.3cm}\iint\limits_{\overline{A}_r\times \overline{A}_{\eta r}} d(x,y)^p \dd\pi_n(x,y) + \iint\limits_{\overline{A}_r\times \overline{A}_{\eta r}} d(y,\thediag)^p \dd\pi_n(x,y)\Bigg)  \\
&\leq \frac{1}{(\eta-1)^p} \Dp^p(\mu_n,\nu) + 2^{p-1}\Bigg( \Dp^p(\mu_n,\nu) - \hspace{-0.6cm} \iint\limits_{E_\Omega  \backslash  (\overline{A}_r\times \overline{A}_{\eta r})} \hspace{-0.5cm} d(x,y)^p \dd\pi_n(x,y) + \int_{ A_{\eta r}} \hspace{-0.3cm} d(y,\thediag)^p \dd\nu(y)\Bigg)
\end{align*}
where $(*)$ holds because $d(x,y)\geq (\eta -1) r \geq (\eta -1) d(x,\thediag)$ for $(x,y)\in A_r \times A_{\eta r}^c$.  Therefore,
\begin{align*}
	\limsup_{n \to \infty} \int_{A_r} d(x,\thediag)^p \dd\mu_n(x) \leq & \frac{1}{(\eta-1)^p} \Dp^p(\mu,\nu)+ 2^{p-1}\Bigg(\Dp^p(\mu,\nu)  \\
	&\qquad  - \hspace{-0.3cm}\iint\limits_{E_\Omega  \backslash  (\overline{A}_r\times \overline{A}_{\eta r})} d(x,y)^p \dd\pi(x,y) + \int_{ A_{\eta r}} d(y,\thediag)^p \dd\nu(y)\Bigg)
\end{align*}
Note that at the last line, we used the Portmanteau theorem (see Proposition~\ref{prop:portemanteau}) on the sequence of measures $(d(x,y)^p \dd\pi_n(x,y))_n$ for the open set $E_\Omega  \backslash  (\overline{A}_r\times \overline{A}_{\eta r})$. Letting $r$ goes to $0$, then $\eta$ goes to infinity, one obtains
\[\lim_{r \to 0} \limsup_{n \to \infty} \int_{A_r} d(x,\thediag)^p \dd\mu_n(x) = 0.\]

The second part consists in showing that there can not be mass escaping ``at infinity'' in the subsequence $(\mu^{(p)}_n)_n$. Fix $r,M>0$. For $x \in \upperdiag$, denote $s(x)$ the projection of $x$ on $\thediag$. Pose 
\[K_{M,r} \defeq \{x \in \upperdiag\backslash A_r,\ d(x,\thediag) < M, d(s(x),0) < M\}\] 
and $L_{M,r}$ the closure of $\upperdiag \backslash (A_r\cup K_{M,r})$ (see Figure \ref{fig:partition_groundspace}). 
For $r'>0$,
\begin{align*}
&\int_{L_{M,r}} d(x,\thediag)^p \dd\mu_n(x) = \iint\limits_{L_{M,r} \times \groundspace} d(x,\thediag)^p \dd\pi_n(x,y) \\
&= \iint\limits_{L_{M,r} \times (L_{M/2,r'}\cup \overline{A}_{r'})} d(x,\thediag)^p \dd\pi_n(x,y)  + \iint\limits_{L_{M,r} \times K_{M/2,r'}}  d(x,\thediag)^p \dd\pi_n(x,y) \\
&\leq 2^{p-1} \iint\limits_{L_{M,r} \times (L_{M/2,r'} \cup \overline{A}_{r'})} d(x,y)^p \dd\pi_n(x,y) \\
&\ + 2^{p-1}  \iint\limits_{L_{M,r} \times( L_{M/2,r'}\cup \overline{A}_{r'})} d(\thediag, y)^p \dd\pi_n(x,y) \\
&\ + \iint\limits_{L_{M,r} \times K_{M/2,r'}}  d(x,\thediag)^p \dd\pi_n(x,y).
\end{align*}
We treat the three parts of the sum separately. As before, taking the $\limsup$ in $n$ and letting $M$ goes to $\infty$, the first part of the sum converges to 0 (apply the Portmanteau theorem on the open set $E_\upperdiag \backslash (L_{M,r} \times (L_{M/2,r'} \cup \overline{A}_{r'}))$. The second part is less than or equal to
\[ 2^{p-1} \int_{ L_{M/2,r'}\cup A_{r'}} d(y,\thediag)^p \dd\nu(y),\]
 which converges to $0$ as $M\to \infty$ and $r'\to 0$. For the third part, notice that if $(x,y)\in L_{M,r} \times K_{M/2,r'}$, then
\begin{align*}
d(x,\thediag) \leq d(x,s(y)) \leq d(x,y) +d(y,s(y)) \leq d(x,y) + \frac{M}{2} \leq 2d(x,y).
\end{align*}
Therefore,
\begin{align*}
\iint\limits_{L_{M,r} \times K_{M/2,r'}}  d(x,\thediag)^p \dd\pi_n(x,y) & \leq 2^p \iint\limits_{L_{M,r} \times K_{M/2,r'}}  d(x,y)^p \dd\pi_n(x,y) \\
& \leq 2^p \iint\limits_{L_{M,r} \times \groundspace}  d(x,y)^p \dd\pi_n(x,y).
\end{align*}
As before, it is shown that $\limsup_n \iint_{L_{M,r} \times \groundspace}  d(x,y)^p \dd\pi_n(x,y)$ converges to $0$ when $M$ goes to infinity by applying the Portmanteau theorem on the open set $E_\upperdiag \backslash (L_{M,r} \times \groundspace)$.

Finally, we have shown, that by taking $r$ small enough and $M$ large enough, one can find a compact set $\overline{K_{M,r}}$ such that $\int_{\upperdiag \backslash \overline{K_{M,r}}} d(x,\thediag)^p \dd\mu_n = \mu^{(p)}_n(\upperdiag \backslash \overline{K_{M,r}})$ is uniformly small: $(\mu^{(p)}_n)_n$ is tight. As we have 
\begin{align*}
 \mu^{(p)}_n(\upperdiag) &= \Pers_p(\mu_n) = \Dp^p(\mu_n, 0) \\
 &\leq (\Dp(\mu_n, \nu) + \Dp(\nu, 0))^p \to (\Dp(\mu, \nu) + \Dp(\nu,0))^p, 
\end{align*}
it is also bounded in total variation. Hence, $(\mu^{(p)}_n)_n$ is relatively compact for the weak convergence: this concludes the proof. 
\end{proof}

\begin{proof}[Proof of Lemma \ref{lemma:tum}]
Let $\P = \sum_{i=1}^N \lambda_i \delta_{a_i}$ a probability distribution with $a_i \in \PD_f$ of mass $m_i \in \N$, and define $\mtot = \sum_{i=1}^N m_i$.
By Proposition \ref{prop:equiv_minimum_functionals}, every $p$-Fr\'echet mean $a$ of $\P$ is in correspondence with a $p$-Fr\'echet mean for the Wasserstein distance $\tilde a$ of  $\tilde{\P} = \sum_{i=1}^N \lambda_i \delta_{\tilde{a}_i}$, where $\tilde{a}_i = a_i + (\mtot - m_i)\delta_{\thediag}$, with $a$ being the restriction  of $\tilde{a}$ to $\upperdiag$.

	Let thus fix $m \in \N$, and let $\tilde{a}_1,\dots,\tilde{a}_N$ be point measures of mass $m$ in $\tilde{\Omega}$. Write $\tilde{a}_i =\sum_{j=1}^{m} \delta_{x_{i,j}}$, so that $x_{i,j} \in \tilde{\Omega}$ for $1 \leq i \leq N,\ 1 \leq j \leq m$, with the $x_{i,j}$s non-necessarily distinct. Define 
	\begin{equation} 
	T:  (x_1, \dots, x_N) \in \tilde{\upperdiag}^N \mapsto \argmin\left\{ \sum_{i=1}^N \lambda_i \rho(x_i,y)^p,\ y\in \tilde{\Omega} \right\} \in \tilde{\upperdiag}.
	\end{equation}
	Since we assume $p > 1$, $T$ is well-defined and is continuous (the minimizer is unique by strict convexity). Using the localization property stated in \cite[Section 2.2]{ot:carlier2015numerical}, we know that the support of a $p$-Fr\'echet mean of $\tilde P$ is included in the finite set
	\[ S \defeq \{ T(x_{1,j_1},\dots,x_{N,j_N}),\ 1\leq j_1 ,\dots, j_N \leq m \}.\]

Let $K=m^N$ and let $z_1,\dots,z_K$ be an enumeration of the points of $S$ (with potential repetitions). Denote by $\mathrm{Gr}(z_k)$ the $N$ elements $x_1,\dots,x_N$, with $x_i \in \supp(\tilde{a}_i)$, such that $z_k = T(x_1,\dots,x_N)$. It is explained in \cite[\S 2.3]{ot:carlier2015numerical}, that finding a $p$-Fr\'echet mean of $\tilde{\P}$ is equivalent to finding a minimizer of the problem 
\begin{equation}\label{eq:bary_reformule}
\inf_{(\gamma_1,\dots,\gamma_N) \in \Pi} \sum_{i=1}^N \lambda_i \iint_{\tilde{\Omega}^2} \rho(x_i,y)^p \dd \gamma_i(x_i,y),
\end{equation} 
where $\Pi$  is the set of plans $(\gamma_i)_{i=1,\dots,N}$, with $\gamma_i$ having for first marginal $\tilde{a}_i$, and such that all $\gamma_i$s share the same (non-fixed) second marginal. Furthermore, we can assume without loss of generality that $(\gamma_1 \dots \gamma_N)$ is supported on $(\mathrm{Gr}(z_k), z_k)_k$, i.e.~a point $z_k$ in the $p$-Fr\'echet mean is necessary transported to its corresponding grouping $\mathrm{Gr}(z_k)$ by (optimal) $\gamma_1, \dots \gamma_N$ \cite[\S 2.3]{ot:carlier2015numerical}. For such a minimizer, the common second marginal is a $p$-Fr\'echet mean of $\tilde{\P}$. 

A potential minimizer of \eqref{eq:bary_reformule} is described by a vector $\gamma = (\gamma_{i,j,k})\in \R_+^{NmK}$ such that: 
\begin{equation}\label{eq:constraints}
\begin{cases}
\text{for}\ 1\leq i \leq N,\ 1\leq  j \leq m,\quad & \sum_{k=1}^K \gamma_{i,j,k} = 1 \text{ and}\\
\text{for}\ 2\leq i \leq N,\ 1\leq  k \leq K,\quad &\sum_{j=1}^m \gamma_{1,j,k} = \sum_{j=1}^m \gamma_{i,j,k}.
\end{cases}
\end{equation} 
Let $c \in \R^{NmK}$ be the vector defined by $c_{i,j,k} = \ones\{x_{i,j} \in \mathrm{Gr}(z_k)\} \lambda_i \rho(x_{i,j},z_k)^p $. Then, the problem \eqref{eq:bary_reformule}  is equivalent to 
\begin{equation}\label{eq:reformule_again}
		\minimize_{\gamma \in R_+^{NmK}} \gamma^T c \quad \text{under the constraints \eqref{eq:constraints}}.
		\end{equation}
The set of $p$-Fr\'echet means of $P$ are in bijection with the set of minimizers of this Linear Programming problem (see \cite[\S 5.15]{schrijver2003combinatorial}), which is given by a face of the polyhedron described by the equations \eqref{eq:constraints}. Hence, if we show that this polyhedron is integer (i.e.~its vertices have integer values), then it would imply that the extreme points of the set of $p$-Fr\'echet means of $P$ are point measures, concluding the proof.
 The constraints \eqref{eq:constraints} are described by a matrix $A$ of size $(Nm+(N-1)K) \times NmK$ and a vector $b= [\ones_{Nm},\mathbf{0}_{(N-1)K}]$, such that $\gamma \in \R^{NmK}$ satisfies \eqref{eq:constraints} if and only if $A\gamma=b$. A sufficient condition for the polyhedron $\{Ax\leq b\}$ to be integer is to satisfy the following property (see \cite[Section 5.17]{schrijver2003combinatorial}): for all $u \in \Z^{NmK}$, the dual problem
 \begin{equation} \label{eq:dual_polytope}
 \max\{y^T b,\ y\geq 0 \text{ and } y^TA=u\} 
 \end{equation}
has either no solution (i.e.~there is no $y \geq 0$ satisfying $y^T A = u$), or it has an integer optimal solution $y$. 

For $y$ satisfying $y^T A = u$, write $y=[y^0,y^1]$ with $y^0 \in \R^{Nm}$ and $y^1 \in \R^{(N-1)K}$, so that $y^0$ is indexed on $ 1\leq i \leq N,\ 1\leq  j \leq m$ and $y^1$ is indexed on $ 2\leq i \leq N,\ 1\leq  k \leq K$. One can check that, for $ 2\leq i \leq N,\ 1\leq  j \leq m,\ 1\leq  k \leq K$:
 \begin{equation}\label{eq:constraint_c}
  u_{1,j,k} = y^0_{1,j} + \sum_{i'=2}^N y^1_{i',k} \quad \text{ and } \quad u_{i,j,k} = y^0_{i,j} - y^1_{i,k},
\end{equation} 
so that, 
\begin{align*}
y^T b &= \sum_{i=1}^N \sum_{j=1}^m y^0_{i,j} = \sum_{j=1}^m y^0_{1,j} + \sum_{i=2}^N \sum_{j=1}^m y^0_{i,j} \\
&= \sum_{j=1}^m (u_{1,j,k} - \sum_{i=2}^N y^1_{i,k}) + \sum_{i=2}^N \sum_{j=1}^m (u_{i,j,k} + y^1_{i,k}) \\
&= \sum_{i=1}^N \sum_{j=1}^m u_{i,j,k}.
\end{align*}
Therefore, the function $y^Tb$ is constant on the set $P \defeq \{y\geq 0,\ y^T A=u\}$, and any point of the set is an argmax. We need to check that if the set $P$ is non-empty, then it contains a vector with integer coordinates: this would conclude the proof. A solution of the homogeneous equation $y^T A=0$ satisfies $y^0_{i,j}=y^1_{i,k} = \lambda_i$ for $i\geq 2$ and $y^0_{1,j} = -\sum_{i=2}^N y^1_{i,k} = -\sum_{i=2}^N \lambda_i$ and reciprocally, any choice of $\lambda_i \in \R$ gives rise to a solution of the homogeneous equation. For a given $u$, one can verify that the set of solutions of $y^TA=u$ is given, for $\lambda_i \in \R$, by
\[\begin{cases}
y^0_{1,j} = \sum_{i=1}^N u_{i,j,k} - \sum_{i=2}^N \lambda_i \\
y^0_{i,j} = \lambda_i \text{ for } i\geq 2,\\
y^1_{i,k} = -u_{i,j,k} + \lambda_i \text{ for } i\geq 2.
\end{cases}
\]
Such a solution exists if and only if for all $j$, $U_j \defeq \sum_{i=1}^N u_{i,j,k}$ does not depend on $k$ and for $i\geq 2$, $U_{i,k} \defeq u_{i,j,k}$ does not depend on $j$. For such a vector $u$, $P$ corresponds to the $\lambda_i \geq 0$ with $\lambda_i \geq \max_k  U_{i,k}$ and $U_j \geq \sum_{i=1}^N \lambda_i$. If this set is non empty, it contains as least the point corresponding to $\lambda_i = \max\{0, \max_k U_{i,k}\}$, which is an integer: this point is integer valued, concluding the proof. 
\end{proof}

\section{Technical details regarding Section \ref{subsec:stability_expectedPD}}
\label{sec:details_expectations}

Write $\MM_f$ for $\MM_f(\Omega)$ and define $\MM_\pm$ the space of finite signed measures on $\upperdiag$, i.e.\ a measure $\mu \in \MM_\pm$ is written $\mu_+ - \mu_-$ for two finite measures $\mu_+,\mu_- \in \MM_f$. The total variation distance $|\cdot |$ is a norm on $\MM_\pm$, and $(\MM_\pm,|\cdot |)$ is a Banach space. The Bochner integral \cite{bochner1933integration} is a generalization of the Lebesgue integral for functions taking their values in Banach space. We define the expected persistence measure of $\P \in \WW^p(\MM^p)$ as the Bochner integral of some pushforward  of $\P$. More precisely, recall the definition \eqref{eq:def_mu^p} of $\mu^{(p)}$ and define
\begin{align*}
F: (\MM^p,\Dp) & \to  (\MM_\pm,  |\cdot |) \\
 \mu & \mapsto \mu^{(p)}.
\end{align*}
Note that $F$ has an inverse $G$ on $\MM_f$, defined by $G(\nu)(A) \defeq \int_A d(x,\upperdiag)^{-p}\dd \nu(x)$ for $A\subset \upperdiag$ a Borel set. Theorem \ref{thm:conv_dp} implies that $G$ is a continuous function from $(\MM_f, | \cdot |)$ to $(\MM^p, \Dp)$. In particular, as $\MM_f$  and $\MM^p$ are Polish spaces and $G$ is injective, the map $F$ is measurable (see \cite[Theorem 15.1]{kechris1995classical}). For $\P \in \WW^p(\MM^p(\upperdiag))$, define for $\bm{\mu} \sim \P$, $\E[\bm{\mu}]$ the \emph{linear} expectation of $\P$ by
\begin{equation}\label{eq:epd}
\E[\bm{\mu}] \defeq G\left( \int \nu \dd(F_\#\P)(\nu) \right) \in \MM^p,
\end{equation}
where the integral is the Bochner integral on the Banach space $(\MM_\pm,|\cdot|)$ and $F_\#\P$ is the pushforward of $\P$ by $F$. It is straightforward to check that $\E[\bm{\mu}]$ defined in that way satisfies the relation
\[
\forall K \subset \upperdiag\ \text{compact},\ \E[\bm{\mu}](K) = \E[\bm{\mu}(K)].
\]

The proof of Proposition \ref{prop:Dp_jensen} consists in applying Jensen's inequality in an infinite-dimensional setting. We first show that the function $\Dp^p$ is convex.

\begin{lemma}\label{lem:convex} For $1\leq p < \infty$, the function $\Dp^p : \MM^p \times \MM^p \to \R$ is convex.
\end{lemma}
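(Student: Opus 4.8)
The natural approach is to exploit the linearity of the problem in Definition \ref{def:OT_for_Radon_measures} with respect to the measures, combined with the fact that $\Adm(\mu,\nu)$ is exactly the set of admissible plans and that the cost functional $\pi \mapsto C_p(\pi)$ is itself linear. Concretely, fix $(\mu_0,\nu_0), (\mu_1,\nu_1) \in \MM^p\times\MM^p$ and $t\in[0,1]$, and set $\mu_t = (1-t)\mu_0 + t\mu_1$, $\nu_t = (1-t)\nu_0 + t\nu_1$. The key observation is: if $\pi_0 \in \Adm(\mu_0,\nu_0)$ and $\pi_1 \in \Adm(\mu_1,\nu_1)$, then $\pi_t := (1-t)\pi_0 + t\pi_1$ belongs to $\Adm(\mu_t,\nu_t)$. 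This is immediate from checking the marginal conditions: for any Borel set $A\subset\upperdiag$, $\pi_t(A\times\groundspace) = (1-t)\pi_0(A\times\groundspace) + t\pi_1(A\times\groundspace) = (1-t)\mu_0(A) + t\mu_1(A) = \mu_t(A)$, and symmetrically for the second marginal; since $\pi_t$ is clearly a Radon measure on $\overline{\upperdiag}\times\overline{\upperdiag}$ (being a finite nonnegative combination of such), the claim holds.

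Then, since $C_p$ is linear, $C_p(\pi_t) = (1-t)C_p(\pi_0) + t C_p(\pi_1)$. Taking $\pi_0 \in \Opt_p(\mu_0,\nu_0)$ and $\pi_1 \in \Opt_p(\mu_1,\nu_1)$ (which exist by Proposition \ref{prop:basic_prop}), we get
\[
\Dp^p(\mu_t,\nu_t) \leq C_p(\pi_t) = (1-t)C_p(\pi_0) + t C_p(\pi_1) = (1-t)\Dp^p(\mu_0,\nu_0) + t\Dp^p(\mu_1,\nu_1),
\]
which is precisely the convexity inequality. (One must also record that $\mu_t,\nu_t \in \MM^p$, which follows from $\Pers_p(\mu_t) = (1-t)\Pers_p(\mu_0) + t\Pers_p(\mu_1) < \infty$ by linearity of $\Pers_p$, so that $\Dp^p(\mu_t,\nu_t)$ is well-defined and finite.)

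There is essentially no hard part here: the argument is a routine "linearity of the Kantorovich-type functional" computation, and the only point requiring a moment's care is verifying that the convex combination of admissible plans is admissible for the convex-combined marginals — which, unlike the classical Wasserstein setting, causes no mass-mismatch issue precisely because the marginal constraints in $\Adm$ are already inequalities-free equalities that behave linearly. If one wanted to be fully rigorous about the infimum being attained, one invokes Proposition \ref{prop:basic_prop} ($\Opt_p$ nonempty); alternatively one can avoid this by taking near-optimal plans $\pi_0,\pi_1$ with $C_p(\pi_i) \leq \Dp^p(\mu_i,\nu_i) + \epsilon$ and letting $\epsilon\to 0$, which sidesteps any appeal to existence of optimal plans.
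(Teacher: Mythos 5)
Your proposal is correct and is essentially the paper's own argument: take optimal plans for each pair, observe that their convex combination is admissible for the convex-combined marginals, and use linearity of $C_p$. The extra remarks (membership of $\mu_t,\nu_t$ in $\MM^p$, the $\epsilon$-optimal alternative) are fine but not needed beyond what the paper already records.
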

\begin{proof} Fix $\mu_1,\mu_2,\nu_1,\nu_2 \in \MM^p$ and $t\in [0,1]$. Our goal is to show that
\[ \Dp^p(t\mu_1+(1-t)\mu_2,t\nu_1 + (1-t)\nu_2) \leq t\Dp^p(\mu_1,\nu_1)+(1-t)\Dp^p(\mu_2,\nu_2).\]
Let $\pi_{11} \in \Opt_p(\mu_1,\nu_1)$ and $\pi_{22}\in \Opt_p(\mu_2,\nu_2)$. It is straightforward to check that $\pi \defeq t\pi_{11} +(1-t)\pi_{22}$ is an admissible plan between $t\mu_1+(1-t)\mu_2$ and $t\nu_1 + (1-t)\nu_2$. The cost of this admissible plan is $t\Dp^p(\mu_1,\nu_1) +(1-t)\Dp^p(\mu_2,\nu_2)$, which is therefore larger than $\Dp^p(t\mu_1+(1-t)\mu_2,t\nu_1 + (1-t)\nu_2)$. 
\end{proof}

We then use the following result, which is a particular case of \cite[Theorem 3.10]{perlman1974jensen}.
\begin{proposition}\label{prop:jensen}
Let $\XX$ be a Hausdorff locally convex topological vector space and $C\subset \XX$ a closed convex set. Let $Q$ be a probability measure on $\XX$ endowed with its borelian $\sigma$-algebra, which is supported on $C$. Assume that $\int \|x\| \dd Q(x) <\infty$. Let $f:C\to [0,\infty)$ be a continuous convex function with $\int f(x) \dd Q(x) < \infty$. Then
\[ f\left(\int x \dd Q(x) \right) \leq \int f(x) \dd Q(x).\]
\end{proposition}

Let $\XX=\MM_\pm\times \MM_\pm$ which is a Banach space (endowed with the product norm), and thus in particular a Hausdorff locally convex topological vector space. Let $C = \MM_f \times \MM_f$, which is convex and closed (closedness follows immediately from the definition of the total variation $|\cdot|$) and let $f=\Dp^p \circ (G,G) : \XX \to \R$. The continuity of $G$ implies that $f$ is continuous and Lemma \ref{lem:convex} implies the convexity of $f$. Let $P$, $P'$ be two probability measures in $\WW^p(\MM^p)$ and $\gamma$ be an optimal coupling between $P$ and $P'$. We let $Q$ be the image measure of $\gamma$ by $(F,F)$, so that
\begin{align*}
 \int_{x\in \XX} \|x\| \dd Q(x) &= \int_{\mu,\mu'\in \MM^p} \max(|\mu^{(p)}|,|(\mu')^{(p)}|)\dd \gamma(\mu,\mu') \\
 &\leq  \int_{\mu}\Pers_p(\mu)\dd P(\mu) + \int_{\mu'}\Pers_p(\mu')\dd P'(\mu')<\infty
\end{align*}
and that
\[ \int_{x\in \XX} f(x)\dd Q(x) = \int_{\mu,\mu'\in \MM^p} \Dp^p(\mu,\mu')\dd \gamma(\mu,\mu') =  W_{p,\Dp}^p(P,P') <\infty. \]
Also, we have 
\begin{align*}
\int x \dd Q(x)  &= \int_{\nu,\nu'\in \MM^p} (\nu,\nu')\dd (F,F)_{\#}\gamma(\nu,\nu')\\
&=  \left( \int_{\nu \in \MM^p} \nu\dd F_{\#} P(\nu), \int_{\nu'\in \MM^p} \nu'\dd F_{\#}P'(\nu')\right),
\end{align*}
so that by \eqref{eq:epd}, $f\left(\int x \dd Q(x) \right) = \Dp^p(\E[\bm{\mu}],\E[\bm{\mu'}])$, where $\bm\mu\sim P$ and $\bm \mu' \sim P'$. Proposition \ref{prop:Dp_jensen} yields the conclusion.

\end{document}